\title{Problems with Fixpoints of Polynomials of Polynomials}
\author{Cécilia Pradic}{Swansea University, Wales \and \url{https://cpradic.web.deuxfleurs.fr} }{cecilia.pradic@ens-lyon.org}{https://orcid.org/0000-0002-1600-8846}{}%
\author{Ian Price}{Swansea University, Wales \and \url{https://countingishard.org}}{2274761@swansea.ac.uk}{https://orcid.org/0009-0009-4112-3385}{}
\authorrunning{C. Pradic and I. Price} %
\keywords{Polynomial functors, containers, fixpoints, Weihrauch complexity}
\tikzset{
    vertical/.style={anchor=south, rotate=90, inner sep=.5mm},
    cartesian/.style={"\square" marking},
    shorten <>/.style={shorten >=#1,shorten <=#1},%
}
\newcommand\mysubparagraph[1]{\subparagraph*{#1}}
\newcommand\mymysubparagraph{\@startsection{subparagraph}{5}{\z@}%
                                       {3.25ex \@plus1ex \@minus .2ex}%
                                       {-1em}%
                                      {\sffamily\normalsize\bfseries}}
\newcommand{\bigslant}[2]{{\raisebox{.2em}{$#1\negthinspace$}\left/\raisebox{-.2em}{$#2$}\right.}}
\newcommand{\Sierp}{\mathbb{S}}
\newcommand{\Cantor}{{2^\mathbb{N}}}
\newcommand{\id}{\textrm{id}}
\newcommand{\dom}{\operatorname{dom}}
\newcommand{\Baire}{{\mathbb{N}^\mathbb{N}}}
\newcommand{\hide}[1]{}
\newcommand{\C}{\mathsf{C}}
\newcommand{\UC}{\mathsf{UC}}
\newcommand{\lpo}{\mathsf{LPO}}
\newcommand{\LPO}{\lpo}
\newcommand{\CChoice}[1]{\mathsf{C}_{#1}}
\newcommand{\CChoiceNat}{\CChoice{\mathbb{N}}}
\newcommand{\CChoiceBaire}{\CChoice{\Baire}}
\newcommand{\WKL}{\mathsf{WKL}}
\newcommand{\KL}{\mathsf{KL}}
\newcommand{\RT}{\mathsf{RT}}
\newcommand{\CA}[1]{#1\text{-}\mathsf{CA}_0}
\newcommand{\PiooCA}{\CA{\mathbf{\Pi}^1_1}}
\newcommand{\WS}[1]{#1\text{-}\mathsf{FindWS}}
\newcommand{\WF}{\mathsf{WF}}
\newcommand{\WSCantor}[1]{#1\text{-}\mathsf{FindWS}_2}
\newcommand{\sTCBaire}{\mathsf{sTC}_{\Baire}}
\newcommand{\TCBaire}{\mathsf{TC}_{\Baire}}
\newcommand{\Nat}{\mathbb{N}}
\newcommand{\partto}{\rightharpoonup}
\newcommand{\slice}[2]{\bigslant{#1}{#2}}
\newcommand{\bnfeq}{\mathrel{::=}}
\newcommand{\bnfalt}{\;\; | \;\;}
\newcommand\cL{\mathcal{L}}
\newcommand\cS{\mathcal{S}}
\newcommand\Container{\mathsf{Cont}}
\newcommand{\Mod}{\mathsf{Mod}}
\newcommand{\Set}{\mathsf{Set}}
\newcommand{\Fam}{\mathsf{Fam}}
\newcommand{\Cat}{\mathsf{Cat}}
\newcommand{\KVPCA}{\mathcal{K}_2}
\newcommand{\KVPCAR}{\mathcal{K}_2^{\mathrm{rec}}}
\newcommand{\shape}{\mathsf{shape}} %
\newcommand{\base}{\mathsf{base}}
\newcommand{\directions}[1]{#1_\mathrm{dir}}
\newcommand\tuple[1]{\left\langle #1 \right\rangle}
\newcommand\cotuple[1]{{\left[ #1 \right]}}
\newcommand\longto{\longrightarrow}
\newcommand\tensor\otimes
\newcommand\Wexp{\Rightarrow} %
\newcommand\exits{\mathsf{exits}}
\newcommand\Strat{\mathsf{Strat}}
\newcommand\incopr{\mathsf{in}}
\newcommand\coproj\incopr
\newcommand\cod{\mathrm{cod}}
\newcommand\op{\mathrm{op}}
\newcommand\monoto\rightarrowtail
\newcommand{\isoto}{\xrightarrow{\sim}}
\DeclareFontFamily{OMX}{MnSymbolE}{}
\DeclareSymbolFont{MnLargeSymbols}{OMX}{MnSymbolE}{m}{n}
\DeclareFontShape{OMX}{MnSymbolE}{m}{n}{
    <-6>  MnSymbolE5
   <6-7>  MnSymbolE6
   <7-8>  MnSymbolE7
   <8-9>  MnSymbolE8
   <9-10> MnSymbolE9
  <10-12> MnSymbolE10
  <12->   MnSymbolE12
}{}
\DeclareFontShape{OMX}{MnSymbolE}{b}{n}{
    <-6>  MnSymbolE-Bold5
   <6-7>  MnSymbolE-Bold6
   <7-8>  MnSymbolE-Bold7
   <8-9>  MnSymbolE-Bold8
   <9-10> MnSymbolE-Bold9
  <10-12> MnSymbolE-Bold10
  <12->   MnSymbolE-Bold12
}{}
\let\llangle\@undefined
\let\rrangle\@undefined
\DeclareMathDelimiter{\llangle}{\mathopen}%
                     {MnLargeSymbols}{'164}{MnLargeSymbols}{'164}
\DeclareMathDelimiter{\rrangle}{\mathclose}%
                     {MnLargeSymbols}{'171}{MnLargeSymbols}{'171}
\newcommand\powerset{\cP}
\newcommand{\mkAnswerable}{\mathsf{Ans}}
\newcommand{\starW}{\star_{\mathrm{W}}}
\newcommand\PiWM{\Pi\cW\cM}
\newcommand\muExpr{\mu\mathrm{Expr}}
\newcommand\zetaExpr{\zeta\mathrm{Expr}}
\newcommand\ffunctorify[1]{\left\llangle #1 \right\rrangle}
\newcommand\functorify[1]{\left\llbracket #1 \right\rrbracket}
\newcommand\containerify[1]{ \llparenthesis #1 \rrparenthesis }
\newcommand\Tree{\mathsf{Tree}}
\newcommand\bbN{\mathbb{N}}
\newcommand\eqdef{\mathrel{:=}}
\newcommand\inl{\coproj_1}
\newcommand\inr{\coproj_2}
\newcommand\ReprSp{\mathsf{ReprSp}}
\newcommand\rpReprSp{\mathsf{rpReprSp}}
\newcommand\ttIso[2]{#1 \mathrel{\underline{\cong}} #2}
\newcommand\Even{\mathsf{Even}}
\newcommand\Odd{\mathsf{Odd}}
\newcommand\bfG{\mathbf{\Gamma}}
\newcommand\dualptcl[1]{{#1}^c}
\newcommand\bfS{\mathbf{\Sigma}}
\newcommand\bfP{\mathbf{\Pi}}
\newcommand\bfD{\mathbf{\Delta}}
\newcommand\Succ{\mathsf{S}}
\newcommand{\fold}[1]{\mathrm{fold}\,#1}
\newcommand{\unfold}[1]{\mathrm{unfold}\,#1}
\newcommand{\inAlg}{\mathfrak{in}}
\newcommand{\outCoAlg}{\mathfrak{out}}
\newcommand{\Obj}[1]{\mathsf{Obj}\left(#1\right)}
\newcommand\cA{\mathcal{A}}
\newcommand\cB{\mathcal{B}}
\newcommand\cC{\mathcal{C}}
\newcommand\cD{\mathcal{D}}
\newcommand\cE{\mathcal{E}}
\newcommand\cF{\mathcal{F}}
\newcommand\cG{\mathcal{G}}
\newcommand\cH{\mathcal{H}}
\newcommand\cI{\mathcal{I}}
\newcommand\cJ{\mathcal{J}}
\newcommand\cK{\mathcal{K}}
\newcommand\cL{\mathcal{L}}
\newcommand\cM{\mathcal{M}}
\newcommand\cN{\mathcal{N}}
\newcommand\cO{\mathcal{O}}
\newcommand\cP{\mathcal{P}}
\newcommand\cQ{\mathcal{Q}}
\newcommand\cR{\mathcal{R}}
\newcommand\cS{\mathcal{S}}
\newcommand\cT{\mathcal{T}}
\newcommand\cU{\mathcal{U}}
\newcommand\cV{\mathcal{V}}
\newcommand\cW{\mathcal{W}}
\newcommand\cX{\mathcal{X}}
\newcommand\cY{\mathcal{Y}}
\newcommand\cZ{\mathcal{Z}}
\newcommand\bA{\mathbb{A}}
\newcommand\bB{\mathbb{B}}
\newcommand\bC{\mathbb{C}}
\newcommand\bD{\mathbb{D}}
\newcommand\bE{\mathbb{E}}
\newcommand\bF{\mathbb{F}}
\newcommand\bG{\mathbb{G}}
\newcommand\bH{\mathbb{H}}
\newcommand\bI{\mathbb{I}}
\newcommand\bJ{\mathbb{J}}
\newcommand\bK{\mathbb{K}}
\newcommand\bL{\mathbb{L}}
\newcommand\bM{\mathbb{M}}
\newcommand\bN{\mathbb{N}}
\newcommand\bO{\mathbb{O}}
\newcommand\bP{\mathbb{P}}
\newcommand\bQ{\mathbb{Q}}
\newcommand\bR{\mathbb{R}}
\newcommand\bS{\mathbb{S}}
\newcommand\bT{\mathbb{T}}
\newcommand\bU{\mathbb{U}}
\newcommand\bV{\mathbb{V}}
\newcommand\bW{\mathbb{W}}
\newcommand\bX{\mathbb{X}}
\newcommand\bY{\mathbb{Y}}
\newcommand\bZ{\mathbb{Z}}
\newcommand\unit{\mathbf{I}}
\newcommand\Ninfty{\mathbb{N}_\infty}
\definecolor{ibmLightOrange}{RGB}{255,176,0}
\definecolor{ibmDarkOrange}{RGB}{254,97,0}
\definecolor{ibmPurple}{RGB}{220,38,127}
\definecolor{ibmLightBlue}{RGB}{100,143,255}
\definecolor{ibmDarkBlue}{RGB}{120,94,240}
\crefname{question}{Question}{Questions}
\crefname{construction}{construction}{constructions}
\crefname{intuitions}{intuition}{intuitions}
\crefname{convention}{convention}{conventions}
\crefname{remark}{remark}{remarks}
\newtheorem{question}[theorem]{Question}
\newtheorem{convention}[theorem]{Convention}
\begin{document}

\maketitle

\begin{abstract}
Motivated by applications in computable analysis,
we study fixpoints of certain
endofunctors over categories of containers.
More specifically,
we focus on fibred endofunctors over the fibrewise opposite of the codomain
fibration that can be themselves be represented by families of polynomial
endofunctors.
In this setting, we show how to compute initial algebras, terminal
coalgebras and another kind of fixpoint $\zeta$.
We then explore a number of
examples of derived operators inspired by Weihrauch complexity and
the usual construction of the free polynomial monad.

We introduce $\zeta$-expressions as the syntax of $\mu$-bicomplete categories,
extended with $\zeta$-binders and parallel products, which thus have a natural denotation
in containers.
By interpreting certain $\zeta$-expressions in a category of type-2
computable maps, we are able to capture a number
of meaningful Weihrauch degrees,
ranging from closed choice on $\{0,1\}$ to determinacy
of infinite parity games,
via an ``answerable part'' operator.
\end{abstract}

\section{Introduction}

\subsection{Internal families, polynomial functors and containers}

The fundamental notion of a family of types $(A_i)_{i : I}$ can be interpreted
in any category $\cC$ with finite limits. Under this interpretation, morphisms $A \to I$
are regarded as $I$-indexed families and taking a pullback along $f : J \to I$
corresponds to computing a reindexing $(A_{f(j)})_{j : J}$.
The most natural category whose objects are families internal to $\cC$ 
is $\cC^\to$, which has commuting squares for morphisms.
\[
\begin{array}{c !\qquad  !\qquad c}
\begin{array}{ll}
\varphi    : & I \longto J \\
\psi : & \prod\limits_{i : I} \left( A_i \to B_{\varphi(i)}\right)
\end{array}
&
\begin{tikzcd}
	A & B \\
	I & J
	\arrow["\psi", from=1-1, to=1-2]
	\arrow["P"', from=1-1, to=2-1]
	\arrow["Q", from=1-2, to=2-2]
	\arrow["\varphi"', from=2-1, to=2-2]
\end{tikzcd}
\end{array}
\]
The codomain functor $\cod : \cC^\to \to \cC$
is then a Grothendieck fibration corresponding to the $\cC$-indexed category
$(\slice{\cC}{I})_{I : \mathsf{Obj}(\cC)}$. This is the starting point of the rich theory
of fibrations and semantics of type theory~\cite{jacobs01book}.
That being said, another important notion of morphisms between families
is the one induced by the fibrewise opposite of the codomain
fibration\footnote{See~\cite[\S 1 \& 5]{streicherfibrations} for a formal definition;
informally, this is one of the few constructions which is easier to define
from the point of view of indexed categories
$\cC \to \Cat$, where it simply corresponds to composing with
the endofunctor $\op : \Cat \to \Cat$ that reverses all arrows.}.
\[
\begin{array}{c !\qquad  !\qquad c}
\begin{array}{ll}
\varphi    : & I \longto J \\
\psi : & \prod\limits_{i : I} \left(B_{\varphi(i)} \to A_i\right)
\end{array}
&
\begin{tikzcd}
	A & \cdot & B \\
	I & I & J
	\arrow["P"', from=1-1, to=2-1]
	\arrow["\psi"', from=1-2, to=1-1]
	\arrow[from=1-2, to=1-3]
	\arrow[from=1-2, to=2-2]
	\arrow["\lrcorner"{anchor=center, pos=0.125}, draw=none, from=1-2, to=2-3]
	\arrow["Q", from=1-3, to=2-3]
	\arrow[shift left, no head, from=2-1, to=2-2]
	\arrow[no head, from=2-2, to=2-1]
	\arrow["\varphi"', from=2-2, to=2-3]
\end{tikzcd}
\end{array}
\]
Internal families and these morphisms make up a category $\Container(\cC)$,
whose objects are sometimes called polynomials or \emph{containers}.
The category $\Container(\cC)$ and its dependent versions have been studied from a number
of viewpoints in mathematics and computer science:

\mysubparagraph{As functors} Assuming $\cC$ has $\Pi$ types, the container $(A_i)_{i : I}$ induces
an endofunctor over $\cC$:
\[ X \qquad \longmapsto \qquad \sum_{i : I} X^{A_i}\]
Endofunctors of this shape are called \emph{polynomial endofunctors}, and we have
that strong natural transformations are in one-to-one correspondence with container
morphisms. Polynomial functors have excellent structural properties
and have thus proved useful in many areas of logic and computer science~\cite{GKpoly},
including semantics~\cite{girard1988normal, glehnmoss18}.
It is also often the case that polynomial
endofunctors admit initial algebras and terminal coalgebras~\cite{moerdijk2000wellfounded,van2007non}
(which are categorifications of the notions of least and greatest fixpoints).
Intuitively, the initial algebra of $(A_i)_{i : I}$ consists well-founded
$I$-labelled trees where $A_i$ indexes the children of $i$-labelled nodes.
Having initial algebras for polynomial functors allows one to interpret inductive types
(also called $\cW$-types) of Martin-L\"of type theory. Dually the terminal coalgebras
($\cM$-types) interpret coinductive types~\cite{abbott2005containers}.

\mysubparagraph{As polymorphic types} The terminology ``containers'' was introduced in the context of functional
  programming~\cite{containers03} to formalize a notion of polymorphic datatypes such as $\alpha \mapsto \mathsf{List}(\alpha)$.
The idea there is that a datatype polymorphic in some parameter $\alpha$ is
concretely determined by a type of \emph{shapes} $I$ and an $I$-indexed families
of holes in which to put labels
taken from $\alpha$. For instance, the container for $\mathsf{List}(\alpha)$ in $\Set$
is the family $(\{0, \ldots, n-1\})_{n : \bbN}$ which says that for
every $n : \bbN$, there is a ``generic list'' of length $n$ with $n$ holes.
Then container morphisms correspond to polymorphic functions that describe
an output shape solely in terms of the input shape and provenance
of the labels of each hole of the output, which must come from holes in the input.
At the level of semantics, instantiating such a polymorphic definition at $\alpha$ then corresponds
to applying the corresponding polynomial functor to $\alpha$.

\mysubparagraph{As problems} Another viewpoint on internal families $(A_i)_{i : I}$ is that they may
be viewed as a formalism for input-output specifications which we may call
\emph{problems}. The idea is that members $i : I$ are inputs
or \emph{questions} and $A_i$ consists of \emph{answers} to $i$. Then a
full solution to such a problem $P : A \to I$ is a section $s : I \to A$. Even
if $P$ is ``surjective'', such full solutions have no reason to exist in general;
in our examples of interest it will be the case that $\cC$ contains only continuous
maps and that $P$ typically has no continuous sections. It is then remarkable that
container morphisms $P \to Q$ can be regarded as reductions that solve $P$
using exactly one oracle call to $Q$, a restriction reminiscent of linear logic.
This notion is extensively
studied in type 2 computability\footnote{But not exclusively there; see~\cite{HirschThesis90}
for a similar set-up in algebraic complexity theory. This is also 
similar to intuitions in terms of gamaes given in~\cite{Hyv14}.}, where the objects of $\cC$ are subspaces of $\Cantor$,
morphisms are represented by Turing machines operating on infinite bit strings
and problems are often given by non-constructive $\forall \exists$ theorems of
second-order arithmetic. In this setting, container morphisms are called
\emph{Weihrauch reductions}~\cite{brattka2022hagen} and capture comparative hardness,
often in a more fine-grained way than entailment does in reverse mathematics~\cite{simpson}.
This viewpoint is also helpful when working with modalities induced by containers~\cite{RSSmod,AB26}
in the context of synthetic computability theory~\cite{KiharaLT24,Swan24,Yoshimura2}.

\subsection{Fixpoints over the category of containers}

Taking the last perspective on containers, their excellent structural properties
translate to a multitude of operators to combine problems that find many concrete
applications~\cite{paulybrattka4}. In particular, we have tensorial products $\tensor$
and $\star$ for the parallel and sequential composition of problems,
and the coproduct $P + Q$ allows one to ask a question to either $P$ or $Q$ and get
a relevant answer. To relax
the linear nature of container morphisms, it is natural to ask for fixpoints
of such operators. For instance, a reduction from $Q$ to $P^\diamond$,
the carrier of the initial algebra
of the endofunctor $X \mapsto \unit + X \star P$ over $\Container(\cC)$, captures the
idea of being able to compute solutions to $Q$ questions if we are allowed
to use $P$ as an oracle an arbitrary finite number of times.
It is not only least fixpoints that are of interest: $\omega$-parallelization
is a standard tool~\cite{survey-brattka-gherardi-pauly} and sequential iterations
of length $\omega$ have recently been introduced~\cite{brattka2025loops,BrattkaS25}.
However, in spite of their natural definitions, they are \emph{not} greatest fixpoints!
For instance, in the case of the terminal coalgebra of $X \mapsto X \tensor P$,
the shape of its carrier consists of streams of $P$-questions, but it defines
no possible answers whatsoever to such streams of questions; $\omega$-parallelization in
Weihrauch reducibility contrastingly allows for relevant streams of $P$-answers.

\begin{figure}
\begin{center}
\begin{tabular}{|c|c|c|}
\hline
Notation & Operation on problems & Fixpoint expression \\
\hline \hline
$P^*$ & finite parallelization & $\mu X. ~ \unit \; + \; P \tensor X$ \\
\hline
\rule{0pt}{1.2em} $\widehat{P}$ & infinite parallelization & $\zeta X. \; P \tensor X$ \\
\hline
$\widetilde{P}$ & \begin{tabular}{c}
                    ask $\omega$ questions\\
                    get one answer \\
                  \end{tabular} & $\nu X. \; P \times X$\\
\hline
$P^\diamond$ & \begin{tabular}{c}
  finite iterations\\
   (free monad over $P$)
\end{tabular}& $\mu X. \; \unit + X \star P$ \\
\hline
$P^\infty$ & $\omega$-loop & $\zeta X. \; X \star P$ \\
\hline
\end{tabular}
\end{center}
\caption{Endofunctors $\Container(\cC) \to \Container(\cC)$ defined via fixpoint equations.
Since the definition of $\star$ requires $\cC$ to be a lccc, the last two definitions
are not quite the same as the operators on Weihrauch degrees (see~\cite[\S 4.4]{PricePradic25}).}
\label{fig:operators}
\end{figure}

We are thus compelled to look for a more structural understanding of such
fixpoints. In this paper, we focus on fixpoints for endofunctors
$F : \Container(\cC) \to \Container(\cC)$ that happen to be components of
fibred endofunctors over $\cod^\op$; that is, those functors such that
there is a (uniquely determined) $F_0 : \cC \to \cC$ such that
$\cod^\op \circ F = F_0 \circ \cod^\op$. Intuitively it means that the question
space 
of $F(P)$ only depends on the question space of $P$ and that $F$ induces functors
$F_X : \left(\slice{\cC}{X}\right)^\op \to 
\left(\slice{\cC}{F_0(X)}\right)^\op$ at the level of answers.
This suggests the following generic recipe to compute a terminal coalgebra of $F$:
first compute a terminal $F_0$-coalgebra $(\nu F_0, c_{\nu F_0})$, and then,
compute a terminal coalgebra for
$(c_{\nu F_0}^*)^\op \circ F_{\nu F_0} : 
\left(\slice{\cC}{\nu F_0}\right)^\op \to
\left(\slice{\cC}{\nu F_0}\right)^\op$,
that is, an \emph{initial} algebra
for 
$c_{\nu F_0}^* \circ F_{\nu F_0}^\op : 
\slice{\cC}{\nu F_0} \to
\slice{\cC}{\nu F_0}$. If we thus assume that $\cC$ has indexed $\cW$-types
and $\cM$-types and that $F_0$, as well as every $F_X^\op$, are polynomials,
then this construction shows us that a terminal $F$-coalgebra exists.
We can also build initial $F$-algebras in a completely dual manner, by first
computing the initial $F_0$-algebra $(\mu F_0, a_{\mu F_0})$,
and then a terminal coalgebra for $(a^{-1}_{\mu F_0})^* \circ F_{\mu F_0}$.
More excitingly, we may also mix the two approaches by taking terminal coalgebras
for both $F_0$ and $c_{\nu F_0}^* \circ F_{\nu F_0}^\op$, which yields a container
$\zeta F$ that comes with an isomorphism $F(\zeta F) \cong \zeta F$ via Lambek's
lemma. This middling fixpoint construction is the one that enables us to capture
$\omega$-iterations and parallelizations without trivializing the space of
answers (see~\Cref{fig:operators}).

Asssuming that $\cC$ is lextensive, has indexed $\cW$-types
and $\cM$-types, a niceness notion for endofunctors $\Container(\cC) \to \Container(\cC)$
(i.e. functors we call \emph{fibred polynomial} in \Cref{def:fibPolFun}), an existence theorem for our fixpoints (\Cref{thm:fp-exist})
and the knowledge that product, coproduct and $\tensor$ are nice (\Cref{lem:stufffibred}),
we have a supply of containers by interpreting the following generalization of the syntax
of $\mu$-bicomplete categories~\cite{santocanale2002mu,Hyv25}, which we call \emph{$\zeta$-expressions}:
\[ E, E'  \enspace \bnfeq\enspace X \bnfalt \mu X. \; E \bnfalt \zeta X. \; E \bnfalt \nu X. \; E \bnfalt E \times E' \bnfalt E \tensor E' \bnfalt E + E'\]
Much like in case of the free $\mu$-bicomplete category, interpreting this syntax
does yield some non-trivial results that can be related to strategies in parity games.
More specifically, we shall see that such expressions $E$ can be algorithmically
turned into regular infinite tree languages $L_E$ describing parity games (as in~\cite{ArnoldNiwinski07}).
Then $E$ gets interpreted as the problem $\ffunctorify{E}$, where a question is a tree $t \in L_E$
which is answered by a winning strategy in the relevant game for the player
with, say, the even objective (\Cref{thm:zetaExprAut}).

\begin{figure}[hbt!]

\begin{tabular}{c !\qquad !\qquad c}
\raisebox{-.5\height}{\includegraphics{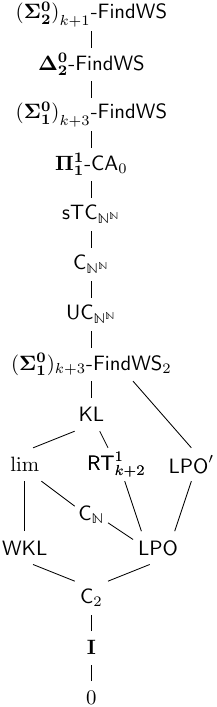}}
\hspace{-3em} &
\[
\begin{array}{c|c}
  \text{Weihrauch degree $D$} & \text{$\zeta$-expression $E$ with $\mkAnswerable\left(\ffunctorify{E}\right) \equiv D$} \\
\hline
\hline
\WS{(\mathbf{\Sigma^0_2)}_{2k}} & \zeta X_{2k}. \nu X_{2k - 1}. \ldots \zeta X_1. \nu X_0. \; \widehat{\widetilde{\sum\limits_{i = 0}^{2k} X_i}}
\\
\WS{\mathbf{\Delta^0_2}} & \mu Z. \; \zeta X. ~~ \widehat{\widetilde{X}} \; + \; \left(\nu Y.\; \widehat{\widetilde{Y}} + Z\right) \\
\sTCBaire & \left(\nu X. \widehat{X + \unit}\right) \times  \left(\zeta X. \widetilde{X + 1} \right)\\
\CChoiceBaire &  \zeta X. \; \widetilde{X + 1}\\
\WS{\mathbf{\Delta^0_1}} & \mu X. ~ \widehat{\widetilde{X}} + \unit + 1 \\
\KL & \zeta X. \; (\nu Y. \; X + Y) \times (\nu Y. \; X + Y) \\
\lim & \widehat{\left(\nu X. \; X + \unit\right) \times \left(\zeta X. \; X + 1\right)} \\
\WKL & \zeta X. \; 1 + X \times X \\
\RT^1_{k} &
\underbrace{\left(\zeta X. \nu Y. \; X + Y\right)
\times \ldots
}_{\text{$k$ times}}  \\
\LPO' &  \left(\nu X. \; \zeta Y. \; X + Y\right) \times \left(\zeta X. \; \nu Y. \; X + Y\right) \\
\CChoiceNat & \widetilde{\zeta X. \; X + 1}\\
\LPO & \left(\nu X. \; X + \unit\right) \times \left(\zeta X. \; X + 1\right) \\
\CChoice{k} & \underbrace{\left( \zeta X. \; X + 1 \right) \times \ldots
}_{\text{$k$ times}}\\
\unit & \zeta X. \; X \\
0 & \mu X. \; X\\
\end{array}\]
\end{tabular}

\caption{Problems from the Weihrauch lattice as the answerable part of
the interpretation of some $\zeta$-expressions in represented spaces.
On the left-hand side is the part of the Weihrauch lattice we capture,
where all inequalities are known to be strict below $\PiooCA$. $\WS{\bfG}$ is the
answerable part of the problem where a question is code for a $\bfG$-set $A$
which is answered by a winning strategy for the player with winning condition
$A$ in a Gale-Stewart game over $\Baire$.}
\label{fig:pblmsAsZetaMKA}
\end{figure}

In spite of the non-trivial combinatorics, the degree of the problem $\ffunctorify{E}$
is always trivial (\Cref{cor:zetaTrivial}). In many situations, we do obtain
problems with interesting question-answer pairs that also contain questions which
cannot be answered, which means that the problem typically has maximal degree (the
terminal object in $\Container(\cC)$ is the trivial family $(0)_{i : 1}$).
This means that we have some way to retrieve interesting information
if we have the ability to filter out these problematic unanswerable
questions. In the context of Weihrauch reducibility, we can do so via an operator
we call $\mkAnswerable$ (\Cref{def:answerablePart}).
Applying $\mkAnswerable$ to denotations of $\zeta$-expressions then allows us
to describe a healthy number of benchmark Weihrauch problems that include the computation of limits,
the infinite pigeonhole principle, closed choice principles on subspaces of $\Baire$
and determinacy for $\bfD^0_1$ and $\bfD^0_2$ objectives (see~\Cref{fig:pblmsAsZetaMKA}).
There are, of course, limits to how many degrees can be expressed this way.
The automata-theoretic characterization of $\ffunctorify{E}$ mean we can only
get $0$ or pointed degrees that are below determinacy for parity games (\Cref{prop:trichotomy});
this however does not rule out most named degrees studied in Weihrauch complexity so far.
We conjecture that some natural degrees, such as $\RT^2_2$ (Ramsey's theorem for pairs),
are not equivalent to any $\mkAnswerable(\ffunctorify{E})$.

\subsection{Related and further work}

\mysubparagraph{Structural aspects of Weihrauch reducibility}
The present paper obviously draws on
efforts to generalize and ``categorify'' Weihrauch reducibility and some of its variants~\cite{Bauer22,TVdP22,maschio2025},
which have only recently identified containers as a possible unifying framework~\cite{AhmanBauer24,PricePradic25}.
As a result, the algebraic theory of operators on Weihrauch degrees developed separately
from the literature on polynomial functors, which led to an apparent
duplication of results pertaining to basic properties of sequential composition for instance~\cite{paulybrattka4,westrick2020}.
Although some care has to be taken as there is a mismatch between
the usual sequential composition of polynomial functors and the composition of Weihrauch
problems. This is due to the fact that computability theorists always work \emph{intensionally}
with Weihrauch problems; they consider containers over a
category $\rpReprSp$ which is only weakly (locally) cartesian closed.
A consequence of this is that the sequential composition in $\Container(\rpReprSp)$ is only a quasifunctor.
While it ends up being
closely related to the classical composition of polynomial endofunctors over a supercategory
$\ReprSp$, more work is needed to definitely capture what is going on in an abstract setting; see~\cite[\S 4]{PricePradic25}
for more details.

\mysubparagraph{Fixpoints of endofunctors over containers} %
Fixpoints and containers enjoy a strong connection, as polynomial functors
do often admit initial algebras and terminal coalgebras~\cite{moerdijk2000wellfounded,van2007non,containers03,abbott2005containers}.
But we are not aware of any systematic study of fixpoints of endofunctors over $\Container(\cC)$
that focus on fibred functors comprised of polynomials like we consider here (rather than
fixpoints of polynomial enfofunctors over $\cC$).
That is not to say that such fixpoints have never been considered in the literature
on polynomial functors and containers: for instance, the free monad over
a polynomial $P$ is computed the initial algebra for $X \mapsto \unit + X \star P$
(and serependipitously coincide with the idea of finite iterations of a problem).

While we do focus on those fibred polynomial endofunctors, we can note that
we only use the ``polynomial'' aspect to be able to use the assumption that
$\cC$ has fixpoints of polynomial functors to begin with. \Cref{thm:fp-exist}
otherwise only requires the fibredness of the endofunctor $F : \Container(\cC) \to \Container(\cC)$
and could be adapted to settings where fixpoints can be obtained by other means
(e.g. via more powerful fixpoint theorems in categories of domains or in settings
with impredicative quantifications). This could plausibly lead to the study
of more interesting fixpoints, such as countable ordinal iterations of a
problem discussed in~\cite{paulycountableordinals}.
Lifting the restriction that $F$ be fibred does break~\Cref{thm:fp-exist}. We
only provide a single example of such a functor which does have a sensible
initial algebra (\Cref{prop:boundedDiamond}).

The definition of fibred polynomial functor that we adopt is useful
insofar as it captures examples relevant to us and makes~\Cref{thm:fp-exist} go through,
but it is unclear to us whether it is a rather ad-hoc notion or if there is
a more conceptual grounding to this notion, possibly related to generalizing
the pseudo-monad $\Container$ to $2$-categories beyond $\Cat$.

Finally, the idea of computing (co)algebras over total categories of fibrations,
by considering the base and then the fibers, has previously appeared in the literature
for posetal fibrations~\cite{BPPR14,HKC18,FGJ23}. Here we consider a specific
class of non-posetal fibrations, and expect that our straightforward approach
generalizes further to other classes of fibrations. However, we are currently not aware
of a useful general theorem along those lines.

\mysubparagraph{Weihrauch reductions and $\zeta$-expressions}
Aside from the problems $\WS{\bfG}$ (which essentially correspond to theorems
stating that infinite games over $\Baire$ with winning conditions in $\bfG$ are determined) with $\bfG$
above $(\bfS^0_1)_2$, all reductions and non-reductions between problems featured in~\Cref{fig:pblmsAsZetaMKA}
appear in the literature~\cite{survey-brattka-gherardi-pauly,cantorDetBCO,MVRamsey}.
To the best of our knowledge, problems around $\PiooCA$ or above have rarely
been investigated in the Weihrauch lattice thus far (exceptions include~\cite{d2021comparison,CiprianiMV25}).
In contrast, determinacy statements have been extensively studied
in reverse mathematics since Friedman's celebrated result that Borel determinacy
is unprovable without the powerset axiom~\cite{friedman1981necessary}; references
most relevant to the subclasses of parity games appearing in~\Cref{fig:pblmsAsZetaMKA}
include~\cite{tanaka90,Tanaka91,mollerfeld02phd,km:2016,PY22}. Work on the topological
complexity of game quantifiers also provides some insights on the logical complexity
of determinacy statements~\cite{burgess1983classical2,ArnoldNiwinski07}.
That being said, those results do not straightforwardly translate to results on
the Weihrauch degrees $\WS{\bfG}$; we therefore limit ourselves to observing
strictness of reductions for the difference hierarchy above $\bfS^0_1$ using
links to finite iterations of $\PiooCA$ already present in~\cite{tanaka90} (\Cref{cor:WSdiffStrict}),
and leave further studies of $\WS{\bfG}$ for higher $\bfG$s to the future.

While $\zeta$-expressions augmented with $\mkAnswerable$ are remarkably expressive,
the non-functoriality of $\mkAnswerable$ makes this formalism difficult
to integrate in a fixpoint logic to compositionally reason about Weihrauch problems (unlike the
equational axiomatizations proposed in~\cite{theoryWeiTimes,pradic25}).
We end by asking whether equivalence of the answerable parts of
two $\zeta$-expressions is decidable (\Cref{q:existsReduction}).

\subsection{Plan of the paper}
\Cref{sec:background} is dedicated to introducing basic notions and notations
pertaining to category theory, containers, type 2 computability, and (co)inductive
types (including a sketch of the syntax of $\mu$-bicomplete categories and its automata-theoretic interpretation).
\Cref{sec:ex} is chiefly concerned with the existence theorem for the three
aforementioned fixpoints for fibred polynomial endofunctors~\Cref{thm:fp-exist}
and its parametric version~\Cref{thm:fp-exist-multi}; it can be read independently
of any material on computability. \Cref{sec:operators} then discusses in more
detail the operators described in~\Cref{fig:operators} as well as all variations
obtained by taking other types of fixpoints, which we hope give a number of
concrete examples to see~\Cref{thm:fp-exist} in action.
Finally, \Cref{sec:ground} focuses more specifically on concrete containers definable
in the context of Weihrauch reducibility via $\mkAnswerable$ and $\zeta$-expressions
described in~\Cref{fig:pblmsAsZetaMKA}.
Due to the number of examples, it relies more heavily on the material on type 2
computability as well as intuitions built by way of previous examples and the proof
of~\Cref{thm:zetaExprAut} explaining how to concretely compute the denotation
of closed $\zeta$-expressions.

\section{Background}
\label{sec:background}

\subsection{Category theory}

We assume familiarity with the basics of category theory~\cite{mac2013categories}.
We write $\tuple{f_1, f_2} : Z \to A_1 \times A_2$ for the pairing of $f_i : Z \to A_i$
($i \in \{1,2\}$), and $\pi_i : A_1 \times A_2 \to A_i$ for projections ($i \in \{1,2\}$).
Dually we write $\cotuple{f_1, f_2} : A_1 + A_2 \to Z$
and $\incopr_i : A_i \to A_1 + A_2$ for the copairing and coprojections.
If $k$ is a natural number, we may regard it as the $k$-fold coproduct of $1$
in any category. 

\begin{definition}[\cite{carboni93extensive}]
A category with finite sums and products is called \emph{extensive}
when the canonical functor $+ : \slice{\cC}{A} \times \slice{\cC}{B} \to
\slice{\cC}{(A + B)}$ is an isomorphism.
We say a category is \emph{lextensive} if it has
all finite limits, all finite coproducts and is extensive.
\end{definition}

Henceforth, all categories in sight shall be lextensive. It means in particular
that they will be distributive and that $\slice{\cC}{k} \cong \cC^k$.

We also assume that the reader is familiar with the notion that morphisms $f : A \to I$
in a category $\cC$ can be regarded as \emph{$I$-indexed families
internal to $\cC$}, the idea being that for $\cC = \Set$, such an $f$ represents
the family $(f^{-1}(i))_{i \in I}$. All lextensive categories support reindexing
of such families through pullback functors $f^* : \slice{\cC}{J} \to \slice{\cC}{I}$, and
have internal sums via its left adjoint $- \circ f = \Sigma_f : \slice{\cC}{I} \to \slice{\cC}{J}$.
Lextensivity further ensures that external finite sums and internal sums socialize harmoniously.
We say that a category is \emph{locally cartesian closed} when it has all finite
limits and the pullback functors admit a right adjoint $\Pi_f$ that interprets 
internal products.
We will often refer to extensive locally cartesian 
closed categories as \emph{elcccs} in the sequel.

\begin{remark}
Extensional Martin-L\"of type theory (EMLTT) with $\prod, \sum, 0, 1, +$ can be
interpreted in (any splitting of) the codomain fibration over an elcccs $\cC$~\cite{CD14}.
For some proofs, we will thus freely use EMLTT as an internal language for such
$\cC$.
\end{remark}

\begin{figure*}[ht]
\begin{center}
\includegraphics[scale=0.65]{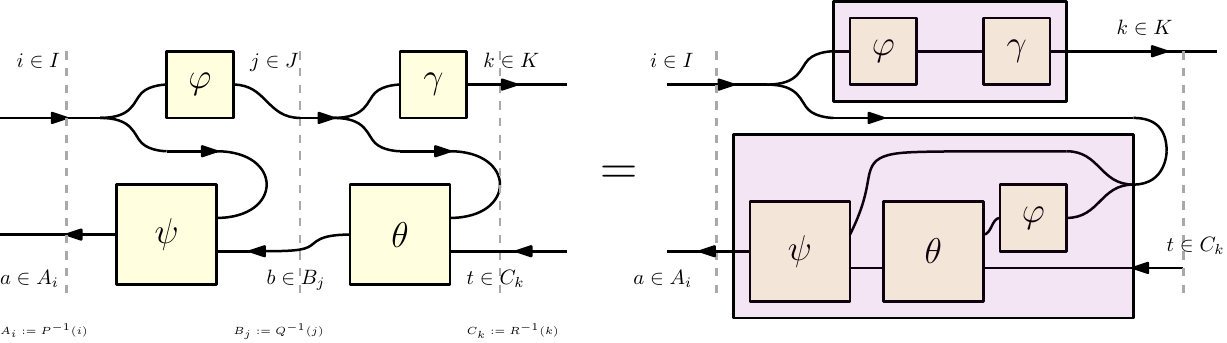}
\end{center}
\caption{Informal diagram representing
the composition of (unary 1-dimensional) container morphisms
$(\gamma, \varphi) \circ (\psi, \theta)$.
Note that this would be a sound string diagram in a category with cartesian
products if we
were working with non-dependent families (seen as a pair of objects); it would
then be formally
interpreted by $(\gamma \circ \psi, \psi \circ \langle\pi_1, \theta \circ (\varphi \times \id) \rangle)$.
}
\label{fig:redcompo}
\end{figure*}

We now discuss inductive and coinductive types in categories,
which are modelled by (co)inductive (co)algebras.
Given a functor $F : \cC \to \cC$, an $F$-algebra is a pair $(A, a)$ where
$A$ is an object of $\cC$, its \emph{carrier}, and $a$ is a morphism $F(A) \to A$. $F$-algebra
morphisms $(A, a) \to (A',a')$ are morphisms $f : A \to A'$ such that
$f \circ a = a' \circ F(f)$.
Initial $F$-algebras $(A, a)$ (and, similarly, terminal $F$-coalgebras) are fixpoints of $F$ in the
sense that $a$ is always an isomorphism; this is called \emph{Lambek's lemma}.
We will sometimes write $(\mu F, \inAlg)$
for a fixed initial $F$-algebra.
Carriers of initial algebras $(\bbN, [0, \Succ] : 1 + \bbN \to \bbN)$ for the functor $X \mapsto 1 + X$
are called \emph{natural number objects}. Dually,
carriers of terminal coalgebras $(\Ninfty, \mathsf{pred} : \Ninfty \to 1 + \Ninfty)$
are called \emph{conatural
number objects}.
The coalgebra structure map $\Ninfty \to 1 + \Ninfty$ essentially tells us whether
its input is zero, and if it is not, returns a predecessor.
We may define a ``point at infinity'' $\infty : 1 \to \Ninfty$ as
the coalgebra morphism $(1, \incopr_2 : 1 + 1)$ to the conatural number object.
In $\Set$, it can be shown that $\Ninfty$ consists of a copy of the natural number
plus this point at infinity; in topological spaces, $\Ninfty$ can be regarded
as the subspace of $\Cantor$ consisting of monotone maps.

\subsection{Polynomial functors and containers}

Let $\cC$ be a category with chosen pullbacks and $T$ and $U$ two objects of $\cC$.
A container from $T$ to $U$ is a triple of morphisms $(P, t, u)$ fitting in
a diagram of the following shape.
\[\begin{tikzcd}
	T & A && I & U
	\arrow["t"', from=1-2, to=1-1]
	\arrow["P", from=1-2, to=1-4]
	\arrow["u", from=1-4, to=1-5]
\end{tikzcd}\]
We call $I$ the \emph{shape} of the container, $A$ its \emph{directions},
$T$ is its \emph{arity} and $U$ its \emph{dimension}.
Containers between $T$ and $U$ are objects of a category $\Container(\cC)(T,U)$
such that a morphism $(P, t, u) \to (Q, t', u')$ is determined by a pair $(\varphi, \psi)$
making the following commute:
\[\begin{tikzcd}
	& A && I \\
	T & \cdot && I & U \\
	& B && J
	\arrow["P", from=1-2, to=1-4]
	\arrow["t"', curve={height=6pt}, from=1-2, to=2-1]
	\arrow[shift right, no head, from=1-4, to=2-4]
	\arrow[shift left, no head, from=1-4, to=2-4]
	\arrow["u", curve={height=-6pt}, from=1-4, to=2-5]
	\arrow["\psi"', from=2-2, to=1-2]
	\arrow[from=2-2, to=2-4]
	\arrow[from=2-2, to=3-2]
	\arrow["\lrcorner"{anchor=center, pos=0.125}, draw=none, from=2-2, to=3-4]
	\arrow["\varphi", from=2-4, to=3-4]
	\arrow["t'", curve={height=-6pt}, from=3-2, to=2-1]
	\arrow["Q"', from=3-2, to=3-4]
	\arrow["u'"', curve={height=6pt}, from=3-4, to=2-5]
\end{tikzcd}\]
We call $\varphi$ the \emph{forward} part of the morphism and $\psi$ the \emph{backward} part. The identity map $(P, t, u) \to (P, t, u)$ is given $(\id_I, \id_A)$
and the composition is definable using pullbacks in $\cC$; rather than spelling
out the details, we refer the reader to~\Cref{fig:redcompo} for an intuition of
how composition works and to~\cite{weber2015polynomials} for formal details.

We will often focus our discussions on containers of dimension $1$ or
containers which have both dimension and arity $1$. To make notations lighter,
we will omit the third component of containers of dimension $1$ (e.g.
$(P, t)$ should be regarded as a container of dimension $1$) and simply
identify containers of dimension and arity $1$ as morphisms (e.g. $P$ is a
container). We set
$\Container(\cC) \eqdef \Container(\cC)(1,1)$. In the sequel, containers are considered to be of
dimension and arity $1$ unless otherwise indicated. For such containers $P : A \to I$,
we will also sometimes write $\shape(P)$ for its shape ($I$ in this example) and $P_{\mathrm{dir}}$ for its
directions ($A$ here). $\shape$ extends to a functor $\Container(\cC) \to \cC$
by considering the map $(\varphi, \psi) \mapsto \varphi$ on morphisms. In fact,
$\shape$ is a Grothendieck fibration, and it is straightforward to check that
it is exactly the fibrewise opposite of the codomain fibration $\cod : \cC^\to \to \cC$
(see~\cite[\S 1 \& \S 5]{streicherfibrations} for details). So given a morphism
$P \to Q$ in $\Container(\cC)$ given by $(\varphi, \psi)$, we will call it
\emph{cartesian} or \emph{horizontal} if $\psi$ is an isomorphism and \emph{vertical}
if $\varphi = \id$.

If $\cC$ is locally cartesian closed, containers $(P, t, u)$ with arity
$T$ and dimension $U$ induce functors
$\functorify{P,t,u} \eqdef \Sigma_u \circ \Pi_P \circ t^*: \slice{\cC}{T} \to \slice{\cC}{U}$. 
In $\Set$, up to the usual equivalences $\slice{\Set}{T} \cong \Fam(T)$ and
$\slice{\Set}{U} \cong \Fam(U)$ between slice categories
and indexed families and regarding $(P,t,u)$ as the triply-indexed family
$(A_{i, \alpha, \beta})_{\alpha \in U, i \in I_\alpha, \beta \in T}$
(by taking $I_\alpha = u^{-1}(\alpha)$ and
$A_{i, \alpha, \beta} = P^{-1}(i) \cap t^{-1}(\beta)$), the container
$(P, t, u)$ induces the functor
\[
\begin{array}{lcl}
\Fam(T) &\longto& \Fam(U) \\
(X_\beta)_{\beta \in T} &\longmapsto& \left(\sum\limits_{i \in I_\alpha} \prod\limits_{\beta \in T}
\left(A_{i, \alpha,\beta} \to X_\beta\right)\right)_{\alpha \in U}
\end{array}
\]
In the one-dimensional case, $(P, t)$ induces a functor $\functorify{P,t} :
\slice{\cC}{T} \to \cC$, and in the one-dimensional unary case, $P$ induces
an endofunctor $\functorify{P} : \cC \to \cC$. The $\cC = \Set$ case then
simplifies to the following expression reminiscent of power series:
\[
\begin{array}{lcl}
\Set &\longto& \Set \\
X &\longmapsto& \sum\limits_{i \in I}
X^{A_{i}}
\end{array}
\]

\begin{proposition}[{Corollary of~\cite[Theorem 2.12]{GKpoly}}]
\label{prop:contPolyEquiv}
$\functorify{-}$ extends to a functor from $\Container(\cC)(I,J)$ to the
the category of strong functors $\slice{\cC}{I} \to \slice{\cC}{J}$ and
strong natural transformations. Furthermore, $\functorify{-}$ is full and
faithful.
\end{proposition}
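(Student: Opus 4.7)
The plan is to mostly reduce this to the classical one-dimensional unary case~\cite[Theorem 2.12]{GKpoly} that is invoked in the statement, and handle the two-sided setting by bookkeeping. I first need to define $\functorify{-}$ on morphisms. Given a container morphism $(\varphi, \psi) : (P, t, u) \to (Q, t', u')$, the forward map $\varphi$ between shapes and the backward map $\psi$ assemble into a natural transformation $\Sigma_u \Pi_P t^* \Rightarrow \Sigma_{u'} \Pi_Q (t')^*$ using the units and counits of the composite adjunctions $\Sigma \dashv (-)^* \dashv \Pi$, together with the Beck--Chevalley isomorphism associated with the pullback square defining $\psi$. A routine diagram chase confirms that the resulting transformation is strong and that this assignment respects identities and composition.

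For fullness and faithfulness, I would first reduce to the case $T = U = 1$. The trick is that $t^*$ and $\Sigma_u$ are themselves base changes, so a container with arity $T$ and dimension $U$ factors conceptually into the external part (given by $t$ and $u$) and the ``unary core'' $\Pi_P$ living above the pullback $t^* \shape(P)$. Since $\slice{\cC}{T}$ and $\slice{\cC}{U}$ remain elcccs, this reduction is essentially formal. In the unary one-dimensional setting, the problem becomes: show that strong natural transformations $\functorify{P} \Rightarrow \functorify{Q}$ between polynomial endofunctors $\cC \to \cC$ are in bijection with container morphisms $P \to Q$.

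The key Yoneda-style extraction goes as follows. Given $\eta : \functorify{P} \Rightarrow \functorify{Q}$ with $P : A \to I$ and $Q : B \to J$, evaluating at the terminal object yields $\eta_1 : I \cong \functorify{P}(1) \to \functorify{Q}(1) \cong J$, which is the candidate forward part $\varphi$. Evaluating $\eta_A$ on the generic point $(i, \id_{A_i}) \in \functorify{P}(A)$ (over each fiber $i : I$) produces some $(\varphi'(i), g_i) \in \functorify{Q}(A)$; naturality forces $\varphi' = \varphi$, and strength forces $g_i$ to factor as $B_{\varphi(i)} \xrightarrow{\psi_i} A_i \hookrightarrow A$, giving the backward component $\psi$. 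Faithfulness is immediate since $(\varphi, \psi)$ is fully determined by $\eta$ at these specific inputs.

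The main obstacle is \emph{fullness}: I must check that the extracted $(\varphi, \psi)$ really induces $\eta$ back, and not some unrelated natural transformation. The standard move is to factor any $(i, f) \in \functorify{P}(X)$ as $\functorify{P}(f)(i, \id_{A_i})$ and chase through the commuting diagram using naturality of $\eta$ at $f : A_i \to X$ together with its strength. This is precisely where the strength hypothesis earns its keep: it rules out ``exotic'' natural transformations that would rearrange the fibers in ways incompatible with the internal parameterization and ensures that the local data $(i, \id_{A_i}) \mapsto (\varphi(i), \psi_i)$ determines $\eta$ uniformly in $X$.
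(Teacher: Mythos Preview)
The paper does not prove this proposition at all: it is stated as a corollary of~\cite[Theorem 2.12]{GKpoly} and left without proof. Your sketch is essentially the standard argument one finds in that reference (and in Abbott--Altenkirch--Ghani for $\Set$), so it is correct and in fact supplies more detail than the paper does. One minor remark: the cited theorem in Gambino--Kock is already formulated for arbitrary $I$ and $J$, so your reduction to the case $T = U = 1$ is not strictly needed, though it is a perfectly valid way to organise the argument.
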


\begin{figure}
\[
\begin{array}{|c|c|c|c|}
\hline
\text{Cartesian} & \text{Coproduct} & \text {Tensor} & \text {Sequential} \\
\text{product} &  & \text {product} & \text {product} \\
P \times Q & P + Q & P \tensor Q & Q \star P \\
\hline
\begin{tikzcd}
	{A \times J + I \times B} \\
	{I \times J}
	\arrow["{\left[P \times \id_J,\id_I\times Q\right]}", from=1-1, to=2-1]
\end{tikzcd}
&
\begin{tikzcd}
	{A + B} \\
	{I + J}
	\arrow["{P + Q}", from=1-1, to=2-1]
\end{tikzcd}
&
\begin{tikzcd}
	{A \times B} \\
	{I \times J}
	\arrow["{P \times Q}", from=1-1, to=2-1]
\end{tikzcd}
&
\begin{tikzcd}
	{\sum\limits_{a : A} B^{A_{P(a)}}} \\
    {\sum\limits_{i : I} J^{P^{-1}(i)}}
	\arrow[from=1-1, to=2-1]
\end{tikzcd}\\
\hline
\end{array}
\]
\caption{Definitions of the tensorial products on $\Container(\cC)$; we allow
ourself to use the internal language of $\cC$ as a locally cartesian closed
category for $\star$ and let the reader guess what is the unique possibility
for the morphism.}
\label{fig:monoidalOperators}
\end{figure}

When $\cC$ is lextensive, $\Container(\cC)$ is also lextensive and features
two additional monoidal products $\tensor$ and $\star$
outlined in~\Cref{fig:monoidalOperators}. The parallel product is essentially
the componentwise product of families, while the sequential product is more involved:
\begin{align*}
& (A_i)_{i : I}, (B_j)_{j : J} & \longmapsto &&& (A_i \times B_j)_{(i, j) : I \times J} \\
& (A_i)_{i : I}, (B_j)_{j : J} & \longmapsto &&& \left(\sum_{b : B_j} A_{f(b)}\right)_{(j, f) : \sum\limits_{j : J} I^{B_j}}
\end{align*}
The sequential product we will also sometimes call ``composition product''
since we have $\functorify{P \star Q} = \functorify{Q} \circ \functorify{P}$.
It is the only product in~\Cref{fig:monoidalOperators} which does not come
with a symmetry. All monoidal products come with a unit, with $\star$
and $\tensor$ having the same unit $\unit$ with $\shape(\unit) = \unit_{\mathrm{dir}} = 1$.
When $\cC$ is additionally locally cartesian closed, it is symmetric monoidal
closed for both $\times$ and $\tensor$.

\subsection{Weihrauch reducibility and containers}

One perspective on (unary and $1$-dimensional) containers is that they can
be viewed as \emph{problems} in the following way: the shape $I$ of a container
$P : A \to I$ should be regarded as a space of \emph{questions}, the directions
$A$ should be regarded as \emph{answers} and $P$ specifies what are the answers
$P^{-1}(i)$ to some question $i : I$.
Computability tells us that some problems cannot be solved computably nor
continuously, in which case we are interested in quantifying how large the
obstruction is via notions of \emph{reductions}. Container morphisms constitute
one such notion of reduction: a morphism $P \to Q$ gives
us a forward map $\varphi$ turning a $P$-question into a $Q$-question and the
backward map $\psi$ turns a $P$-question $i$ and a $Q$-answer to $\varphi(i)$ into
a $P$-answer to $i$.
Intuitively, this means solving $P$ with one, no more, no less, oracle call
to $Q$. In light of this, the monoidal operators from~\Cref{fig:monoidalOperators},
and their units, admit natural interpretations as problem transformers.

If we are strictly interested in the power of problems, we may want to talk
about the reduction preorder giving $P \le Q$ if and only if there exists
a morphism $P \to Q$ in $\Container(\cC)$. We will freely write $\le$ for this
preorder in the sequel, $\equiv$ for the associated equivalence relation over
containers, and call its equivalence classes \emph{degrees}.

To develop examples for this paper, we will be looking at the particular instantiation of this
framework in the context of Weihrauch reducibility, previously discussed in~\cite{PricePradic25}.
Up to some caveats, this amounts to picking a category for type 2 computability that matches
the practice in computable analysis; this will be the category of
represented spaces $\ReprSp$.

A represented space $(X, \delta_X)$ is a set together with a partial surjection
$\delta_X : \Cantor \partto X$ called its \emph{representation}.
A map $f : X \to Y$ between two represented spaces $(X, \delta_X)$ and $(Y, \delta_Y)$
is called \emph{computable} if it has a (type 2) computable realizer, that is,
if there is a multitape Turing $M$ such that, if given
$p \in \dom(\delta_X)$ as an oracle, it will write all prefixes of an element of
$\delta_Y^{-1}(f(\delta_X(p)))$ on a write-only output tape.

Represented spaces and computable maps
form an elccc,
which readers familiar with realizability will see to be equivalent the
full subcategory $\Mod(\KVPCAR, \KVPCA)$ of modest sets of the Kleene-Vesley topos
studied extensively in~\cite{bauerPhD}.

Every Weihrauch problem $P$ can be regarded as an object of $\Container(\ReprSp)$,
and every Weihrauch reduction from $P$ to $Q$ correspond to a morphism of
$\Container(\ReprSp)$. Let us give an example.

\begin{definition}
  \label{def:lpo}
Informally $\LPO$ is the problem ``given an infinite binary sequence $p \in \Cantor$,
return a boolean saying whether $p = 0^\omega$ or not''.
Spelling out the formal details, $\LPO$ is the following container:
\begin{itemize}
\item $\shape(\LPO)$ is Cantor space $\Cantor$
\item $\LPO_{\mathrm{dir}}$ is the coproduct\footnote{It is \emph{not}
isomorphic to $\Cantor$, as the first component is topologically isolated.}
$\{0^\omega\} ~~+~~ \Cantor \setminus \{0^\omega\}$.
\item the container is the copairing of the obvious inclusions.
\end{itemize}
\end{definition}

In the sequel, we will often simply state what problems are, and leave to the
reader figuring out the details of the relevant container.

For any set $A$, we write $A^*$ for the set of finite words over $A$.
Let us write $\sqsubseteq$ for the prefix relation
over finite and infinite words and, given any word $w$ of length $\ge n$, write
$w[n]$ for its restriction to its first $n$ letters.
Recall that (possibly countably-branching infinite) \emph{trees} $t$
can be regarded as maps $t : \bbN^* \to 2$ (which we sometimes confuse with subsets of $\bbN^*$)
such that $t(\varepsilon) = 1$, which are prefix-closed ($t(un) = 1 \Rightarrow t(u) = 1)$.

\begin{definition}
  \label{def:kl}
$\KL$ is the problem ``given a finitely branching tree $t : \bbN^* \to 2$,
find an infinite path through it''.
\end{definition}

\begin{example}
There is a container morphism $\LPO \to \KL$ defined as follows.
The forward map $\varphi$ takes as input a sequence $p \in \Cantor$ and outputs a
countably branching tree $\varphi(p) : \bbN^* \to 2$. We may define it by cases:
\begin{itemize}
\item $\varphi(p)(0^n) = 1$ whenever $0^n$ is a prefix of $p$,
\item $\varphi(p)((k+1)0^n) = 1$ whenever $0^k 1$ is a prefix of $p$ (for any $n \in \bbN$),
\item otherwise, $\varphi(p)(q) = 0$ for any other input $q \in \bbN^*$.
\end{itemize}
$\varphi$ can be implemented by a type 2 Turing machine that works as follows.
First the machine reads its second input $q \in \bbN^*$, and determines whether
it is of shape $0^n$, $(k+1)0^n$ or something else. In the last case it
returns $0$ immediately. In the first and second cases, it can return the correct
result after reading off $n$ and $k+1$ bits of $p$ respectively.

Now the tree given by $\varphi(p)$ always has a single infinite branch: if
$p = 0^\omega$, then the branch is $0^\omega$. If $p \neq 0^\omega$, then
$\varphi(p)$ will have an infinite branch $(k+1)0^\omega$ where $k$ is the position of
the first non-zero bit of $p$; it may additionally have another finite branch
when $k \neq 0$, but no further infinite branches.

So we have that $\varphi(p) \in \shape(\KL)$, and $\varphi(p)$ has a unique
infinite branch whose first element is $0$ if and only if $p = 0^\omega$. Hence,
one can define a backward map
\[
\begin{array}{llcl}
\psi : &\sum\limits_{p \in \Cantor} \{ r \in \Cantor \mid \text{$r$ is a branch of $p$}\}  & \longto& \{0^\omega\} + \Cantor \setminus \{0^\omega\}
\\
& (p , r) &\longmapsto & \left\{ \begin{array}{ll}
\inl(0^\omega) & \text{if $r_0 = 0$} \\
\inr(p) & \text{otherwise}\\
\end{array} \right.
\end{array}\]
so that $(\varphi, \psi)$ is\footnote{Technically, it rather is one representative of
such a morphism. Let us also note that we would have also remarked already that
$\LPO$, regarded as a morphism, is a pullback of $\varphi$ along $\KL$ (also regarded
as a morphism), so $(\varphi, \id)$ is also a representative of the same morphism.} a container morphism $\LPO \to \KL$.

Conversely, there is no container morphism $\KL \to \LPO$ for continuity reasons.
Let us sketch the argument: if there were such a morphism $(\varphi', \psi')$,
then $\varphi'$ cannot be constant, as it would otherwise mean that $\KL$ has a
continuous section. Hence, there is some tree $t$ such that $\varphi'(t) \neq 0^\omega$, which has no infinite path starting with $k$.
By continuity of $\psi'$, there is a finite approximation $t_f$ of $t$ such that
$\psi'(u, \inr(\varphi'(u)))_0 = k$ for some $k$ and all extensions $u$ of $t_f$.
This is a contradiction, as we can pick $u$ with $u(k'0^n) = 1$ for $k' \in \mathbb{N} \setminus \{k\} \cup \dom(t_f)$,
and $u(q) = 0$ for all other $q \not\in \dom(t_f)$, which has no infinite path starting with $k$.
\end{example}

While every Weihrauch problem corresponds to a container over $\ReprSp$, the
converse is not true for two reasons.
Firstly, the official definition of Weihrauch reducibility allows for reductions
that may not be extensional, so that more problems are identified.
To fix this, we may consider containers over the full subcategory $\rpReprSp$
of $\ReprSp$ whose objects are the  subspaces of $\Cantor$ (that is, objects
of $\ReprSp$ that are isomorphic to spaces represented by a restriction of the
identity map $\id_{\Cantor} : \Cantor \to \Cantor$).
$\rpReprSp$ shares finite
limits and colimits with $\ReprSp$, however it is only weakly locally cartesian
closed. Formally this means that reindexings $f^*$ only have weak right adjoints~\cite[Remark 3.2]{CR00}; so dependent function spaces exist, but functions
may have multiple representatives (type-theoretically, this means that $\eta$ laws
need not hold).

\begin{remark}
\label{rem:countableExponentials}
Certain exponentials do exist in $\rpReprSp$: for instance, for
every regular projective represented space $A$ and subspace $N$ of $\bbN$, $A^N$
can be built. We also have an analogue local statement: call a map
$f : X \to Y$ in $\rpReprSp$ \emph{locally countable} if there exists some
$s : X \to \bbN$ such that $\langle f, s\rangle : X \to Y \times \bbN$ is
a subspace embedding. For such an $f$, $f^*$ does have a right adjoint $\Pi_f$
in $\rpReprSp$.
\end{remark}

We will still work with elcccs and $\ReprSp$ in the rest of the paper. This
will mean that fixpoints involving $\star$ discussed in~\Cref{sec:operators} will \emph{not}
match those used in Weihrauch reducibility, but can be thought of as extensional
analogues which share many properties.
That said, \Cref{rem:countableExponentials} will mean that
most results of~\Cref{sec:ground} do discuss genuine Weihrauch degrees (see~\Cref{rem:genuineGroundProblems}).
Investigating the case of extensive
weak locally cartesian closed categories and $\Container(\rpReprSp)$ is left for
future work; we hope most results carry through modulo the development of a
theory of polynomial quasifunctors over weakly locally cartesian closed categories.

The second issue is that the notion of Weihrauch problem is usually restricted
so that every question always has an answer. Containers in general have no such
constraints, which makes their structural theory a bit nicer: for instance, the
terminal object in $\Container(\ReprSp)$ is a problem with a question with no
answers, and constructing exponentials in $\Container(\ReprSp)$ requires such
questions with no answers; in contrast, the Weihrauch lattice has no $\top$.

\begin{definition}[{\cite[Definition 7]{PricePradic25}}]
  \label{def:answerable}
  A container $P : A \to I$ in $\Container(\cC)$ will be called \emph{answerable}
if it is a pullback-stable epimorphism, that is, an epimorphism such that $f^*(P)$ is always an epimorphism for any $f : J \to I$.
\end{definition}

In represented spaces,
this condition is equivalent to $P$ being a set-theoretic surjection.
So overall, an official Weihrauch problem will be an answerable container
over $\rpReprSp$. In the sequel, we shall often build objects of $\Container(\rpReprSp)$
and then build the subobject that restricts to answerable questions.

\begin{definition}
  \label{def:answerablePart}
The \emph{answerable part} $\mkAnswerable(P)$ of a container $P : A \to I$ in $\Container(\ReprSp)$
is the corestriction of $P$ to its image.

That is, if $I$ is represented by the partial map
$\delta_I : \Cantor \partto I$,
the codomain of
$\mkAnswerable(P)$ is the represented space
$(\{P(a) \mid a \in A\}, \delta')$ with $\delta'$ defined as the largest possible
restriction of $\delta_I$. $\mkAnswerable(P)$ is then tracked by the same realizer
as $P$ (categorically, this corresponds to taking the left part of a pullback-stable epi-regular mono factorization of $P$).
\end{definition}

It is obvious that the answerable part of a container is answerable. Note that,
contrary to the other operations on containers we discussed, it is not functorial, nor
monotone when we look at the induced degrees: for instance, the answerable part
of $1$ is $0$, while the answerable part of $\unit + 1$ is $\unit$, although we
have $1 \equiv \unit + 1$.

\subsection{Fixpoints of polynomial functors}

A convenient feature of polynomial functors is that they often admit fixpoints
in categories of interest, which then interpret a variety of inductive
and coinductive types.

\begin{definition}
A category $\cC$ is said to have dependent $\cW$-types (respectively, $\cM$-types) if for every object $I$
of $\cC$ and container $(P, t, u)$ of $\Container(\cC)(I, I)$,
$\functorify{P,t,u} : \slice{\cC}{I} \to \slice{\cC}{I}$ has an initial algebra (respectively, a terminal coalgebra).
A $\PiWM$-category is an extensive locally cartesian closed category which
also has dependent $\cW$ and $\cM$-types.
\end{definition}

We will primarly be concerned with $\Set$ and $\ReprSp$, which are $\PiWM$-categories~\cite{bauerPhD} and in which one should interpret all results in the
reminder of this subsection (and~\Cref{sec:ground}).
$\PiWM$ categories have natural and conatural objects, as the polynomial $1 + X$ is induced the
container $\inr : 1 \to 2$.
More generally, they interpret $\mu$-expressions defined as follows:
\[ E, E' \enspace \bnfeq\enspace X \bnfalt \mu X. \; E \bnfalt \nu X. \; E \bnfalt E \times E' \bnfalt E + E'\]
Any such $\mu$-expression $E$ with $k$ free variables
thus induce a $k$-ary container that
we write $\containerify{E}$; we shall also write $\functorify{E}$
for $\functorify{\containerify{E}}$.
Intuitively, closed $\mu$-expressions denote types of
(possibly infinite) trees, which can systematically be characterized by
tree automata~\cite{santocanale2002mu,Hyv25}\footnote{The characterization
given in~\cite{santocanale2002mu,Hyv25} is as winning strategies in parity games
over finite arenas. Those winning strategies can be regarded as trees, and
the basic theory of automata over infinite trees tell us they form
regular tree languages for any given finite arena. We favor using a direct
description in terms of regular tree language in the text to minimize cognitive
load later. as we will need to regard those strategies/trees as arenas for (other) parity games in \Cref{sec:ground}.}.

\begin{example}
  \label{ex:muExprNat}
We have $\containerify{\mu X. 1 + X} \cong \left\{ \chi_{\{u \in 2^* \mid u \sqsubseteq w\}} \mid w  \in 0(10)^*1 \right\}$
where $\chi_A$ is the indicator function $2^* \to 2$ of $A \subseteq 2^*$.
For the greatest fixpoint, we have that $M_{\nu X. 1 + X}$ is the closure of
$M_{\mu X. 1 + X}$ in $2^* \to 2$, i.e. $M_{\mu X. 1 + X} \cup \left\{ \chi_{(01)^\omega}\right\}$.
\end{example}

\begin{example}
  \label{ex:muExprWF}
Let us consider $E = \mu X. \; \nu Y. \; (1 + X) \times Y$.
We have that $\containerify{E}$ is isomorphic to the set of well-founded $\bbN$-ary trees
$t \subseteq \bbN^*$.
While this can be derived from first principles, a thoughtless automata-theoretic
translation also tells us that
\[\containerify{E} \cong M_E \eqdef \{ \ell_t \mid t \subseteq \bbN^* \text{ and $t$ is well-founded}\}\]
with $\ell_t : 2^* \to 2$ defined as the indicator function of
\[ L_t = \bigcup_{w \in t}
\{u \sqsubseteq h(w)(10)^n00 \mid wn \notin t\}
\cup
\{u \sqsubseteq h(w)(10)^n01 \mid wn \in t\}
\]
with $h : t \to 2^*$ defined by 
$
h(\varepsilon) = \varepsilon$ and $h(wn) = 
h(w) 00(10)^n01$. $M_E$ is recognized by the non-deterministic
coB\"uchi infinite tree automaton~\cite[\S 14.3]{Toolbox} below, where we draw an arrow $q \xrightarrow{a ; i} r$
for a transition from state $q$ to $r$ reading a label $a \in 2$ and going down
in direction $i \in 2$.
The priority of all states are $0$ except for $X$, which has priority $1$,hence
an accepting run only goes through the state $X$ finitely many times along any branch, ensuring the encoded
$\bbN$-ary tree is well-founded. All transitions are deterministic, except
for those starting from state $\times$ along the direction $0$: there, a guess
of whether a node has a child or not is taking place.

\begin{center}
\begin{tikzpicture}[shorten >=1pt,node distance=2cm and 3cm,on grid,auto]
\tikzstyle{every node}=[font=\small]
\node[state,initial, initial where=left, initial text={}, accepting]  (X)                      {$X$};
  \node[state]          (Y) [right=of X] {$Y$};
  \node[state]          (P) [right=of Y] {$\times$};
  \node[state]          (L) [above=of P] {$1 \textcolor{gray}{+ X}$};
  \node[state]          (O) [left=of L] {$1$};
  \node[state]          (R) [below=of P] {$\textcolor{gray}{1 + } X$};
  \node[state]          (b) [above=of X] {$\bot$};
  \node[state]          (bb) [right=of P] {$\bot$};

  \path[->] (X)   edge              node [below] {$1 ; 0$} (Y)
                  edge              node [left] {$1; 1$} (b)
            (b)   edge [loop above] node         {$0; 0,1$} ()
            (bb)  edge [loop right] node         {$0; 0,1$} ()
            (Y)   edge [bend right]  node [below] {$1 ; 0$} (P)
                  edge              node [right] {$1; 1$} (b)
            (P)   edge              node [right] {$1 ; 0$} (L)
                  edge              node [right] {$1 ; 0$} (R)
                  edge [bend right] node [above] {$1 ; 1$} (Y)
            (L)   edge              node [above] {$1 ; 0$} (O)
                  edge  [bend left]   node [above] {$~~~~1 ; 1$} (bb)
            (O)   edge              node [above] {$1 ; 0,1$} (b)
            (R)   edge  [bend left] node [below] {$1 ; 1~~$} (X)
                  edge  [bend right] node [below] {$~~~~1 ; 0$} (bb);
\end{tikzpicture}
\end{center}
\end{example}

\section{Fixpoints of fibred polynomial functors}
\label{sec:ex}

\subsection{Fibred polynomial functors}

We now turn to introducing fixpoints of endofunctors over categories of
containers $\Container(\cC)$.
Here we focus on what we can achieve when
we assume that $\cC$ is a $\PiWM$-category on exploiting the
fibration $\cod^\op \simeq \shape : \Container(\cC) \to \cC$ (and more generally
its $k$-fold products $\shape^k : \Container(\cC)^k \to \cC^k$).
This and a number of basic examples lead us to the following convenient definition.

\begin{definition}
  \label{def:fibPolFun}
Say that $F : \Container(\cC)^k \to \Container(\cC)$ is a \emph{fibred polynomial functor}
when
\begin{enumerate}
\item $(F, F_0)$ is a fibred functor $\shape^k \to \shape$ for some uniquely determined
  $F_0 : \cC^k \to \cC$ (see~\cite[Definition 2.2]{streicherfibrations})
\item $F_0$ is a polynomial functor
\item for every object $I = (I_1, \ldots, I_k)$ in $\cC^k$, the functors on the fibers
\[F_I : \bigslant{\cC}{I_1 + \ldots + I_k} \longto \bigslant{\cC}{F_0(I)}\]
induced by the isomorphism $\bigslant{\cC}{I_1 + \ldots + I_k} \cong \bigslant{\cC^k}{I}$ and $F$ are all polynomial functors.
\end{enumerate}
Say that $F :  \Container(\cC)^k \to \Container(\cC)^m$ is a \emph{fibred polynomial functor}
if every component $\pi_i \circ F$ is a fibred polynomial functor.
\end{definition}

Composition of fibred polynomial functors is defined in the obvious way and does
result in fibred polynomial functors.
Identities are fibred polynomial functors
(as identity functors are polynomial functors~\cite[Example 1.6 (i)]{GKpoly}
and they trivially preserve cartesian morphisms).
Fibred polynomial functors thus form a category, which also admits
obvious finite cartesian products.
Now we turn to showing which of the monoidal products from~\Cref{fig:monoidalOperators}
are also fibred polynomial functors.

\begin{restatable}{lemma}{stufffibred}
\label{lem:stufffibred}
Constant functors and the bifunctors $\times, +, \otimes$ over $\Container(\cC)$
are all fibred polynomial.
\end{restatable}

\begin{restatable}{lemma}{trianglefibred}
\label{lem:trianglefibred}
The following functor is fibred polynomial
\[\begin{array}{lcl}
  \Container(\cC) &\longto& \Container(\cC) \\
  X &\longmapsto& X \star P \\
\end{array}
\]
\end{restatable}
\begin{remark}
  \label{rem:triangleNotfibred}
On the other hand, $X \mapsto P \star X$ is \emph{not} fibred.
\end{remark}

With these lemmas, we have gathered all the tools we shall need to produce
examples for of fibred polynomial functors in the rest of the paper. For instance,
$X \mapsto \unit + X \star A$ is seen to be fibred polynomial by combining the
above lemmas.

\begin{remark}
Unlike the usual polynomial endofunctors over a locally cartesian closed category,
fibred polynomial functors do not necessarily have a canonical strength. A
counter-example is the constant functor that maps any polynomial to $\unit$.
\end{remark}

\subsection{They have fixpoints}

We now turn to the main theorem, which states that we can, in a sense, lift
fixpoints of functors in $\cC$ to fixpoints of fibred polynomials endofunctors
over $\Container(\cC)$.

\begin{theorem}
  \label{thm:fp-exist}
If
$F : \Container(\cC) \to \Container(\cC)$
is a fibred polynomial functor and $\cC$ is an lextensive category with 
dependent $\cW$ and $\cM$-types, then
\begin{itemize}
\item $F$ has an initial algebra $(\mu F, a_{\mu F})$
\item $F$ has a terminal coalgebra $(\nu F, c_{\nu F})$
\item $F$ has a (co)algebra
$(\zeta F, b_{\zeta F})$ such that $(\shape(\zeta F), \shape(b_{\zeta F}))$ is a
final coalgebra for
$F_0$ and it induces a final coalgebra for the endofunctor induced by $F$
over $\bigslant{\cC}{F_0(\zeta F)}$.
\end{itemize}
\end{theorem}

The basic idea behind the proof of~\Cref{thm:fp-exist} is that we may compute
a fixpoint for a fibred polynomial endofunctor $F$ by first computing a
fixpoint $\gamma F_0$ for the induced polynomial functor $F_0$ in the base,
and then take a fixpoint of $i^* \circ F_{\gamma F_0}$ where
$i : \gamma F_0 \isoto F_0(\gamma F_0)$. In both steps, we have complete freedom
over which fixpoint we want to compute, so we may use either initial or terminal
coalgebras with the outcomes summarized in~\Cref{fig:fixpointKindMatrix}. Note
that there is a repeated entry in the matrix, due to the fact that taking
the initial algebra in the base determines all fixpoints of $i^* \circ F_{\gamma F_0}$
(\Cref{prop:2algebraMu}).

\begin{figure}
  \begin{center}
    \begin{tabular}{|c||c|c|}
\hline
\diagbox{base}{total space} & \makecell{initial \\ algebra $\mu$} &
\makecell{terminal \\ coalgebra $\nu$} \\
\hhline{|=||=|=|}
\makecell{initial algebra $\mu$} &
\makecell{initial \\ algebra $\mu$} &
\makecell{initial \\ algebra $\mu$}\\
\hline
\makecell{terminal coalgebra $\nu$} &
\makecell{terminal \\ coalgebra $\nu$} &
\makecell{(co)algebra $\zeta$}
\\
\hline
\end{tabular}
  \end{center}

\caption{Matrix of the kinds of fixpoints we get from applying the recipe
outlined after~\Cref{thm:fp-exist} and~\Cref{prop:2algebraMu}.
}
\label{fig:fixpointKindMatrix}
\end{figure}

\begin{proof}
We only treat the first item, i.e. showing that $F$ has an initial algebra.
The other algebras are built in a similar fashion, and the terminal coalgebra
is also shown to be terminal in a similar way.
\mysubparagraph{The carrier} Since $\cC \cong \slice{\cC}{1}$, $F_0 : \cC \to \cC$
has an initial algebra $\left(\mu, \inAlg : F_0(\mu) \to \mu\right)$.
By Lambek's lemma, $\inAlg$ is an isomorphism.
Similarly, the induced functor $F_\mu : \slice{\cC}{\mu} \to
\slice{\cC}{F_0(\mu)} \cong \slice{\cC}{\mu}$ has a terminal
coalgebra
$\left(\nu, \outCoAlg : \nu \to F_\mu(\nu)\right)$ in $\slice{\cC}{\mu}$.
  Together, since $F_{\nu} = \inAlg^*(F_\mu(\nu))$,
  these give a representative $(\inAlg, \outCoAlg)$ for a container morphism $F(\nu) \to \nu$.

\[\begin{tikzcd}
  F(\nu)_{\mathrm{dir}} \arrow[d, "F(\nu)"]
  & \nu \arrow[d] \arrow[l, "\outCoAlg", swap] \arrow[r, equal] \arrow[dr, phantom, "\lrcorner", very near start, color=black]
  & \nu \arrow[d, "\nu"]
  \\
  F_0(\mu) \arrow[r, equal]
  & F_0(\mu) \arrow[r, "\inAlg"]
  & \mu
\end{tikzcd}\]

\mysubparagraph{Weak initiality} Suppose we have an $F$-algebra with carrier $P : A \to I$ as follows:
\[\begin{tikzcd}
  \directions{F(P)} \arrow[d, "F(P)"]
  & B \arrow[d, "b"] \arrow[l, "\beta", swap] \arrow[r] \arrow[dr, phantom, "\lrcorner", very near start, color=black]
  & A \arrow[d, "P"]
  \\
  F_0(I) \arrow[r, equal]
  & F_0(I) \arrow[r, "\alpha"]
  & I
\end{tikzcd}\]

Let's now construct a container morphism $\nu \to P$.
Since $\left(\mu, \inAlg\right)$ is an initial $F_0$-algebra and $\alpha$
is an $F_0$-algebra, there exists a unique $F_0$-algebra homomorphism
$\fold{\alpha} : \mu \to I$. Pulling back the commuting square for this
homomorphism along $P$, we obtain the cube below where all 
vertical faces are pullback squares and $C \to D$ is an isomorphism.
\[\begin{tikzcd}%
  & D \arrow[dd, "R", near start] \arrow[rr]
    \arrow[dr, phantom, "\lrcorner", very near start, color=black]
  && A \arrow[dd, "P"]
  \\
  C \arrow[dd] \arrow[rr, crossing over] \arrow[ur, "\sim"] \arrow[dr, phantom, "\lrcorner", very near start, color=black]
  && B \arrow[ur] \arrow[dr, phantom, "\lrcorner", very near start, color=black]
  \\
  & \mu \arrow[rr, "\fold{\alpha}", near start]
  && I
  \\
  F_0(\mu) \arrow[rr, "F_0(\fold{\alpha})"] \arrow[ur, "\inAlg"]
  && F_0(I) \arrow[ur, "\alpha"]
     \arrow[from=uu, crossing over, "Q", near end, swap]
\end{tikzcd}\]

The rearmost square is a pullback square, which can be viewed as a
(representative of a) cartesian morphism between the containers
$R : D\to \mu$ and $P : A \to I$. Since $F$ is a fibred functor,
applying it to that square yields another pullback square, and hence a
  unique arrow $\gamma : C \to \directions{F(R)}$ making the diagram
  below commute.
\[\begin{tikzcd}
  C \arrow[ddr, bend right] \arrow[rrr] \arrow[dr, dashed, swap, "\gamma"]
  &&& B \arrow[d, "\beta"] \arrow[dd, bend left=70, "Q"]
  \\
  & \directions{F(R)} \arrow[d, "F(R)"] \arrow[rr]
     \arrow[drr, phantom, "\lrcorner", very near start, color=black]
  && \directions{F(P)} \arrow[d, "F(P)"]
  \\
  & F_0(\mu) \arrow[rr, "F_0(\fold{\alpha})"]
  && F_0(I)
\end{tikzcd}\]

The arrow $D \cong C \xrightarrow{\gamma} \directions{F(R)}$ in
$\cC$ yields a coalgebra $\mathfrak{d} : R \to F_\mu(R)$ in $\slice{\cC}{\mu}$,
and hence there is a unique coalgebra morphism $\unfold{\gamma} : R \to \nu$.
It follows that we have a container morphism $(\fold{\alpha},\unfold{\gamma}) : \nu \to P$.

\mysubparagraph{Uniqueness}
We now check that the $(\fold{\alpha},
\unfold{\gamma})$ is the unique algebra morphism from
 $(\nu, (\inAlg, \outCoAlg))$ to $(P, (\alpha,\beta))$.

The first component of any such algebra morphism is an $F_0$-algebra
morphism from $(\mu, \inAlg)$ to $(I, \alpha)$. Since $(\mu, \inAlg)$
is initial, that component therefore must be $\fold{\alpha}$.

Similarly, the second component must be an $F_\mu$-coalgebra morphism
from $(R, \mathfrak{d})$ to $(\nu, \outCoAlg)$ in $\slice{C}{\mu}$.
Since the morphisms $C \to F_\mu(\nu)$ are equal in both diagrams
below, it follows that the second component must be equal to $\unfold{\gamma}$.

\[\begin{tikzcd}
  & \directions{F(R)} \arrow[dl, "F(\unfold{\gamma})" swap,  bend right]
  & & C \arrow[d, "\sim" vertical] \arrow[ll, "\gamma"]
  \\
  F_\mu(\nu) \arrow[d, "F(\nu)"]
  & \nu \arrow[d] \arrow[l, "\sim", swap] \arrow[r, equal] \arrow[dr, phantom, "\lrcorner", very near start, color=black]
  & \nu \arrow[d]
  & D \arrow[d] \arrow[l, "\unfold{\gamma}", swap] \arrow[r] \arrow[dr, phantom, "\lrcorner", very near start, color=black]
  & A \arrow[d, "P"]
  \\
  F_0(\mu) \arrow[r, equal]
  & F_0(\mu) \arrow[r, "\sim"]
  & \mu \arrow[r, equal]
  & \mu \arrow[r, "\fold{\alpha}"]
  & I
\end{tikzcd}\]

\[\begin{tikzcd}
  & C \arrow[dd, bend right=60] \arrow[drrr, bend left=20] \arrow[d, dashed, "\exists!"]
      \arrow[drr, bend left=10, dashed, "\exists!"]
  \\
  F_\mu(\nu) \arrow[d, "F(\nu)"]
  & \directions{F(R)} \arrow[d] \arrow[l] \arrow[r] \arrow[dr, phantom, "\lrcorner", very near start, color=black]
  & \directions{F(P)} \arrow[d, "F(P)"]
  & B \arrow[d, "Q"] \arrow[l] \arrow[r] \arrow[dr, phantom, "\lrcorner", very near start, color=black]
  & A \arrow[d, "P"]
  \\
  F_0(\mu) \arrow[r, equal]
  & F_0(\mu) \arrow[r, "F_0(\fold{\alpha})", swap]
  & F_0(I) \arrow[r, equal]
  & F_0(I) \arrow[r, "\alpha", swap]
  & I
\end{tikzcd}\]

\end{proof}
\begin{restatable}{proposition}{algebraMuTwo}
  \label{prop:2algebraMu}
If $(F_0,F)$ is a fibred polynomial functor over a $\PiWM$-category $\cC$,
$\alpha : F_0(\mu F_0) \to \mu F_0$ is an initial algebra for $F_0$ and we have
$S_0, S_1 \in \slice{\cC}{\mu F_0}$ with $F_{\mu F_0}(S_i) \cong S_i \circ \alpha$
for $i \in \{0,1\}$, then $S_0 \cong S_1$.
\end{restatable}
\begin{example}
$\mu (X \mapsto X)$ is an initial object, $\nu (X \mapsto X)$ is a terminal object and
$\zeta (X \mapsto X)$ is the monoidal unit $\unit$ of $\tensor$.
\end{example}
\begin{example}
  \label{ex:NinftyZeta}
$X \mapsto 1 + X$ admits the following fixpoints:
\begin{itemize}
\item $\mu(X \mapsto 1 + X)$ is the container $0 \to \bbN$
\item $\nu(X \mapsto 1 + X)$ is the container $0 \to \Ninfty$
\item $\zeta(X \mapsto 1 + X)$ is the container $\infty : 1 \to \Ninfty$.
\end{itemize}
\end{example}
Finally,~\Cref{thm:fp-exist} also behaves well in presence of parameters.
\begin{restatable}{theorem}{fpexistmulti}
  \label{thm:fp-exist-multi}
Let $F : \Container(\cC)^{k + 1} \to \Container(\cC)$ be a fibred polynomial
functor. For any $\gamma \in \{\mu,\nu,\zeta\}$, the map
\[
  \begin{array}{llcl}
& \Obj{\Container(\cC)^k} &\longto& \Obj{\Container(\cC)}\\
 & (f_0, \ldots, f_{k - 1}) &\longmapsto& \gamma F(f_0, \dots, f_{k-1}, -)\\
  \end{array}
\]
extends to a fibred polynomial functor
$\gamma_k F : \Container(\cC)^k \to \Container(\cC)$.
\end{restatable}

\section{Operators definable as fixpoints}
\label{sec:operators}

Assuming a $\PiWM$-category $\cC$, we discuss a number of
endofunctors over $\Container(\cC)$ defined in terms of fixpoints as per~\Cref{thm:fp-exist}.
Those useful for the sequel are summarized in~\Cref{fig:operators}.

\subsection{Parallelizations}

Fix a container $P : A \to I$ (write $(A_i)_{i : I}$ for the corresponding
family) and let us consider fixpoints of the
endofunctor $X \mapsto P \times X$ over $\Container(\cC)$.
We can first note that we have the following isomorphisms in $\cC$.
\[
  \begin{array}{l@{\;}c@{\;}lr}
\shape(\mu(X \mapsto P \times X)) &\cong& \mu(X \mapsto I \times X) \cong 0 & \\
  \shape(\gamma(X \mapsto P \times X)) &\cong& \nu(X \mapsto I \times X) \cong I^\bbN
            &\text{for $\gamma \in \{\nu, \zeta\}$}
\end{array}\]
As a result, we have $\mu(X \mapsto P \times X) \cong 0$. The situation with the other
fixpoints is slightly more exciting. The greatest fixpoint yields the 
family $\left(\sum\limits_{n : \bbN} A_{s(n)}\right)_{s : I^\bbN}$. Interpreted as a transformation
of problems, the corresponding operator takes a problem $P$ to the problem $\widetilde{P}$ where
a question is a sequence of $P$-questions, and an answer picks an index $n$
and answers the $n$th question.
\[
\begin{array}{llcl}
  \widetilde{-} :  \Container(\cC) &\longto& \Container(\cC) \\
                   P &\longmapsto& \nu(X \mapsto P \times X)
\end{array}
\]
The $\zeta$ fixpoint yields the family
$\left(\sum\limits_{x : \Ninfty} \prod\limits_{n : \bbN} \left(x = \underline{n} \to A_{s(n)}\right)\right)_{s : I^\bbN}$.
In terms of problems, an answer now consists of a conatural number and, if this conatural
number corresponds to a natural number, then an answer is eventually given for the relevant question.

In terms of degrees, $\widetilde{P}$ is not very interesting because
$P \equiv \widetilde{P}$: we can reduce $P$ to $\widetilde{P}$ by asking the
same question $\omega$ times, and conversely, we can reduce $\widetilde{P}$
to $P$ by solving only the first $P$-question in the input to $\widetilde{P}$.
It is thus typically more interesting
to consider another operator where one gets a stream of answers for all questions
asked; such an operator is called the (infinite) parallelization
operator~\cite[Definition 1.2]{survey-brattka-gherardi-pauly}. For this
we naturally consider a fixpoint of the functor $X \mapsto P \tensor X$
induced by the binary parallelization $\tensor$. Since $X \mapsto P \tensor X$
and $X \mapsto P \times X$ induce the same endofunctor $Y \mapsto I \times Y$
over $\cC$, the shapes of the fixpoints are the same and $\mu(X \mapsto P \tensor X) \cong 0$.
This time around, the greatest fixpoint is of little interest, but the
middle fixpoint $\zeta$ gets us what we want; we write $\widehat{P}$
for the infinite parallelization of $P$ defined thus.
\[
\begin{array}{llcl}
  \widehat{-} :  \Container(\cC) &\longto& \Container(\cC) \\
                   P &\longmapsto& \zeta(X \mapsto P \tensor X)
\end{array}
\]

It is also customary to consider a finitary version of parallelization by
taking the least fixpoint of $X \mapsto \unit + P \tensor X$. The finite
parallelization of $P$ is typically denoted $P^*$.

\subsection{Iterations}

Now, let us turn to fixpoints built with the sequential composition of problems $\star$.
As $\star$ is not commutative there are, a priori, two ways of defining
functors whose fixpoints are iterations of some problem $P$.
However, one of these is not fibred (\Cref{rem:triangleNotfibred}) and the
other is (\Cref{lem:trianglefibred}). We will thus mostly focus on the latter,
although it is worth noting that $X \mapsto \unit + P \star X$ does have a non-trivial
initial algebra with a natural interpretation. It corresponds to the problem
whose instances $(n , f)$ are given by a number $n$ of successive questions we want to
ask to $P$ and then a recipe $f$ to ask those questions, and where results are
the answers to $n$ questions we could have successively asked.

\begin{restatable}{proposition}{boundedDiamond}
\label{prop:boundedDiamond}
The carrier of the initial algebra for $X \mapsto \unit + X \star P$ is isomorphic to
$\sum\limits_{n : \mathbb{N}} \underbrace{P \star \ldots \star P}_{\text{$n$ times}}$.
\end{restatable}

On the other hand, the functor $X \mapsto \unit + X \star P$ is
fibred polynomial by Lemmas~\ref{lem:stufffibred} and~\ref{lem:trianglefibred}.
So~\Cref{thm:fp-exist-multi}
says we have a corresponding polynomial endofunctor $(-)^\diamond : \Container(\cC) \to \Container(\cC)$.
Regarding $P^\diamond$ as a problem, inputs are recipes to ask finitely
many questions sequentially to $P$ and outputs are possible sequences
of answers to that. Note that in general, this is strictly more powerful than
the operator from~\Cref{prop:boundedDiamond} as the number of questions need
not be known in advance.

While finite iterations of problems have been introduced in 2016
to the literature on Weihrauch complexity~\cite{topol-comput-neumann-pauly},
the endofunctor $(-)^\diamond$ has been known for much longer to category theorists:
$\functorify{P^\diamond}$ was characterized as the free monad over the
polynomial functor $\functorify{P}$~\cite{kelly1980unified}.
The construction given in~\cite[Section 4]{GKpoly}
correspond to what is given by~\Cref{thm:fp-exist} using the $\mu/\mu$
recipe.
From that point of view, assuming for a moment $P \in \Container(\Set)$,
the endofunctor $\functorify{P^\diamond}$ may be described
as follows. The shape of $P^\diamond$
consists of well-founded trees labelled by some $i : I$ or by a special symbol $\bullet$, and
the arity of the node is $P^{-1}(i)$ or $0$ if the special symbol
is used. Then $\functorify{P^\diamond}(X)$
consists of all those trees from $P^\diamond$ where every leaf $\ell$ is labelled by
some $x_\ell \in X$.

\begin{remark}
While $\shape(P^\diamond)$ can, a fortiori, be regarded as a sort of well-founded
tree structure in $\Container(\ReprSp)$, it is important to note that those trees
are very different from the ones we encode in $\Baire$ for the rest of this paper:
any node has potentially continuum-many children and knowing ``the arity''
of a node corresponds to the ability to answer the $P$-question of its label,
which is typically impossible to do continuously.
\end{remark}

A key property of $(-)^\diamond$ is that it associates well with composition
of other problems, which is captured by the following.

\begin{restatable}{theorem}{kleeneinduction}
\label{thm:kleeneinduction} $Q \star P^\diamond$ is the carrier of an initial algebra of the fibred polynomial endofunctor
$X \mapsto Q + X \star P$.
\end{restatable}

One useful application of \Cref{thm:kleeneinduction} is a straightforward
construction of a monad multiplication
$\functorify{P^\diamond} \circ \functorify{P^\diamond} \cong \functorify{P^\diamond \star P^\diamond} \to \functorify{P^\diamond}$
by exhibiting an algebra $X \mapsto P^\diamond + X \star P$ whose carrier is $P^\diamond$.
We leave that as an exercise, together with the consequences that $P^\diamond$ is the
free polynomial monad over $P$ and that $(-)^\diamond$ is a monad.
\Cref{thm:kleeneinduction} can also gives an induction principle on degrees:
\[ Q \star P ~\le~ Q \qquad\Longrightarrow\qquad Q \star P^\diamond ~\le~ Q\]
This scheme corresponds to an induction axiom of the theory of right-handed Kleene
algebras~\cite{Kozen-Silva-2012}. It is also a key part of a complete axiomatization
of the Weihrauch lattice augmented with iterated composition~\cite{pradic25}, or
rather, a slightly generalized version of it which may be derived using the
cartesian closed structure of $\Container(\cC)$.

As finite parallelization has a infinitary countable counterpart, so does
sequential iteration by considering a suitable fixpoint of the
endofunctor $X \longmapsto X \star P$.
Much like in the case of parallelization, the more interesting operator
will be derived using the middle fixpoint construction $\zeta$.
Then $P^\infty := \zeta X. \; X \star P$ corresponds to the problem whose
instances are recipes to ask countably many questions to $P$ in a sequential
manner and whose solutions are $\omega$-sequences of possible answers. An analogous operator
has recently been introduced in Weihrauch reducibility~\cite{brattka2025loops}.
In the case of $\Set$, the corresponding endofunctor $\functorify{P^\infty}$
maps $X$ to the set of pairs $(t, x)$ consisting of a tree $t$ whose nodes
are all labelled by $P$-questions, arities consist of the $P$-answers, and $x$
maps infinite paths $p$ through $t$ to elements $x(p) \in X$.

When it comes to the less interesting fixpoints, similarly to what happened
with $X \mapsto P \times X$, we have $\mu(X \mapsto X \star P) \cong 0$
and $\nu(X \mapsto X \star P) \cong 1 \star P^\infty$. It is
clear that since $(1 \star P)^\infty \cong 1 \star P$, those
two constructions yield idempotent operators on $P$. However, $(-)^\infty$ is
not idempotent. From the perspective of Weihrauch complexity,
we have that $P^{\infty\infty}$ is the problem where one can ask $\omega^2$
questions to $P$ sequentially and then get all of the answers, which is more powerful
than asking $\omega$ questions sequentially (for instance, 
$\lpo^\infty < \lpo^{\infty\infty}$~\cite[Corollary 23]{brattka2025loops}).

\section{Ground problems}
\label{sec:ground}

\subsection{$\zeta$-expressions}

As announced before, we may define the following syntax for fibred
polynomial functors over $\Container(\cC)$ when $\cC$ is a $\PiWM$-category.
\[ E, E' \enspace \bnfeq\enspace X \bnfalt \mu X. \; E \bnfalt \zeta X. \; E \bnfalt \nu X. \; E \bnfalt E \times E' \bnfalt E \tensor E' \bnfalt E + E'\]
We write $\zetaExpr(\vec X)$ for the set of such \emph{$\zeta$-expressions} with free variables in $\vec X$.
$\zeta$-expressions $E$ with $k$ free variables have an obvious interpretation as
fibred polynomial functors $\ffunctorify{E} : \Container(\cC)^k \to \Container(\cC)$.
When $k = 0$, we regard $\ffunctorify{E}$ as an object of $\Container(\cC)$.
We will also use
the following obvious shorthands for $\zeta$-expressions
(where $X$ is a fresh variable):
\[
  \begin{array}{l@{\;}c@{\;}l !\quad l@{\;}c@{\;}l !\quad l@{\;}c@{\;}l}
  0 &\eqdef& \mu X. X &
  \unit &\eqdef& \zeta X. X &
  1 &\eqdef& \nu X. X \\
  E^* &\eqdef& \mu X. \unit + E \tensor X &
  \widehat{E} &\eqdef& \zeta X. E \tensor X &
  \widetilde{E} &\eqdef& \nu X. E \times X \\
\end{array}
\]
Before turning to examples of $\zeta$-expressions we wish to interpret
in $\Container(\ReprSp)$, we will now give an automata-theoretic characterization
of their denotations, extending a characterization of $\mu$-expressions
alluded to in~\Cref{sec:background}. For this, we assume some familiarity
with the theory of regular languages of infinite trees~\cite[\S 14.3]{Toolbox}
and write $\Tree(\Gamma, \chi)$ for the set of trees whose internal nodes
are binary and labelled by elements of $\Gamma$ and whose leaves are labelled by
elements of $\chi$ (if $\chi = \emptyset$, we may write $\Tree(\Gamma)$).

\begin{definition}
  \label{def:gamingAut}
  The \emph{gaming alphabet} $\cG_{i,k}$ with ranks $(i, k)$ (for $i \le k \in \bbN$)
is the finite set $\{\bot\} ~\uplus~ \{\Even, \Odd\} \times \{i, \ldots, k\}$.
A \emph{binary game tree} with ranks $(i, k)$ and exits $\chi$ is an
element $ \ell \in \Tree(\cG_{i,k}, \chi)$ such that $\ell(w) = \bot$ implies
$\ell(wi) = \bot$ for any $w \in 2^*$, $i \in 2$.
An $\Even$-\emph{strategy} of a binary game tree $\ell$ is a subset $S \subseteq \dom(\ell)$
such that $\varepsilon \in S$, $S \cap \ell^{-1}(\bot) = \emptyset$ and, for
every $w \in S$:
\begin{itemize}
\item if $\pi_1(\ell(w)) = \Even$, then there exists exactly one $i \in 2$ with
$wi \in S$
\item otherwise if $\pi_1(\ell(w)) = \Odd$, for every $i \in 2$ such that
$\ell(wi) \neq \bot$, $wi \in S$.
\end{itemize}
A strategy $S$ is \emph{winning} for $\Even$ if for any infinite path
$p \in \Cantor$ through $S$, $\limsup\limits_{n \to +\infty} \pi_2(\ell(p[n]))$ is even.
Let us call $\Strat(\ell)$ the set of such winning strategies.
\end{definition}

For any finite set $\chi$, any game tree language
$L \subseteq \Tree(\cG_{i, k}, \chi)$ induces a fibred polynomial functor
$\ffunctorify{L} : \Container(\ReprSp)^\chi \to \Container(\ReprSp)$.
To describe its action on tuples of containers, first recall
(from the discussion leading to~\Cref{prop:muExprAut}) that $L$
also determines a functor $\functorify{L} : \ReprSp^\chi \to \ReprSp$
mapping a tuple of spaces $(V_x)_{x \in \chi}$ to the space of pairs $\ell \in L$
and a family of maps $\rho_x : \ell^{-1}(x) \to V_x$ that labels the paths
of $\ell$ ending in $x \in \chi$ with some element of $V_x$. We compute the shape
of $\ffunctorify{L}$ by $\shape(\ffunctorify{L}) = \functorify{L}$. Intuitively,
the answers corresponding to a pair $(\ell, \rho)$ will then consist of a winning strategy
for $\Even$ in $\ell$ together with answers for the questions occurring on
maximal finite paths of the strategy.
All in all, writing $S_x \eqdef \ell^{-1}(x) \cap S$ for $S \in \Strat(\ell)$ and $x \in \chi$,
we have
\[
  \begin{array}{l@{\;}lcl}
\ffunctorify{L} :& \Container(\ReprSp)^{\chi} &\longto& \Container(\ReprSp) \\
                 & ((A_i)_{i \in I_x})_{x \in \chi} &\longmapsto&
                 \left(\sum\limits_{S \in \Strat(\ell)} \prod\limits_{x \in \chi} \prod\limits_{p \in S_x} A_{\rho(p)} \right)_{(\ell, \rho) \in \functorify{L}}
\end{array}\]

\begin{restatable}{theorem}{zetaExprAut}
  \label{thm:zetaExprAut}
From a $\zeta$-expression $E$ with free variables $\chi$, %
we can compute a regular game tree language $L_E$
such that $\ffunctorify{E}$ and $\ffunctorify{L_E}$ are isomorphic in $\Container(\ReprSp)^{\chi} \to \Container(\ReprSp)$.
\end{restatable}

Let us now unroll the construction behind
\Cref{thm:zetaExprAut}
on a number of examples.

\begin{example}
  We have $L_{\mu X. X} = \emptyset$ and $L_{\gamma X. X} = \{\ell_\gamma\} \subseteq \Tree(\cG_{0,1})$
for $\gamma \in \{\nu,\zeta\}$
where we have $\ell_\gamma(p) = \bot$ if $p$ contains a $1$, $\ell_\nu(0^n) = (\Even, 1)$
and $\ell_{\zeta}(0^n) = (\Even, 0)$.
\end{example}

\begin{example}
\label{ex:zetaRT}
The language $L_{\zeta X. \nu Y. Y + X}$ includes only trees $\ell \in \Tree(\cG_{0,1})$ where
$\ell$ is constantly $\bot$ except along a unique infinite path $p$ belonging to the
regular 
language of infinite words $00((1 + \varepsilon)00)^\omega$. The labelling is fully
determined by the path in question: $\ell(w1) = (\Even, 0)$ and
$\ell(w0) = (\Even, 1)$ for any $w \sqsubset p$.
Accordingly, $\Strat(\ell)$ is $\{p\}$ if $\exists^\infty n \in \bbN. \; p_n = 1$
and $\emptyset$ otherwise.
\end{example}

\begin{example}
\label{ex:zetaWKL}
The tree language $L_{\zeta X. \; 1 + X \times X}$ is recognized by the following
B\"uchi tree automaton assuming $a = (\Even, 0)$ and all states are accepting:
  \begin{center}
\begin{tikzpicture}[shorten >=1pt,node distance=2cm and 4cm,on grid,auto]
\tikzstyle{every node}=[font=\small]
\node[state,initial, initial where=above, initial text={}]  (q_0)                      {$X$};
  \node[state]          (q_1) [left=of q_0] {$1 \textcolor{gray}{+X^2}$};
  \node[state]          (q_2) [below right=of q_0] {$\textcolor{gray}{1+} X^2$};
  \node[state]          (q_4) [above=of q_2] {$X^2$};
  \node[state]          (q_3) [below=of q_1] {$1$};
  \node[state]          (b) [below=of q_0] {$\bot$};

  \path[->] (q_0) edge              node [above] {$a ; 0$} (q_1)
                  edge              node [above] {$a ; 0$} (q_2)
                  edge              node         {$a ; 1$} (b)
            (b)   edge [loop below] node         {$\bot; 0,1$} ()
            (q_1) edge              node        {$a ; 0$} (q_3)
                  edge              node        {$a ; 1$} (b)
            (q_2) edge              node [swap] {$a ; 1$} (q_4)
                  edge              node        {$a ; 0$} (b)
            (q_4) edge              node [swap] {$a ; 0,1$} (q_0)
            (q_3) edge              node [swap] {$a ; 0,1$} (b);
\end{tikzpicture}
  \end{center}

For any $\ell \in L_{\zeta X. \; 1 + X \times X}$, we have that $\ell^{-1}(a)$ is
a subtree of the full binary tree $2^*$. By keeping every third letter of those
$w \in \ell^{-1}(a)$, we would obtain a subtree $\ell'$ of $2^*$ where every
node is either a leaf or has two children; it is not difficult to see that all such
subtrees can be obtained in this way. Winning $\Even$ strategies in the
game tree $\ell$ are in bijection with infinite paths of $\ell'$.
\end{example}

\begin{example}
\label{ex:zetaClopenDet}
Let us now consider $D_1 = \mu X. \; \widehat{\widetilde{X}} + 1 + \unit$.
The tree language $L_{D_1} \subseteq \Tree(\cG_{0,1})$ is recognized by the coB\"uchi automaton below, where
$a = (\Even, 0) \in \cG_{0,1}$, $X$ is the only rejecting state and $\bot$
denotes a subautomaton recognizing only the constant labelling by $\bot$ of
the full binary tree.
The states are labelled with the corresponding subexpressions of $D_1$, except
for $e$ and $c$ which should be respectively read as ``end'' and ``continue''. 
\begin{center}
\begin{tikzpicture}[shorten >=1pt,node distance=1.5cm and 3cm,on grid,auto]
\tikzstyle{every node}=[font=\scriptsize]
\node[state,initial, initial where=above, initial text={}, accepting]  (X)                      {$X$};
  \node[state]          (E) [below right=of X] {$e$};
  \node          (b) [below left=of E] {$\bot$};
  \node[state]          (C) [left=of b] {$c$};
  \node[state]          (I) [above=of E] {$\textcolor{gray}{1+}\unit$};
  \node[state]          (II) [right=of I] {$\unit$};
  \node[state]          (O) [below right=of E] {$1\textcolor{gray}{+\unit}$};
  \node[state]          (OO) [below =of E] {$1$};
  \node[state]          (ODD) [above=of C] {$\widehat{\widetilde{X}}$};
  \node[state]          (EVE) [above=of ODD] {$\widetilde{X}$};
  \node          (bb) [below=of II] {$\bot$};

  \path[->] (X)   edge              node [above] {$a ; 0$} (E)
                  edge              node [left] {$a; 0$} (C)
                  edge              node [right] {$a; 1$} (b)
            (E)   edge              node [left] {$a ; 1$} (I)
            edge   node [right] {$a;  1$} (O)
            edge   node [right] {$a;  0$} (b)
            (C)   edge   node [left] {$a ; 0$} (ODD)
                  edge              node [below] {$a ; 1$} (b)
            (ODD) edge  node [left] {$(\Odd, 0) ; 0$} (EVE)
                  edge  [loop left] node [left] {$(\Odd, 0) ; 1$} ()
                  (EVE) edge  [bend left]  node [above] {$(\Even, 1) ; 0$} (X)
                  edge  [loop left] node [left] {$(\Even, 1) ; 1$} ()
            (O)   edge              node [below] {$a ; 0$} (OO)
                 edge              node [right] {$a ; 0$} (bb)
            (OO)   edge              node [below] {$a ; 0, 1$} (b)
            (II)   edge                node [right] {$(\Odd, 0) ; 0,1$} (bb)
            (I)   edge                node [above] {$a ; 1$} (II)
                  edge                node [above] {$a ; 0$} (bb)
            ;
\end{tikzpicture}
\end{center}
More intuitively, one can see that on shapes, $\ffunctorify{D_1}$ is the carrier
of an initial algebra $X \mapsto 2 + X^{\bbN^2}$. It thus consists of
well-founded trees where all nodes are either leaves labelled by a boolean or
are internal nodes of arity $\bbN^2$.
A direction corresponding to such a tree is a winning strategy for
the second player in the following game whose positions are paths in the tree:
\begin{itemize}
  \item if the path is a leaf, then the first player wins if and only if its
    label is $0$,
  \item otherwise, the first player first picks a number $n \in \bbN$, the second
    player picks a number $m \in \bbN$, and the game proceeds from the $(n,m)$th
    child of the current position.
\end{itemize}
Questions to $\ffunctorify{D_1}$ seen as a problem are essentially clopen
games over $\Baire$ and answers are winning strategies for the second player.
\end{example}

\subsection{Some Weihrauch degrees as answerable parts of $\zeta$-expressions}

Let us now sketch how the fragment of the Weihrauch lattice described in~\Cref{fig:pblmsAsZetaMKA}
can be captured closed $\zeta$-expressions, or more accurately, the \emph{answerable
part} (\Cref{def:answerablePart}) of their denotations. The latter step is important
as all $\zeta$-expressions are trivial from a degree-theoretic point of view.

\begin{proposition}
For any closed $\zeta$-expression $E$,
$\ffunctorify{E} \equiv 0$, $\unit$ or $1$. 
\end{proposition}
\begin{proof}[Proof idea]
We generalize the statement inductively to open expressions to
state that the set of degrees $\{0, \unit, 1\}$ are preserved by any $\ffunctorify{E}$.
The more interesting case is the fixpoint case. In that case, we can show that
for any fibred polynomial $F : \Container(\cC) \to \Container(\cC)$ that preserve the set
of degrees $\{0, \unit, 1\}$, we have
$\mu F \equiv F(F(0))$, $\zeta F \equiv F(\unit)$ and $\nu F \equiv F(F(1))$.
\end{proof}

For instance, the expression $D_1$ from~\Cref{ex:zetaClopenDet} gives $\ffunctorify{D_1} \equiv 1$
because one can compute a clopen game where $\Even$ loses. On the other hand,
$\mkAnswerable(\ffunctorify{D_1})$ captures $\mathbf{\Delta}^0_1$-determinacy,
a non-trivial Weihrauch problem equivalent to $\UC_\Baire$.

\begin{remark}
  \label{rem:genuineGroundProblems}
By~\Cref{rem:countableExponentials}, we can inspect the definition of $\ffunctorify{L}$
to check it
always restricts to a functor $\Container(\rpReprSp)^\chi \to \Container(\rpReprSp)$.
Hence, by~\Cref{thm:zetaExprAut}, the answerable parts of denotations of $\zeta$-expressions correspond to genuine Weihrauch degrees (in contrast to the
composition operator $\star$ and its iterations in the previous section).
\end{remark}

Let us now treat a selection of examples from~\Cref{fig:pblmsAsZetaMKA}, starting
with the first example of a degree we introduced in the paper.%

\begin{proposition}
\label{prop:lpo}
$\LPO$ is Weihrauch-equivalent to
$\mkAnswerable\left(\ffunctorify{\left(\zeta X. \; X + 1\right) \times \left(\nu X. \; X + \unit\right)}\right)$.
\end{proposition}
\begin{proof}
Computing the isomorphism types of
$\ffunctorify{\zeta X. \; X + 1}$
and
$\ffunctorify{\nu X. \; X + \unit}$,
we obtain the containers corresponding respectively to $\infty : 1 \to \Ninfty$ and
to the embedding of $\bbN$
into $\Ninfty$. We thus
have two problems that take as input some $x \in \Ninfty$ which can have a
trivial answer. That answer exists only if $x = \infty$ in the first case, and
conversely only if $x = \underline{n}$ for some $n \in \Nat$ in the second case.
Hence, questions to $\ffunctorify{\left(\zeta X. \; X + 1\right) \times \left(\nu X. \; X + \unit\right)}$
are pairs 
$(x, y) \in \Ninfty^2$ and legal answers are booleans $b \in 2$ such that
\[ b = 0 \Longrightarrow x = \infty \qquad \text{and} \qquad b = 1 \Longrightarrow y < \infty\]
Taking the answerable part ensures that we only have inputs such that $y = \infty \Rightarrow x = \infty$.
$\LPO$ on the other hand is easily seen to be equivalent to the problem taking
some $x \in \Ninfty$ and outputting the boolean $b$ with $b = 0 \Leftrightarrow x =\infty$.
So we can take the forward part of a reduction $\LPO \le \ffunctorify{\left(\zeta X. \; X + 1\right) \times \left(\nu X. \; X + \unit\right)}$
to be the diagonal map $x \mapsto (x,x)$ and its backwards part to be trivial.
For the converse reduction, the forward part is simply the second projection,
and the backward part is trivial. 
\end{proof}

The next degree we consider captures \emph{closed choice problems} over (subspaces of) $\bbN$ with
the cofinite topology.

\begin{definition}
An enumeration $A \subseteq \bbN$ is a map $e : \bbN \to \{\bot\} \uplus \bbN$
such that $A = e(\bbN) \cap \bbN$.
For $A \subseteq \bbN$, define $\C_A$ to be the following problem: questions are
enumerations $e$ of strict subsets of $A$, and answers are those $m \in A$ which
are not enumerated by $A$.
\end{definition}

\begin{proposition}
\label{prop:cchoicefin}
$\C_\bbN \equiv \mkAnswerable\left(\ffunctorify{\widetilde{\zeta X. X + 1}}\right)$ and, for $k \in \bbN$,
\[
\CChoice{k} ~~ \equiv ~~ \mkAnswerable\left(\ffunctorify{\underbrace{\left( \zeta X. \; X + 1 \right) \times \ldots \times \left( \zeta X. \; X +
1\right)}_{\text{$k$ times}}}\right)\]
\end{proposition}
\begin{proof}
Recall that $\ffunctorify{\zeta X. X + 1}$ is isomorphic to the container
$\infty : 1 \to \Ninfty$ (\Cref{ex:NinftyZeta}). As a problem, a question
is a conatural number, and a question can be (trivially) answered if and only
if that conatural number is $\infty$.
This means that $\widetilde{\ffunctorify{\zeta X. X + 1}}$ is the problem
whose questions are sequences of conatural numbers $x \in \Ninfty^\bbN$
which are answered by any $n \in \bbN$ such that $x_n = \infty$; a fortiori,
$\mkAnswerable{\widetilde{\ffunctorify{\zeta X. X + 1}}}$ restricts the questions
so that an $n \in \bbN$ with $x_n = \infty$ does exist.

The forward part of a reduction $\C_\bbN \le \mkAnswerable{\widetilde{\ffunctorify{\zeta X. X + 1}}}$
simply sends an enumeration $e$ to the sequence $(\sup \{ x \in \Ninfty \mid \forall k \in \bbN. \; \underline{k} < x \Rightarrow e_k \neq n\})_{n \in \bbN}$,
and the corresponding backwards part is trivial.
Conversely, from a sequence $x \in \Ninfty^\bbN$, we can compute
\[e(\langle n, m \rangle) = \left\{ \begin{array}{ll}
n & \text{if $x_n \le \underline{m}$} \\
\infty & \text{otherwise}\\
\end{array}
\right.\]
as part of a forward reduction (assuming $\langle -, -\rangle : \bbN^2 \to \bbN$
is the standard Cantor pairing function) and have a trivial backwards reduction.
The proof for $\C_k$ is essentially the same.
\end{proof}

Another important instance of closed choice is $\C_{2^\bbN}$, which is equivalent
to weak K\"onig's lemma.

\begin{definition}
\label{def:wkl}
$\WKL$ is the problem whose questions are infinite binary trees $t \subseteq 2^*$,
and where such a question is answered by an infinite path $p \in 2^\bbN$ through
$t$.
\end{definition}

\begin{proposition}
$\WKL$ is Weihrauch equivalent to the answerable parts of
$\ffunctorify{\zeta X. \; 1 + X \times X}$ and
of the infinite parallelization of
$
{\ffunctorify{\zeta X. \;X + 1}} \times
{\ffunctorify{\zeta X. \;X + 1}}
$.
\end{proposition}
\begin{proof}
The first part was explained in~\Cref{ex:zetaWKL}; the only minor difference
with the official definition of $\WKL$ that makes one of the reductions different
from the identity is that input trees to $\zeta X. 1 + X \times X$
should not have internal unary nodes, which is remedied by adding leaves in the
forward reduction.
As for the second part, it follows from $\widehat{\C_2} \equiv \WKL$~\cite[Theorem 8.2]{brattka2},
$\mkAnswerable(\widehat{P}) \cong \widehat{\mkAnswerable(P)}$ and~\Cref{prop:cchoicefin}.
\end{proof}

\subsection{Problems on game tree languages}

We have seen that $\mkAnswerable$ dramatically increases  the expressiveness of
closed $\zeta$-expressions $E \in \zetaExpr$. We may ask whether we can see any limit to this
approach, and whether there are any nice structural properties of the set
of problems expressible in this way. The following basic observation hints that
considering containers $\ffunctorify{L}$ associated to regular game tree
languages might be worthwhile.

\begin{proposition}
For any closed $\zeta$-expression $E$, we can compute a regular game tree
language $L_E \cap W$ such that $\ffunctorify{L_E \cap W} \cong \mkAnswerable(\ffunctorify{E})$.
\end{proposition}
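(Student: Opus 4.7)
The plan is to apply \Cref{thm:zetaExprAut} to obtain $L_E$ and then intersect it with a regular tree language $W$ capturing ``$\Even$ has a winning strategy''.

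First, I would invoke \Cref{thm:zetaExprAut} on the closed expression $E$ to produce, for some priorities $(i, k)$, a regular game tree language $L_E \subseteq \Tree(\cG_{i, k})$ with $\ffunctorify{E} \cong \ffunctorify{L_E}$. Since $E$ has no free variables, every $\ell \in L_E$ is a total map $2^* \to \cG_{i,k}$, and the directions of $\ffunctorify{L_E}$ over $\ell$ are exactly $\Strat(\ell)$.

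Next, I would establish that $W \eqdef \{\ell \in \Tree(\cG_{i, k}) \mid \Strat(\ell) \neq \emptyset\}$ is itself a regular tree language. The idea is to build a non-deterministic parity tree automaton in the sense of \Cref{def:NFA} whose run on $\ell$ simulates the choice of a winning $\Even$-strategy: at $\Even$-labelled nodes the automaton uses nondeterminism to commit to one of the two children as being part of the strategy (and transitions to a rejecting sink on the other), at $\Odd$-labelled nodes it continues in both non-$\bot$ children, and visits of $\bot$-labelled nodes are forbidden along the simulated strategy; the priorities in the labels are lifted to priorities on states so that the parity acceptance condition matches the winning condition for $\Even$. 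This is essentially the classical translation of a $\mu$-calculus formula expressing ``$\Even$ wins from the root'' into a parity tree automaton, relying on positional determinacy.

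Then, since regular tree languages are effectively closed under intersection, $L_E \cap W$ is regular and computable from $E$. To finish, I would observe that $\ffunctorify{L_E \cap W}$ has shape $\{\ell \in L_E \mid \Strat(\ell) \neq \emptyset\}$ and fibrewise-agreeing directions with $\ffunctorify{L_E}$; this is precisely the corestriction of the container map underlying $\ffunctorify{L_E}$ to its set-theoretic image, which by \Cref{def:answerablePart} is $\mkAnswerable(\ffunctorify{L_E}) \cong \mkAnswerable(\ffunctorify{E})$.

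The hardest part is checking the correctness of the automaton for $W$ against the concrete definitions of \Cref{def:NFA} and \Cref{def:gamingAut}; the underlying idea is standard, but the translation has enough moving parts (in particular the handling of $\bot$ sinks and the reindexing of priorities) that it warrants a careful bookkeeping argument.
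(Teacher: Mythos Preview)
Your plan is correct and follows essentially the same route as the paper: invoke \Cref{thm:zetaExprAut}, note that the set $W$ of game trees where $\Even$ wins is regular, intersect, and identify the result with the answerable part. The paper's own proof is terser (it just asserts that $W_{0,k}$ is ``easily seen to be regular'' and that regular languages are closed under intersection), whereas you spell out the automaton for $W$ and the final identification with $\mkAnswerable$ more explicitly; both are the same argument at different levels of detail.
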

\begin{proof}
Assuming $L_E$ (from~\Cref{thm:zetaExprAut}) is over the alphabet $\cG_{0, k}$,
simply note that the languages $W_{0,k} \subseteq \Tree(\cG_{0,k})$ of game trees
where $\Even$ has a winning strategy is easily seen to be regular, and regular
tree languages are closed under finite intersections.
\end{proof}
Standard automata theoretic techniques lead to an effective trichotomy that
puts a (very high) bound on what we may capture by $\zeta$-expressions.
\begin{proposition}
  \label{prop:trichotomy}
For any regular game tree language $L \subseteq \Tree(\cG_{0,k})$, we are in one of
the following three cases:
\[\ffunctorify{L} \equiv 0 \qquad \unit \le \ffunctorify{L} \le \WS{(\bfS^0_2)_k} \quad \text{or} \quad \ffunctorify{L} \equiv 1\]
Furthermore, one can compute in which case we are and the appropriate reductions from
an automaton recognizing $L$.
\end{proposition}
\begin{proof}
Given any automaton for $L \subseteq \Tree(\cG_{0,k})$,
first, decide whether $L$ is empty (in which case $\ffunctorify{L} \cong 0$),
or compute a regular tree $\ell \in L$, which yields a reduction $\unit \le \ffunctorify{L}$.
Next, decide if $L \setminus W$ is inhabited. If it is, compute a regular tree $\ell' \in L \setminus W$,
which yields a reduction $1 \le \ffunctorify{L}$.
Otherwise, $L \cap W = L$, then $\ffunctorify{L}$ is answerable and we have
  a trivial reduction $\ffunctorify{L} \le \WSCantor{(\bfS^0_2)_{k}}$.
\end{proof}

This means in particular that $\mkAnswerable(\ffunctorify{L})$ cannot be a problem strictly
higher than $\WSCantor{(\bfS^0_2)_k}$s in the Weihrauch lattice.
The reductions $\ffunctorify{L} \le \WS{(\bfS^0_2)_k}$ are
quite trivial and do not guarantee that $k$ is picked optimally. So we feel
motivated to ask the following question inspired by previous work on
algebraic syntaxes for operators on Weihrauch problems~\cite{theoryWeiTimes, pradic25}.

\begin{question}
\label{q:existsReduction}
Given automata recognizing $L, M \subseteq \Tree(\cG_{0,k}, \chi)$, can we
computably decide whether the following hold or not?
\[\forall P \in \Container(\ReprSp)^\chi. \qquad \ffunctorify{L}(P) \le \ffunctorify{M}(P)\]
What about the cases where $L$ and $M$ arise from $\zeta$-expressions (as per~\Cref{thm:zetaExprAut})?
\end{question}

A positive solution to the question above with $\chi = \emptyset$ would in particular allow
us to compute, from any automaton for some $L \subseteq \Tree(\cG_{0,k})$,
the optimal $m \le k$ such that $\ffunctorify{L} \le \WS{(\bfS^0_2)_m}$ when it
exists. This is reminiscent of the Rabin-Mostowski index problem asking whether optimizing
for indices for infinite tree automata is doable computably.
This is a long-standing open problem, although
solved positively for interesting subclasses of
all tree automata, including deterministic ones~\cite{FMS16,niwinski2021guidable,idir2025using}.
However, \Cref{q:existsReduction} and its variants also involve
thinking about reductions rather than pure language recognition, so a solution
might also involve taming some non-trivial continuous infinite tree transductions.

\bibliography{bi}

\appendix

\section{Supplementary background material}

\subsection{On notations and strength}

\begin{convention}
In the sequel, we will prioritize the ``container'' terminology, but may also
use simultaneously the terminology ``polynomial functor''. This is because
we are interested in endofunctors over the category of containers that happen
to be polynomials; that they are representable themselves by containers is mostly
true, but only relevant insofar as they have nice properties.
When we talk of non-dependent containers $P \in \Container(\cC)$, we will by default regard
$P : A \to I$ as an internal family $(A_i)_{i : I}$ of $\cC$, so formally, as a morphism of $\cC$.
We generally use the letters $P, Q, R\ldots$ to denote non-dependent containers,
$I, J, K\ldots$ for objects that are meant to be shapes and $A,B,C$ for objects
that are meant to collect directions.
\end{convention}

All polynomial functors $\functorify{P}$ admit a natural transformation
$\mathrm{strength}_{X,Y} : \functorify{P}(X) \times Y \to \functorify{P}(X \times Y)$.
A natural transformation $\eta_X : \functorify{P}(X) \to \functorify{Q}(X)$ is
called \emph{strong} if the following diagram commutes
\[\begin{tikzcd}[column sep=huge]
	{\functorify{P}(X) \times Y} & {\functorify{P}(X \times Y)} \\
	{\functorify{Q}(X) \times Y} & {\functorify{Q}(X \times Y)}
	\arrow["{\mathrm{strength}_{X,Y}}", from=1-1, to=1-2]
	\arrow["{\eta_X \times \id_Y}"', from=1-1, to=2-1]
	\arrow["{\eta_{X \times Y}}", from=1-2, to=2-2]
	\arrow["{\mathrm{strength}_{X,Y}}"', from=2-1, to=2-2]
\end{tikzcd}\]
With this we can make formal sense of the equivalence between containers in elcccs
and polynomial endofunctors (\Cref{prop:contPolyEquiv}).

\subsection{On Weihrauch reducibility}

\mysubparagraph{Monoidal products over problems}
Here is a more detailed explanation of the monoidal products and their units
seen as problem transformers, assuming we have some containers $P : A \to I$ and $Q : B \to J$
around:
\begin{itemize}
\item $0$ is the problem which has no questions nor answers
\item $1$ is the problem with a single trivial question with no answers
\item $\unit$ is the problem with a singe trivial question with a single trivial
\item $P + Q$ is the problem whose questions $k : I + J$ are either a question to $P$ or a
question to $Q$, and which returns an answer to $k$
\item a question for $P \times Q$ is a pair of questions $(i, j) : I \times J$
and an answer is either an answer to $i$ or an answer to $j$ (as well as a tag telling us
which question is answered)
\item a question for $P \tensor Q$ is a pair of questions $(i, j) : I \times J$,
and an answer consists of an answer to $i$ and of an answer to $j$
\item a question for $P \star Q$ is a pair $(j, f)$ %
where $j$ is a $Q$-question and $f$ is a function turning a $Q$-answer to $j$ into a $P$-question.
An answer to $(j, f)$ is a pair $(b, a)$ %
of an answer $b$ to $j$ and an answer $a$ to $f(b)$. Intuitively, there is a
reduction $R \to P \star Q$ if and only if it is possible to solve $R$ using
an oracle call to $Q$ and then an oracle call to $P$.
\end{itemize}

\mysubparagraph{The nice structure of $\ReprSp$} Without going into details, the reason this category is cartesian closed
and has nice structure is because it can interpret the full $\lambda$-calculus.
This is because one may encode any partial computable maps
$f : \Baire \partto \bbN$ (and a fortiori $\Baire \partto \Baire$ via
$\Baire \cong \bbN \times \Baire$)
as a (potentially infinite) binary tree $t$ whose leaves are labelled by elements
of $\bbN$, such that that $f(p)$ is the label of the leaf the infinite path
leads to in $t$ (if $p$ does not lead to a leaf, $f$ is undefined on $P$).
Given a standard embedding of such trees in $\Cantor$ as an oracle, a type 2 
Turing machine can compute $f$.

One thing that makes type 2 computability sometimes nicer to work with than the analogue for type 1 computability is that $\Baire$ space, its subspaces and continuous functions between them can be faithfully represented in this setting. Indeed, any partial continuous map $\Baire \to \Baire$ can be implemented by a type 2
Turing machine with access to a suitable oracle. In contrast, $\Cantor$ can
only have computable points in the type 1 setting; an unfortunate side
effect of that is that it ends up being not compact due to the
existence of Kleene trees~\cite{vanOosten}.
Of course, issues of being able to faithfully represent classical objects
do show up at higher orders.

Aside from the structure
afforded by elcccs and fixpoints, the only tool we will employ to introduce new represented
space will essentially be taking subspace of known spaces (e.g., $\Baire$, spaces of finite sequences, infinite sequences of elements of $\Cantor$,
$\bbN$ and $\Ninfty$ can all be regarded as subspaces of $\Cantor$ in a standard
way).

\mysubparagraph{Why we consider $\rpReprSp$}
The following example captures why computable analysts do step out of
$\ReprSp$ when looking at the strength of problems inspired by computable
topology.

\begin{example}
The Sierpi\'nski space $\Sierp$ can be computed in $\ReprSp$ as the coequalizer of the
identity and successor maps $\id_{\Ninfty}, \underline{\Succ} : \Ninfty \to \Ninfty$. As usual, call $\bot_\Sierp$ and $\top_\Sierp$
the equivalence classes of $\infty$ and the zero element.
Then we have a morphism from $\LPO$ to the container $\mathrm{BoolSierp} : 2 \to \Sierp$ defined by
$\mathrm{BoolSierp}(0) = \bot_\Sierp$ and $\mathrm{BoolSierp}(1) = \top_\Sierp$ but not the
other way around. However, computable analysts often want to consider those
problems equivalent.
\end{example}

\mysubparagraph{More on answerable parts}
While $\mkAnswerable$ is not functorial, we have the following interactions
between $\mkAnswerable$ and the monoidal products.

\begin{proposition}
For any containers $P$ and $Q$ over $\ReprSp$, we have
\[
\begin{array}{lcl}
  \mkAnswerable(0) &\cong& \mkAnswerable(1)~~ \cong~~ 0
\\
  \mkAnswerable(\unit) & \cong & \unit
\\
\mkAnswerable(P \times Q) &\ge&
\mkAnswerable(P) \times \mkAnswerable(Q)
\\
\mkAnswerable(P + Q) &\cong& \mkAnswerable(P) + \mkAnswerable(Q)
\\
\mkAnswerable(P \tensor Q) &\cong&
\mkAnswerable(P) \tensor \mkAnswerable(Q)
\\
\mkAnswerable(P \star Q) &\cong&
\mkAnswerable(P) \star \mkAnswerable(Q)
\end{array}
\]
\end{proposition}

In the sequel, we will repeatedly exploit one generalization of this to the
infinite parallelization:
\[ \mkAnswerable(\widehat{P}) \qquad \cong \qquad \widehat{\mkAnswerable(P)}\]

\subsection{On $\mu$-expressions and automata}

$\mu$-expressions correspond to certain regular tree languages
in the sense of automata theory. In order to make that formal and accomodate
expressions with free variables, we will adopt a definition of tree automata
allowing to recognize trees with holes in the spirit of~\cite{FMS16}.
For any two disjoint sets $\Gamma$ and $\chi$, 
define $\Tree(\Gamma, \chi)$ to be the set of 
partial maps
$\ell : 2^* \partto \Gamma \uplus \chi$ such that $\ell(\varepsilon)$ is defined,
whenever $\ell(w) \in \Gamma$, $\ell(w0)$ and $\ell(w1)$ are both defined, and
$\ell(wi)$ is not defined when $\ell(w) \in \chi$.
Paths $w$ with $\ell(w) \in \chi$ are called \emph{exits} of $\ell$. For $x \in \chi$,
let us define $\exits_x(\ell) \eqdef \ell^{-1}(x)$ and $\exits(\ell) \eqdef
\bigcup\limits_{x\in \chi} \exits(x)$.
For short, write $\Tree(\Gamma)$ for $\Tree(\Gamma, \emptyset)$ in the sequel.

\begin{definition}
\label{def:NFA}
A non-deterministic infinite tree automaton with priorities $(i, k)$ and
set of exits $\chi$ over a
finite alphabet $\Gamma$ is a
tuple $\cA = (Q, I, \Delta, F, c)$ where:
\begin{itemize}
\item $Q$ is a finite set of states
\item $I \subseteq Q$ is a set of initial states
\item $\Delta : \Gamma \times Q \times 2 \to \powerset(Q)$ is a transition
function taking as input a letter $a \in \Gamma$, a direction $i \in 2$, a
state $Q$ and outputs a set of states
\item $F \subseteq \chi \times Q$ is a set of legal final configurations
\item $c : Q \to \{i, \ldots, k\}$ assigns to each state a priority.
\end{itemize}
\end{definition}
We say that $\ell \in \Tree(\Gamma, \chi)$ is
recognized by $\cA$ if we have a map $\mathrm{run} : \dom(\ell) \to Q$ such that
\begin{itemize}
\item $\mathrm{run}(\varepsilon) \in I$
\item for every $w \in 2^*$ and $i \in 2$ such that $wi \in \dom(\ell)$,
$\mathrm{run}(w i) \in \Delta(\ell(w), \mathrm{run}(w), i)$
\item for any maximal $w \in \dom(\ell)$, $(\ell(w), \mathrm{run}(w)) \in F$
\item for any path $p \in \Cantor$ through $\dom(\ell)$, $\limsup\limits_{n \to + \infty} c(\mathrm{run}(p[n]))$ is
even.
\end{itemize}

Assuming $\Gamma$ and $\chi$ are finite, $\Tree(\Gamma, \chi)$ can be regarded
as a subspace of $\Baire$ in a standard way; we thus assume that it is a
represented space. We write $\cL(\cA)$ for the subspace of $\Tree(\Gamma,\chi)$
recognized by the automaton $\cA$. When $L = \cL(\cA)$ for some automaton,
we call $L$ a \emph{regular tree language}. Regular tree languages are
effectively closed under boolean operations, projections and there are
effective procedures to decide emptiness and compute witnesses of non-emptiness
(see e.g.~\cite[\S 14.3]{Toolbox} for an account of the theory).

Now, for any $L \subseteq \Tree(\Gamma, \chi)$, we can associate a one-dimensional
$n$-ary container $\containerify{L}$:
\[\begin{tikzcd}
	\chi && {\sum\limits_{\ell \in L} \exits(\ell)} && L
    \arrow["{\text{evaluation}}"', from=1-3, to=1-1]
	\arrow["{\text{first projection}}", from=1-3, to=1-5]
\end{tikzcd}\]
The induced polynomial functor $\functorify{L}$ applied to a tuple $\vec X$
yields the set of trees in $L$ where leaves are labelled by elements of an
$X_i$.

\begin{proposition}
\label{prop:muExprAut}
For any expression $E \in \muExpr(\vec X)$, we can compute a regular tree language
$M_E \subseteq \Tree(2, \{\vec X\})$ such that $\containerify{E} \cong \containerify{M_E}$
in represented spaces.
\end{proposition}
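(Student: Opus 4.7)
The plan is to proceed by structural induction on $E \in \muExpr(\vec X)$, constructing for each $E$ a parity tree automaton $\cA_E$ over the alphabet $\Gamma = 2$ with exit set $\vec X$ as in \Cref{def:NFA}, setting $M_E \eqdef \cL(\cA_E)$, and checking simultaneously that $\containerify{E} \cong \containerify{M_E}$.

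The base case $E = X_i$ is immediate: take a single-state automaton whose unique initial state is such that $(X_i,q_0) \in F$, so that $M_{X_i}$ is the singleton language $\{\ell\}$ with $\dom(\ell) = \{\varepsilon\}$ and $\ell(\varepsilon) = X_i$. For $E_1 \times E_2$ and $E_1 + E_2$, we fix a fresh initial state that reads the root letter of $2$ and then hands off control to the initial states of $\cA_{E_1}$ and $\cA_{E_2}$ on the two subtrees. The distinction between product and coproduct is captured by convention on the root label and a dummy subtree for the unused branch; the priority functions of the sub-automata are shifted uniformly into a common range. The isomorphism $\containerify{M_{E_1\times E_2}} \cong \containerify{E_1} \times \containerify{E_2}$ (and similarly for $+$) then follows from tracking exits through the root split.

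For the fixpoints, assume inductively that $\cA_E$ recognizes $M_E \subseteq \Tree(2, \{X\} \cup \vec Y)$. We build $\cA_{\mu X. E}$ and $\cA_{\nu X. E}$ with exit set $\vec Y$ by redirecting every transition that would enter a final configuration labelled $X$ back to an initial state of $\cA_E$. Whether infinite repetitions of this feedback loop are accepting is controlled entirely by the priority assigned to the redirected transitions: in the $\mu$ case we pick an odd priority strictly larger than all those occurring in $\cA_E$, so that runs unfolding $X$ infinitely often are rejected; in the $\nu$ case we pick an even such priority. The number of priorities needed grows with the alternation depth of $E$, which gives a finite bound for each expression.

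The main obstacle is then verifying the isomorphism $\containerify{E} \cong \containerify{M_E}$ in the fixpoint case. By \Cref{prop:contPolyEquiv} this reduces to showing that the represented space underlying $\containerify{M_{\mu X. E}}(\vec Y)$ is the carrier of the initial algebra of $Z \mapsto \containerify{E}(Z, \vec Y)$ and dually for $\nu$. The key observation is that an accepted run of $\cA_{\mu X. E}$ uses the feedback loop only finitely often (by the parity condition with a dominant odd priority), so that a tree in $M_{\mu X. E}$ with exits labelled in $\vec Y$ encodes precisely a well-founded stacking of unfoldings of $E$; dually, accepted runs of $\cA_{\nu X. E}$ may traverse the loop infinitely, matching the coalgebraic unfolding. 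This is essentially Santocanale's characterization of $\mu$-expressions as winning strategies in parity games~\cite{santocanale2002mu}, and the passage to $\ReprSp$ is routine given the standard encoding of $\Tree(2, \{\vec X\})$ as a subspace of $\Baire$ and the fact that the bijection above is induced by a computable parse of the accepted tree.
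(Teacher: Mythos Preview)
Your proposal is correct and follows essentially the same approach as the paper. The paper does not prove \Cref{prop:muExprAut} directly but derives it as an immediate corollary of the more general \Cref{thm:zetaExprAut} for $\zeta$-expressions; that theorem is proven by the same structural induction you outline, with the fixpoint cases handled by redirecting $Y$-exits back to the initial states and assigning a fresh dominant priority (odd for $\mu$, even for $\nu$) exactly as you describe.

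The only organizational difference worth noting is that the paper factors the connective cases through a composition lemma (\Cref{lem:zetaExprAutSubst}) and fixed finite templates for $+$, $\times$, $\tensor$ (\Cref{lem:zetaExprAutConnectives}), rather than building the combined automaton in one step as you do; this is a packaging choice rather than a different idea. One small imprecision: you speak of the ``priority assigned to the redirected transitions'', but in the automaton model of \Cref{def:NFA} priorities live on states, and accordingly the paper assigns the dominant odd priority to the initial states themselves. This does not affect correctness.
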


We shall prove a generalization of~\Cref{prop:muExprAut} in detail later
(\Cref{subsec:zetaExprAut}), so we will only discuss a couple further examples for now
to guide the intuition. The basic idea of the construction is that the alphabet
$2$ delimits a subtree of the full binary trees; so we will have
that for any $\ell \in M_E$ and $w, u \in 2^*$ that $\ell(w) = 0$
implies $\ell(wu) = 0$. In the examples, we will stick with the languages obtained
via the proof of~\Cref{thm:zetaExprAut}; those are not always the most trim encoding
of the relevant $\mu$-expressions.

First, a trivial example.

\begin{example}
  \label{ex:muExprTrivial}
We have $M_{\mu X. X} = \emptyset$ and $M_{\nu X. X}$ is a singleton containing
the labelling $\ell : 2^* \to 2$ such that $\ell(w) = 0$ if and only if $w$
contains a $1$.
\end{example}

Then following example shows how free variables are treated and can be
used as a building block in formally treating~\Cref{ex:muExprWF}.

\begin{example}
  \label{ex:muExprStream}
We have that $M_{\nu Y. X \times Y} \subseteq \Tree(2, \{X\})$ is a singleton $\{\ell_{\mathsf{Stream}(X)}\}$
with $\ell_{\mathsf{Stream}(X)}(w0) = X$, $\ell_{\mathsf{Stream}(X)}(w1) = \ell_{\mathsf{Stream}(X)}(w) = 1$ and $\ell_{\mathsf{Stream}(X)}(w01u) = 0$ for any
$w \in (01)^*1$ and $u \in 2^*$. Its denotation $\functorify{M_{\nu Y. X \times Y}} \cong \functorify{\nu Y. X \times Y}$
is the functor $X \mapsto X^\bbN$.
\end{example}

\section{Supplementary material for Section~\ref{sec:ex}}

\subsection{Fibredness of some functors}

\begin{lemma}\label{lem:coproduct-pullbacks-commute}
Let $\cC$ be an extensive category. If $P$ is the pullback of $f : X
\to I$ along $g : Y \to I$ and $Q$ is the pullback of $h : Z \to J$
along $j : W \to J$, then $P+Q$ is the pullback of $f+h$ along $g+j$
\end{lemma}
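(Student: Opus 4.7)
The approach is to exploit extensivity directly in its strongest form: the canonical functor
\[ + \; : \; \slice{\cC}{I} \times \slice{\cC}{J} \isoto \slice{\cC}{(I + J)} \]
is an isomorphism of categories, so pullbacks on one side correspond to pullbacks on the other. Since pullbacks in a product of categories are computed componentwise, transporting the two given pullbacks across this isomorphism will produce the desired pullback in a single move.

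First I would recast the two hypotheses as pullback statements in the relevant slice categories. Pullbacks in $\cC$ over a fixed base coincide with binary products in the slice over that base, so $P \to I$ is the product of $f : X \to I$ and $g : Y \to I$ in $\slice{\cC}{I}$, and $Q \to J$ is the product of $h : Z \to J$ and $j : W \to J$ in $\slice{\cC}{J}$. Equivalently, $(P \to I,\; Q \to J)$ is the product of $\bigl((X \to I),(Z \to J)\bigr)$ and $\bigl((Y \to I),(W \to J)\bigr)$ in $\slice{\cC}{I} \times \slice{\cC}{J}$.

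Next I would transport this product along the extensivity isomorphism $+$, which is full, faithful, and essentially surjective (indeed, an isomorphism), and therefore preserves and reflects all limits. The pair $(X \to I,\; Z \to J)$ is sent to $f + h : X + Z \to I + J$, the pair $(Y \to I,\; W \to J)$ to $g + j : Y + W \to I + J$, and the pair $(P \to I,\; Q \to J)$ to $P + Q \to I + J$. By preservation of products we conclude that $P + Q \to I + J$ is the product in $\slice{\cC}{(I + J)}$ of $f + h$ and $g + j$, which is exactly the statement that $P + Q$ is the pullback of $f + h$ along $g + j$ in $\cC$.

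I do not expect a genuine obstacle here, as the entire argument is transport along the extensivity isomorphism. The only routine check is that the product cone that $+$ produces really has as its structure maps the coproduct $P \to X$ summed with $Q \to Z$ (and similarly $P \to Y$ summed with $Q \to W$); this is immediate from how $+$ acts on morphisms. An alternative would be to argue directly by unpacking the universal property — given a cone $(T \to X + Z, T \to Y + W)$ with the square commuting, extensivity gives a canonical decomposition $T \cong T_I + T_J$ using the pullbacks of $T$ along the coproduct inclusions of $I + J$, and one then factors $T_I \to P$ and $T_J \to Q$ by the hypotheses — but this essentially reproves the slice equivalence and is strictly less clean.
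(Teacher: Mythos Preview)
Your argument is correct. It differs somewhat from the paper's proof, so a brief comparison is worthwhile.

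You use the extensivity isomorphism $\slice{\cC}{I} \times \slice{\cC}{J} \cong \slice{\cC}{I+J}$ directly as an equivalence of categories and transport the componentwise product across it. The paper instead computes the pullback of $f+h$ along $g+j$ explicitly, then pulls it back further along the two coproduct injections $\incopr_1, \incopr_2$ into $Y+W$; pasting of pullback squares (together with the fact that pulling $f+h$ back along $\incopr_i$ recovers $f$ or $h$) identifies these restrictions as $P$ and $Q$, and then the pullback-stability characterization of extensivity (\cite[Proposition~2.2]{carboni93extensive}) assembles them back into $P+Q$. Your route has the advantage of invoking exactly the form of extensivity stated in the paper and handling the universal property in one stroke via preservation of limits; the paper's route is slightly more concrete and makes the pullback-pasting visible, but relies on an equivalent formulation of extensivity cited from elsewhere. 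The ``alternative'' you sketch at the end is essentially the paper's argument.
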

\begin{proof}
Since $\coproj_1 \circ g = (g + j) \circ \coproj_1$, pulling $f+h$
along either gives an object $P^\prime$ isomorphic to $P$. Similarly, pulling back along $(g + j) \circ \coproj_2$ gives an object
$Q^\prime$ isomorphic to $Q$. By extensivity~\cite[Proposition
  2.2]{carboni93extensive}, $X+Y \cong P^\prime + Q^\prime \cong P + Q$.
\end{proof}

\stufffibred*

\begin{proof}
These properties follow from basic properties of polynomial functors and
extensivity:
\begin{itemize}
\item Constant functors are polynomial functors~\cite[Example 1.6
  (ii)]{GKpoly} and are cartesian since all container morphisms are
  mapped to the identity morphism, which is cartesian.
\item $+$ and $\tensor$ are clearly polynomial functors. $\tensor$
  preserves cartesian morphisms, since cartesian morphisms in
  $\Container(\cC)$ are essentially pullback squares, and both products
  and pullbacks are limits so they commute up to isomorphism. For $+$,
  we use \Cref{lem:coproduct-pullbacks-commute}.
\item Recall that $P \times Q = [P \times \id, \id \times Q]$.
  As fibred polynomials are closed under composition,
  it remains to check that the cotupling of polynomials is a fibred
  polynomial functor, but this again holds by
  extensivity~\cite[Proposition 2.2]{carboni93extensive}.
\end{itemize}
\end{proof}

\trianglefibred*

\begin{proof}
Given containers $P : A \to I$, $Q : B \to J$, then the composite $Q \star P$ is
given by the obvious map
\[\sum_{i : I} \sum_{a : A_i} \sum_{f : A_i \to J}
B_{f(a)} \qquad \longto \qquad \sum_{i : I} \left(A_i \to B\right)\]
(assuming we set $A_i = P^{-1}(i)$ and $B_j = Q^{-1}(j)$).
The base of the fibred polynomial functor $J \mapsto \sum\limits_{i : I} J^{A_i}$ is clearly
polynomial, as are the functors $ (B_j)_{j : J} \mapsto \left(\sum\limits_{a : A_i} B_{f(a)}\right)_{(i, f) : \sum\limits_{i : I} J^{A_i}}$.

That the functor is fibred follows since if $\phi : P \to P^\prime, \psi : Q \to Q^\prime$
are cartesian, then so is $\psi \star \phi$ \cite[Proposition 6.88]{polybook},
and $\id$ is an isomorphism, hence cartesian.
\end{proof}

\subsection{Fixpoints}

\algebraMuTwo*
\begin{proof}
Let us reason informally in the internal language of elcccs with $\cW$-types, i.e.,
EMLTT augmented with $\cW$-types. Within EMLTT, for any families $A$ and $B$,
we can write a type $\ttIso{A}{B}$ that is inhabited whenever $A$ and $B$
are isomorphic, a fortiori with $\prod_{x : \mu F_0} \ttIso{S_0(x)}{S_1(x)}$
holding if and only if $S_0 \cong S_1$.
So as to exploit the internal language,
let us assume that
\[ F_0(X) = \prod_{i :  I} X^{A_i} \enspace \text{and} \enspace
F((Y_x)_{x  :  X}) = \left(\sum_{j  :  J_{i,f}} \prod_{x  :  X} B_{i,f,j,x} \to Y_x\right)_{(i, f)  :  F_0(X)}\]
for a suitable type $I$ and families $(A_i)_{i : I}$, $(J_{i,f})_{(i,f) : F_0(X)}$ and $(B_{i,f,j,x})_{j : J_{i,f}, x : X}$.
Recall $\mu F_0$ is a $\cW$-type with constructor $\alpha$, so
we may proceed to show $\prod_{x  :  \mu F_0} \ttIso{S_0(x)}{S_1(x)}$ is
inhabited by recursion over $x$.
For this, it suffices to build a function in
$\ttIso{S_0(\alpha(i, f))}{S_1(\alpha(i,f)))}$ assuming $i : I$, $f : A_i \to \mu F_0$
and that we inductively have an inhabitant of
\[\prod_{a : A_i} 
\ttIso{S_0(f(a))}{S_1(f(a))}\]
So let us do half of that and conclude by symmetry.
\[\begin{array}{lcll}
S_0(\alpha(i, f))
&\cong& 
F_{\mu F_0}(S_0)(i,f)
& \text{since $F_{\mu F_0}(S_0) \cong S_0 \circ \alpha$}
\\
&\cong&
\sum\limits_{j  :  J_i} \prod\limits_{x  :  \mu F_0} B_{i,f, j, x} \to S_0(x)
& \text{by definition} \\
&\cong&
\sum\limits_{j  :  J_i} \prod\limits_{x  :  \mu F_0} B_{i,f, j, x} \to S_1(x)
& \text{by functoriality of $\Pi$, $\Sigma$ and $S_0(x) \cong S_1(x)$} \\
\end{array}
\]
\end{proof}

\fpexistmulti*

\begin{proof}
The following proof is only for $\mu$, similar arguments holds for
$\nu$ and $\zeta$.

\mysubparagraph{Functoriality}
Since $F$ is fibred polynomial, so is $F(f_0, \dots, f_{k-1}, -)$ for any
objects $f_i : X_i \to I_i$ in $\Container(\cC)$ (see, e.g., \Cref{lem:stufffibred}),
so it is a well-defined map on objects.

Suppose we have objects $g_i : Y_i \to J_i $ and morphisms
$(\varphi_i, \psi_i)$ in $\Container(\cC)$ and let
\[
\begin{array}{lclcl}
  \mu F(f_0, \dots, f_{k-1}, -) &:& \nu F_{\mu F_0(I_0, \dots, I_{k-1}, -)}(f_0, \dots, f_{k-1}, -) &\longto& \mu F_0(I_0, \dots, I_{k-1}, -)\\
  \mu F(g_0, \dots, g_{k-1}, -) &:& \nu F_{\mu F_0(J_0, \dots, J_{k-1},
  -)}(g_0, \dots, g_{k-1}, -) &\longto& \mu F_0(J_0, \dots, J_{k-1}, -)
\end{array}
\]
be
the respective fixpoints.
On the bases, we have two $F_0(I_0, \dots, I_{k-1}, -)$-algebra
morphisms: the initial one, and one obtained from the initial
$F_0(J_0, \dots, J_{k-1}, -)$-algebra by composing with
$F_0\left(\varphi_0, \dots, \varphi_{k-1}, \psi_{k-1}, \id\right)$.
Thus we obtain a unique algebra morphism, which we take to be the forward map of
$\mu F\left(\left(\varphi_0, \psi_0\right), \dots, \left(\varphi_{k-1}, \psi_{k-1}\right)\right)$.
Similarly, we obtain the backwards map by doing the same thing on the
domains, but using that we have terminal coalgebras and the $\psi_i$. Functoriality of
this construction is guaranteed by the uniqueness of the algebra and
coalgebra maps.

\mysubparagraph{Fibred polynomial}
It remains to see that this construction gives a fibred functor which
is polynomial on the base and the fibres.

To check that we have a morphism of fibrations, we define the base functor
by
\[(\mu F)_0(I_0, \dots, I_{k-1}) \eqdef \mu F_0(I_0, \dots, I_{k-1}, -)\]
Then $(\mu F)_0 \circ \shape = \shape \circ \mu F$ holds by construction.
$(\mu F)_0$ is a ``fixpoint functor'' of the polynomial functor $F_0$ and so
is itself polynomial~\cite[Theorem 14]{gambino_wellfounded_2004}. To see that
$\mu F$ preserves cartesian morphisms, recall that cartesian morphisms
$(\varphi, \psi)$ are those for which $\psi$ is an isomorphism. In
this case, the unique coalgebra morphism is an isomorphism, since we
can do the same construction in the opposite direction.

Similarly, $\mu F(X_0, \dots, X_{k-1}, -) = \nu F_{(\mu F)_0}(X_0, \dots, X_{k-1}, -)$
is a fixpoint functor, and hence a polynomial functor.
\end{proof}

\section{Supplementary material for Section~\ref{sec:operators}}
\subsection{Parallelizations}

\begin{proposition}
The $\nu$-fixpoint and $\zeta$-fixpoint of $X \mapsto P \times X$ are respectively (isomorphic to)
the following containers:
\[
  \sum_{s : I^\bbN} \sum_{n : \bbN} A_{s(n)} \quad \xrightarrow{\pi_1} \quad I^\bbN
\]
and
\[\sum_{s : I^\bbN} \sum_{x : \Ninfty} \prod_{n : \bbN} \left(x = \underline{n} \to A_{s(n)}\right)\quad \xrightarrow{\pi_1}\quad I^\bbN\]
\end{proposition}
\begin{proof}
We already argued that the shapes were $I^\bbN$. Now following~\Cref{thm:fp-exist}
let us work in type theory, assuming that $P : A \to I$ represents the internal family $(A_i)_{i : \bbN}$.
In this proof, let us write $\Succ : \bbN \to \bbN$ for the successor map. 
It suffices to show that $\left(\sum_{n : \bbN} A_{s(n)}\right)_{s : I^\bbN}$ and
$\left(\sum_{x : \Ninfty} \prod_{n : \bbN} \left(x = \underline{n} \to A_{s(n)}\right)\right)_{s : I^\bbN}$
are respectively carriers for an initial algebra and a terminal coalgebra for the
functor $F : \slice{\cC}{I^\bbN} \to \slice{\cC}{I^\bbN}$ taking $(B_s)_{s : I^\bbN}$ to $(A_{s(0)} + B_{s \circ \Succ})_{s : I^\bbN}$.

Now $\left(\sum_{n : \bbN} A_{s(n)}\right)_{s : I^\bbN}$ is the carrier of
an $F$-algebra whose underlying map is built using the following isomorphism
that exists fibrewise in $s$ by extensivity of $\cC$:
\[A_{s(0)} + \sum_{n : \bbN} A_{s(n+1)} \qquad \cong 
\qquad 
\sum_{n : \bbN} A_{s(n)}\]
To see that it is initial, it suffices to show that for any $F$-algebra
$\alpha_s : A_{s(0)} + B_{s \circ \Succ} \to B_s$, there \emph{purely}\footnote{In the
type theoretic sense~\cite{hottbook}, meaning we can explicitly construct
the witness in function of the parameters in type theory.} exists a unique
family of maps $f_s : \sum_{n : \bbN} A_{s(n)} ~~ \to ~~ B_s$ such that, for every $n : \bbN$,
\[ f_s(0, a) = \alpha_s(\inl(a)) \qquad \text{and} \qquad f_s(n + 1, a) = \alpha_{s}(\inr(f_{s}(n, a)))\]
This is easy to show by induction over $n : \bbN$.

Similarly, to check that
$\left(\sum_{x : \Ninfty} \prod_{n : \bbN} \left(x = \underline{n} \to A_{s(n)}\right)\right)_{s : I^\bbN}$
is the carrier of the obvious terminal $F$-coalgebra, we need to check that for any
$F$-coalgebra $\gamma_s : B_s \to A_{s(0)} + B_{s \circ \Succ}$, there purely exists
a unique $f_s : B_s \to \left(\sum_{x : \Ninfty} \prod_{n : \bbN} \left(x = \underline{n} \to A_{s(n)}\right)\right)$
such that, for any $b : B_s$:
\begin{itemize}
\item either $\gamma_s(b) = \inl(a)$ for some $a : A_{s(0)}$ and $f_s(b) = (\underline{0}, h)$
  with $h : \prod_{n : \bbN} \left( \underline{0} = \underline{n} \to A_{s(n)}\right)$
constant and equal to $a$.
\item or $\gamma_s(b) = \inr(b')$ for some $b' : B_{s \circ \Succ}$,
$f_{s \circ \Succ}(b') = (x, h \circ \Succ)$ and
$f_s(b) = (\underline{\Succ}(x), h)$
for some $x : \Ninfty$ and $h : \prod_{n : \bbN} \left( \underline{\Succ}(x) = \underline{n} \to A_{s(n)}\right)$.
\end{itemize}
This determines the first component $\ell_s \eqdef \pi_1 \circ f_s : B_s \to \Ninfty$ as we
have
\[\gamma_s(b) = \inl(a) ~~ \Leftrightarrow ~~ \ell_s(b) = \underline{0}
  \quad\text{and}\quad
\gamma_s(b) = \inr(b') ~~ \Leftrightarrow ~~ \ell_s(b) = \underline{\Succ}(\ell_{s \circ \Succ}(b'))\]
which uniquely determines $\ell_s$ by corecursion over the conatural numbers.
Up to currying and swapping arguments, the second component $r_s$ of $f_s$ has type
\[ r_s : \prod_{n : \bbN} \prod_{b : B_s} \left(\ell_s(b) = \underline{n} \to A_{s(n)}\right)\]
and is inductively determined by
\[
\begin{array}{lcl !\enspace l}
  r_s(0)(b)(e) &=& a &\text{when $\gamma_s(b) = \inl(a)$}\\
  r_s(n+1)(b)(e) &=& r_s(n)(b')(\mathrm{Sinj}(e)) & \text{when $\gamma_s(b) = \inr(b')$}
\end{array}
\]
where $\mathrm{Sinj}$ is the obvious term of type $\underline{\Succ}(\ell_{s \circ \Succ}(b')) = \underline{\Succ}(\underline{n}) \to
\ell_{s \circ \Succ}(b') = \underline{n}
$. This can be improved to a recursive definition over $n : \bbN$ by pattern-matching
over $\gamma_s(b)$, whose remaining cases admit absurdity proofs due to $e$.

\end{proof}

Let us state the other straightforward results evoked in this subsection without proofs.

\begin{proposition}
The $\nu$-fixpoint and $\zeta$-fixpoint of $X \mapsto P \tensor X$ are respectively (isomorphic to)
the following containers:
\[
  0 \longto I^\bbN
\qquad
\text{and}
\qquad
A^\bbN \xrightarrow{P^\bbN} I^\bbN\]
In particular\footnote{For any $P$, $1 \star P$ interpreted as a problem is the problem
with the same questions as $P$, but with no answers whatsoever.} $\nu(X \mapsto P \tensor X) \cong 1 \star \widehat{P}$.
\end{proposition}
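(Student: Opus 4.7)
The plan is to invoke~\Cref{thm:fp-exist} with the fibred polynomial endofunctor $F : X \mapsto P \tensor X$, whose fibredness and base $F_0 : Y \mapsto I \times Y$ come from~\Cref{lem:stufffibred}. The shape of each fixpoint is thus $\nu F_0 \cong I^\bbN$, and writing $c : I^\bbN \isoto I \times I^\bbN$, $s \mapsto (s(0), s \circ \Succ)$, for the associated coalgebra structure map, the directions are to be computed in $\slice{\cC}{I^\bbN}$ as fixpoints of the endofunctor $H$ obtained by composing the fibrewise action of $F$ with pullback along $c$. Unpacking the definition of $\tensor$, $H$ acts on objects by $H(Y)_s = A_{s(0)} \times Y_{s \circ \Succ}$, and the directions of $\nu F$ (resp.~$\zeta F$) are then the initial algebra (resp.~terminal coalgebra) of $H$ in the slice.

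For the $\nu$-fixpoint, I would observe that the constantly-empty family $(0)_{s : I^\bbN}$ is an $H$-algebra, since $H(0) \cong 0$ by extensivity, and conclude initiality from the fact that morphisms out of the initial object of $\slice{\cC}{I^\bbN}$ are unique. This yields directions $0$ and therefore the container $0 \to I^\bbN$. For the $\zeta$-fixpoint, I would guess the family $\left(\prod_{n : \bbN} A_{s(n)}\right)_{s : I^\bbN}$, equip it with the coalgebra structure induced by the canonical isomorphism $\prod_n A_{s(n)} \cong A_{s(0)} \times \prod_n A_{(s \circ \Succ)(n)}$, and verify its universal property by a corecursion over $\bbN$ closely mimicking the one in the preceding proposition. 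The universal property verification is the most technical step, but it is standard. The total space $\sum_{s : I^\bbN} \prod_n A_{s(n)}$ then identifies with $A^\bbN$ via the distributivity isomorphism available in any locally cartesian closed category, under which the projection onto the index becomes precisely $P^\bbN$.

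For the final claim $\nu(X \mapsto P \tensor X) \cong 1 \star \widehat{P}$, I would directly compute $1 \star \widehat{P}$ from the formula for $\star$ in~\Cref{fig:monoidalOperators}: its shape is $\sum_{s : I^\bbN} 1^{\prod_n A_{s(n)}} \cong I^\bbN$ and its directions are $\sum_{a : A^\bbN} 0^{\prod_n A_{P(a_n)}} \cong 0$, the latter because each exponent $\prod_n A_{P(a_n)}$ is inhabited by the witness $a$ itself. Modulo carefully tracking the sides in the definition of $\star$, I do not anticipate any additional obstacles.
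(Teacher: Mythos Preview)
Your proposal is correct and follows exactly the recipe the paper lays out in the preceding proposition on $X \mapsto P \times X$; the present proposition is stated without proof, so your argument is essentially what the omitted proof would be. One minor simplification for the final claim: the footnote already observes that any answer to $1 \star R$ must contain a $1$-answer, of which there are none, so the directions of $1 \star \widehat{P}$ are $0$ immediately and your inhabitation argument is not needed.
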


\begin{proposition}
$P^*$ is isomorphic to
$\sum\limits_{n : \mathbb{N}} \underbrace{P \tensor \ldots \tensor P}_{\text{$n$ times}}$.
On the other hand, $\nu(X \mapsto \unit + P \tensor X)$
and $\zeta(X \mapsto \unit + P \tensor X)$ are respectively (isomorphic to) 
\[
  \begin{array}{lcl}
  \sum\limits_{n : \bN} A^n &\longto& \sum\limits_{m : \Ninfty} \prod\limits_{\substack{k : \bN\\k < m}} I^k \\
  (n, a) &\longmapsto& (\underline{n}, P \circ a)
  \end{array}\]
and
\[  \begin{array}{lcl}
  \sum\limits_{m : \Ninfty} \prod\limits_{\substack{k : \bN\\k < m}} A^k 
  &\longto&
  \sum\limits_{m : \Ninfty} \prod\limits_{\substack{k : \bN\\k < m}} I^k\\ 
  (x, a) &\longmapsto& (x, P \circ a)
  \end{array}
\]
\end{proposition}

\subsection{Iterations}

\boundedDiamond*
\begin{proof}
It is straightforward to define an algebra map $s : \unit + S \star P \to S$ by using that
$\underbrace{P \star \ldots \star P}_{\text{$0$ times}} \cong I$ and
that $\star$ distributes over $+$.
\begin{equation*}
  \unit + (\sum\limits_{n : \mathbb{N}} \underbrace{P \star \ldots \star P}_{\text{$n$ times}})
  \star P
  \cong \unit + \sum\limits_{n : \mathbb{N}} \underbrace{P \star
      \ldots \star P}_{\text{$n+1$ times}}
  \cong \sum\limits_{n : \mathbb{N}} \underbrace{P \star \ldots \star P}_{\text{$n$ times}}
\end{equation*}

Given another algebra $(A, a : \unit + A \star P \to A)$, we can
define a family of maps $s_i : S_i \to A$ by induction, where
$S_i = \sum\limits_{n=0}^i \underbrace{P \star \ldots \star P}_{\text{$n$ times}}$.
First, set $s_0 : S_0 \cong \unit \xrightarrow{\incopr_1} \unit + A
\star P \xrightarrow{a} A$. Then we inductively define
$s_{i+1} : S_{i+1} \cong \unit + (S_i \star P) \xrightarrow{I + s_i \star P} I + A
\star P \xrightarrow{a} A$. The unique algebra morphism $(S, s) \to (A, a)$ is
given by the coproduct of $s_0$ and $s_i \circ \incopr_i :
\underbrace{P \star \ldots \star P}_{\text{$i$ times}} \to A$ for all
$i > 0$, and hence $(S, s)$ is an initial algebra for the functor.
\end{proof}

\kleeneinduction*
\begin{proof}
Let us call $F$ the functor $X \mapsto Q + X \star P$
for some containers $P : A \to I$ and $Q : B \to J$.
In the rest of the proof, we will work in Martin-L\"of type theory with $\cW$-types;
in this context, let us call $(A_i)_{i : I}$ and $(B_j)_{j : J}$ the respective
type families corresponding to $P$ and $Q$.
We take the following
isomorphism to be the $F$-algebra structure on $Q \star P^\diamond$ that we claim
to be initial:
\[
(\inAlg^\diamond, \outCoAlg^\diamond) : F(Q \star P^\diamond) =
Q + Q \star P^\diamond \star P
\cong
Q \star (\unit + P^\diamond \star P) \cong Q \star P^\diamond\]

Suppose we have an $F$-algebra $\left(R , (\varphi, \psi)\right)$
for some container $R : C \to K$. In type-theoretic notations, we have
\[ \varphi = [\varphi_1, \varphi_2] : J + \sum_{i : I} K^{A_i} ~~ \longto ~~ R\]
Let us also write $\left(\psi_j : C_{\varphi_1(j)} \to B_j\right)_{j : J}$ and
$(\psi_{i,f} : C_{\varphi_2(i,f)} \to \sum\limits_{a : A_i} C_{f(a)})_{(i, f) : \sum_{i : I} K^{A_i}}$ the
families of maps induced by $\psi$.

We will explicitly define the unique algebra morphism
$(\varphi^\diamond, \psi^\diamond) : Q \star P^\diamond \to R$
by constructing the forward and backward maps $\varphi^\diamond$ and $\psi^\diamond$
in type theory with $\cW$-types.
For notational convenience, let us write $I^\diamond$ for $\shape(P^\diamond)$
and $(A^\diamond_i)_{i : I^\diamond}$ for the type family corresponding to $P^\diamond$;
recall that, by~\Cref{thm:fp-exist}, we have maps
$\mathsf{leaf} : 1 \to I^\diamond$ and
$\mathsf{node} : \sum\limits_{i : I} (A_i \to I^\diamond) \to I^\diamond$
such that $(I^\diamond, [\mathsf{leaf}, \mathsf{node}])$ is the initial algebra of
the endofunctor $X \mapsto \unit + X \star P$.
In what follows, let us bear in mind that we have
$ \shape(Q \star P^\diamond) \cong \sum\limits_{i : I^\diamond} (A^\diamond_i \to J)$
and that the direction corresponding to $(i, f) : \shape(Q \star P^\diamond)$
are $\sum\limits_{a : A^\diamond_i} B_{f(a)}$.

We define the maps by well-founded recursion on the first component of the
argument $i : I^\diamond$ for both the forward and the backwards map; up to
dependent currying, this is a standard dependent recursion.
For the forward map, we have

\[\begin{array}{rlcl}
  \varphi^\diamond :
  &
\sum\limits_{i : I^\diamond}(A^\diamond_i \to J) & \longto & K
\\
         & (\mathsf{leaf}(\bullet), f) &\longmapsto& \varphi_1(f(\bullet)) \\
         & (\mathsf{node}(i, g), f) &\longmapsto& \varphi_2\left(i, \lambda a.\;\varphi^{\diamond}\left(g(a),\lambda b.\; f(a, b)\right)\right) \\
\end{array}
\]
For the backward map, we need to define a family
\[\begin{array}{rlcl}
  \psi^\diamond_{(i, f)} : & C_{\varphi^\diamond(i, f)} &
  \longto
  & 
  \sum\limits_{a : A_i^\diamond} B_{f(a)} \\
\end{array}
\]
We again proceed by well-founded recursion on $i$. The two cases are given by
\[
\begin{array}{rlcl}
  \psi^\diamond_{(\mathsf{leaf}(\bullet), f)} : & C_{\varphi_1(f(\bullet))} &
  \longto
  & 
  \sum\limits_{\bullet : 1} B_{f(\bullet)} \\
         & r &\longmapsto & (\bullet, \psi_{f(\bullet)}(r))\\
\end{array}
\]
and, up to associativity of $\sum$ in the codomain
and with  $\beta(g,f) \eqdef \lambda a.\;\varphi^{\diamond}\left(g(a),\lambda b.\; f(a, b)\right)$
:
\[
\begin{array}{rlcl}
  \psi^\diamond_{(\mathsf{node}(i, g), f)} : & C_{\varphi_2\left(i, \beta(g,f)\right)}
 &
  \longto
  & 
  \sum\limits_{a : A_i} \sum\limits_{\alpha : A^\diamond_{g(a)}} B_{f(a,\alpha)} \\
         & r
&\longmapsto&
\begin{array}{llcll}
\mathsf{let} & (a, r') & = & \psi_{
(i, \beta(g, f))
  }(r) & \mathsf{in} \\
\mathsf{let} & (b, w) & = &
\psi^\diamond_{(g(a), \lambda b. \; f(a,b))}(r')
& \mathsf{in} \\
\multicolumn{5}{l}{~~(a, b, w)} \\
\end{array}
\end{array}
\]
We leave checking that
the unique $F$-algebra morphism
from $({Q \star P^\diamond},\allowbreak {(\inAlg^\diamond,\outCoAlg^\diamond)})$
to $(R, (\varphi, \psi))$
is
$(\varphi^\diamond, \psi^\diamond)$
to the reader.
\end{proof}

\begin{lemma}\label{lem:compose-mul}
There is a natural transformation
$\left(Q \star P\right) \times R \to (Q \times R) \star P$.
\end{lemma}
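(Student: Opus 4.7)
The plan is to construct an explicit container morphism, which can essentially be read off the strength of the polynomial functor $\functorify{P}$. Write $P : A \to I$, $Q : B \to J$, $R : C \to K$ with fibres $(A_i)_{i : I}$, $(B_j)_{j : J}$, $(C_k)_{k : K}$. Unfolding~\Cref{fig:monoidalOperators}, the source $(Q \star P) \times R$ has shape $(\sum_{i : I} J^{A_i}) \times K$ and fibre $(\sum_{a : A_i} B_{f(a)}) + C_k$ over $((i, f), k)$, while the target $(Q \times R) \star P$ has shape $\sum_{i : I} (J \times K)^{A_i}$ and fibre $\sum_{a : A_i} (B_{h_1(a)} + C_{h_2(a)})$ over $(i, h)$.

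For the forward component I take $\varphi((i, f), k) = (i, \lambda a.\, (f(a), k))$, pairing $k$ uniformly with every value of $f$. Under the identifications $\functorify{(Q \star P) \times R}(X) = \functorify{P}(\functorify{Q}(X)) \times \functorify{R}(X)$ and $\functorify{(Q \times R) \star P}(X) = \functorify{P}(\functorify{Q}(X) \times \functorify{R}(X))$, one recognises this as the strength of $\functorify{P}$. For the backward component I need
\[\psi_{((i, f), k)} : \sum_{a : A_i} (B_{f(a)} + C_k) \longto \left(\sum_{a : A_i} B_{f(a)}\right) + C_k,\]
which I build as the composite of the distributivity isomorphism $\sum_{a : A_i}(B_{f(a)} + C_k) \cong \left(\sum_{a : A_i} B_{f(a)}\right) + (A_i \times C_k)$ (valid since $\Sigma_f$ preserves coproducts, being a left adjoint) with the copairing $[\id, \pi_2]$. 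Pointwise this gives $(a, \inl(b)) \mapsto \inl(a, b)$ and $(a, \inr(c)) \mapsto \inr(c)$.

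That $(\varphi, \psi)$ forms a container morphism is immediate since $\psi$ is constructed uniformly in the source index and lives over the expected shape. Naturality in each of $P$, $Q$, $R$ can be settled either by a direct diagram chase or, more conceptually, by appealing to~\Cref{prop:contPolyEquiv}: applying $\functorify{-}$ to $(\varphi, \psi)$ yields the strength of $\functorify{P}$ whiskered through $\functorify{Q}$ and $\functorify{R}$, and since strength is natural in its two arguments while $\functorify{-}$ is itself functorial, the required naturality squares commute. I foresee no mathematical obstacle — only a notational one, in managing the several families of indices across three separate naturality squares.
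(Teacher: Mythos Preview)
Your proposal is correct and takes essentially the same approach as the paper: both give the identical explicit container morphism with forward map $((i,f),k) \mapsto (i, \lambda a.\,(f(a),k))$ and backward map sending $(a,\inl(b)) \mapsto \inl(a,b)$, $(a,\inr(c)) \mapsto \inr(c)$. Your write-up is in fact more thorough than the paper's, which simply records the maps in the internal language without the conceptual gloss via strength or the explicit discussion of naturality.
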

\begin{proof}
Taking the same conventions as in the proof of~\Cref{thm:kleeneinduction},
the following forward and backward maps are definable in the internal language:
\[\begin{array}{rlcl}
  \varphi : & \sum\limits_{i : I} (A_i \to J) \times K
  &\longto& \sum\limits_{i : I } (A_i \to J \times K) \\
         & (i, g, k) &\longmapsto& (i, \lambda c. (g(c), k)) \\\\
  \psi_{(i, g, k)} : & \sum\limits_{a : A_i}(B_{g(a)} + C_k) &\longto&
  \sum\limits_{a : A_i}B_{g(a)}
  + C_k\\
         & (c, \coproj_1(b)) &\longmapsto& \coproj_1(c, b)\\
         & (c, \coproj_2(a)) &\longmapsto& \coproj_2(a) \\
\end{array}
\]
\end{proof}

\begin{lemma}%
\label{lem:generalised-diamond-induction}
Let $P$, $Q$, $R$ be containers. 
If there is a natural transformation $R \times (Q \star P) \to Q$
then there is a natural transformation $R \times (Q \star P^\diamond) \to Q$.
\end{lemma}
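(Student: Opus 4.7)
The plan is to apply~\Cref{thm:kleeneinduction} after transposing into an internal hom. Since $\cC$ is a $\PiWM$-category and in particular locally cartesian closed, $\Container(\cC)$ is cartesian closed with respect to $\times$; let $[R,Q]$ denote the corresponding exponential. Transposition then reduces the desired morphism $R \times (Q \star P^\diamond) \to Q$ to a morphism $Q \star P^\diamond \to [R,Q]$. By~\Cref{thm:kleeneinduction}, $Q \star P^\diamond$ is the initial algebra of the endofunctor $X \mapsto Q + X \star P$, so it suffices to equip $[R,Q]$ with a $Q + (-) \star P$-algebra structure, namely a pair of morphisms $Q \to [R,Q]$ and $[R,Q] \star P \to [R,Q]$.

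For the first component, I would take the transpose of the projection $\pi_2 : R \times Q \to Q$. For the second, I would transpose once more, reducing the task to producing a morphism $R \times ([R,Q] \star P) \to Q$. This last morphism I would assemble in four steps: first use the diagonal $\Delta : R \to R \times R$ together with the symmetry of $\times$ to reach $R \times \left(([R,Q] \star P) \times R\right)$; then apply \Cref{lem:compose-mul} with $[R,Q]$ in place of $Q$ to obtain $([R,Q] \star P) \times R \to ([R,Q] \times R) \star P$; whisker the evaluation map $\mathrm{ev} : [R,Q] \times R \to Q$ with the functor $(-) \star P$, which is (fibred) polynomial and hence functorial by \Cref{lem:trianglefibred}, yielding $([R,Q] \times R) \star P \to Q \star P$; and finally close with the hypothesized morphism $R \times (Q \star P) \to Q$.

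Together these two components give an algebra structure on $[R,Q]$, so initiality of $Q \star P^\diamond$ produces a unique algebra morphism $Q \star P^\diamond \to [R,Q]$, and transposing back gives the desired $R \times (Q \star P^\diamond) \to Q$. The hard part will not be any single step but rather the bookkeeping needed to thread the associativity, symmetry, and diagonal maps of $\times$ coherently through the construction; this is routine but tedious. Conceptually, the argument just recasts the statement as an instance of diamond induction applied to the curried inequality $Q \star P \leq [R,Q]$, matching the intuition that $Q \star P^\diamond$ captures arbitrary finite sequential iterations of $P$ composed after $Q$.
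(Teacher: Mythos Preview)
Your proposal is correct and follows essentially the same approach as the paper: both curry into the exponential $[R,Q]$ (the paper writes $R \Wexp Q$), build the needed algebra structure using the diagonal, \Cref{lem:compose-mul}, evaluation, and the hypothesis, and then invoke \Cref{thm:kleeneinduction}. The only cosmetic difference is that the paper applies \Cref{thm:kleeneinduction} with $R \Wexp Q$ in the role of $Q$ and afterwards precomposes with the coercion $Q \to R \Wexp Q$, whereas you fold that coercion directly into the $Q$-summand of the algebra structure; the two organizations are interchangeable.
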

\begin{proof}
In what follows, let us denote the exponential in $\Container(\cC)$ by $\Wexp$.
First, construct a natural transformation
$R \times \left((R \Wexp Q) \star P\right) \to Q$
as follows:

\begin{align*}
R \times \left(\left(R \Wexp Q\right) \star P\right)
& \to
R \times R \times \left(\left(R \Wexp Q\right) \star P\right)
  && \text{via the diagonal map $R \to R \times R$} \\
& \to
R \times \left(\left(\left(R \Wexp Q\right) \times R\right) \star P\right)
  && \text{by \Cref{lem:compose-mul}} \\
& \to R \times \left(Q \star P\right)
  && \text{by using the evaluation map of $\Rightarrow$} \\
& \to Q && \text{by assumption.}
\end{align*}

After currying, we obtain a natural transformation
$(R \Wexp Q) \star P \to R \Wexp Q$,
so by~\Cref{thm:kleeneinduction}, we have
$R \times \left((R \Wexp Q) \star P^\diamond\right) \to Q$.
And then we can precompose by the natural transformation $Q \to R \Wexp Q$ to
conclude.
\end{proof}

\section{Supplementary material for Section~\ref{sec:ground}}

\subsection{Fixpoints and tensors alone trivialize}

We have discussed a number of examples of endofunctors on $\Container(\cC)$
that can helpfully be defined using our three fixpoint constructions.
Now can we use these fixpoints to construct interesting containers
from the point of view of Weihrauch reducibility? So far, we have only seen
that taking fixpoints of the identity functor yield the containers $0$, $\unit$
and $1$ (using $\mu$, $\zeta$ and $\nu$ respectively). 
Simply using these fixpoints without parameters can
only produce trivial containers if we decide to regard them as problems.

\begin{definition}
Call a fibred polynomial functor $F : \Container(\cC)^k \to \Container(\cC)$ answer-trivial
if for any tuple of objects $\vec{X}$ among $\{0, \unit, 1\}$, $F(\vec{X})$
is degree-equivalent to $0$, $\unit$ or $1$.
\end{definition}

\begin{figure}
\[
\begin{array}{c !\qquad c}
\begin{array}{|c||c|c|c|}
\hline
+ & 0 & \unit & 1 \\
\hline
\hline
0 & 0 & \unit & 1 \\
\hline
\unit & \unit & \unit & 1 \\
\hline
1 & 1 & 1 & 1 \\
\hline
\end{array}
&
\begin{array}{|c||c|c|c|}
\hline
\times & 0 & \unit & 1 \\
\hline
\hline
0 & 0 & 0 & 0 \\
\hline
\unit & 0 & \unit & \unit \\
\hline
1 & 0 & \unit & 1 \\
\hline
\end{array}
\\ \hfill \\
\begin{array}{|c||c|c|c|}
\hline
\tensor & 0 & \unit & 1 \\
\hline
\hline
0 & 0 & 0 & 0 \\
\hline
\unit & 0 & \unit & 1 \\
\hline
1 & 0 & 1 & 1 \\
\hline
\end{array}
&
\begin{array}{|c||c|c|c|}
\hline
\star & 0 & \unit & 1 \\
\hline
\hline
0 & 0 & 0 & 1 \\
\hline
\unit & 0 & \unit & 1 \\
\hline
1 & 0 & 1 & 1 \\
\hline
\end{array}
\end{array}
\]
\caption{Multiplication table for degrees given by the monoidal products and
their units.}
\label{fig:multiplicationTables}
\end{figure}

\begin{lemma}
  \label{lem:fpAnswerTrivial}
If $F : \Container(\cC) \to \Container(\cC)$ is answer-trivial, so is $\gamma F : \Container(\cC)$
for any $\gamma \in \{\mu, \nu, \zeta\}$.
\end{lemma}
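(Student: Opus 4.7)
The plan is to analyse the induced monotone endomap $\tilde F$ on the lattice of degrees of $\Container(\cC)$. Since $F$ is a functor it preserves $\equiv$, and by the answer-triviality hypothesis $\tilde F$ restricts to a monotone endomap on the three-element chain $0 < \unit < 1$. A useful ambient observation to keep in mind throughout is that the interval $(0, \unit]$ in the degree lattice collapses to the singleton $\{\unit\}$: any container with at least one question and a continuous section is degree-equivalent to $\unit$.

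For $\gamma = \mu$, I would run the standard initial-algebra argument. Any fixed point $v \in \{0, \unit, 1\}$ of $\tilde F$ carries an $F$-algebra structure obtained from the degree isomorphism $F(v) \equiv v$ (pick a representative $C$ with $C \equiv v$; then $F(C) \equiv C$ furnishes a morphism $F(C) \to C$), so $\mu F \le v$ by initiality. Conversely, iterating the algebra structure map against the unique $0 \to \mu F$ yields morphisms $F^n(0) \to \mu F$, so $\mu F \ge \tilde F^n(0)$ for every $n$; the 3-chain forces this sequence to stabilise at the least fixed point $v_-$ of $\tilde F$ in $\{0, \unit, 1\}$, whence $\mu F \equiv v_-$. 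Dually, $\nu F \equiv v_+$, the greatest fixed point. For $\gamma = \zeta$, Lambek's lemma gives $F(\zeta F) \cong \zeta F$, so $\zeta F$ is simultaneously an $F$-algebra and an $F$-coalgebra; this provides the sandwich $\mu F \le \zeta F \le \nu F$ and forces the degree of $\zeta F$ to be a fixed point of $\tilde F$. The observation about $(0, \unit]$ disposes of every sub-case where $[v_-, v_+] \subseteq [0, \unit]$. For the harder sub-cases — where the interval is $[\unit, 1]$ or $[0, 1]$ — I would unfold the explicit construction of \Cref{thm:fp-exist}: $\zeta F$ has shape $\nu F_0$, and its directions form the terminal fibre-coalgebra computed as $\lim_n F_{\nu F_0}^n(1_{\nu F_0})$, where $1_{\nu F_0} = \id_{\nu F_0}$ is the terminal object of $\slice{\cC}{\nu F_0}$. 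Read back as a container, $1_{\nu F_0}$ has degree $\unit$ if $\nu F_0 \ne 0$ (and $0$ otherwise), and applying $F$ keeps us inside $\{0, \unit, 1\}$ at the degree level; the iterates must stabilise at $\tilde F^\infty(\unit) \in \{0, \unit, 1\}$ within two $F$-steps.

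The main obstacle will be that degree-equivalence is not automatically preserved by the inverse limit computing the fibre terminal coalgebra, so knowing that the approximants stabilise at the degree level does not immediately transfer to the limit. The way out I would pursue is that, because stabilisation happens in at most two steps in the 3-chain, one can exhibit concrete container morphisms going both ways between $\zeta F$ and the stable finite iterate, pinning down $\zeta F \equiv \tilde F^\infty(\unit)$. As a complementary sanity check, a brute-force case analysis on the ten monotone triples $(\tilde F(0), \tilde F(\unit), \tilde F(1))$ confirms the classification in each case; several of them (for instance $(0, 0, 1)$) are ruled out entirely, since $F(0) \equiv 0$ forces $F_0(0) \cong 0$ in $\cC$ while $F(\unit)$ and $F(1)$ must both have shape $F_0(1)$, constraining their possible degrees.
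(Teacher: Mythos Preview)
Your treatment of $\mu$ and $\nu$ is correct and essentially matches the paper: the paper phrases it as $\mu F \equiv F^2(0)$ and $\nu F \equiv F^2(1)$, but that is exactly your least/greatest fixed point of $\tilde F$ on the three-chain, established via the same initiality/terminality argument.

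The $\zeta$ case is where your proposal has a genuine gap. The sandwich $\mu F \le \zeta F \le \nu F$ is correct but, as you note yourself, is vacuous precisely in the cases that matter (e.g.\ $(\tilde F(0),\tilde F(\unit),\tilde F(1)) = (0,\unit,1)$ or $(\unit,\unit,1)$), where $[v_-,v_+]$ is all of $[\unit,1]$ or $[0,1]$. Your fallback is to read the fibre terminal coalgebra as an Ad\'amek limit $\lim_n F_{\nu F_0}^n(1_{\nu F_0})$ and argue via stabilisation of the degree of the approximants; but in a general $\PiWM$-category terminal coalgebras are furnished by $\cM$-types, not by $\omega$-limits, so this description is unjustified, and even if one grants it, your proposed fix (``exhibit concrete container morphisms going both ways'') is not spelled out. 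More importantly, you never invoke the one structural hypothesis that actually resolves the hard sub-case $F(\unit) \equiv \unit$: fibredness of $F$. The paper's argument is direct and avoids limits entirely: when $F(\unit)\not\equiv 0$ one has a point $\theta : 1 \to \nu F_0$, hence $\unit \le \zeta F$ and so $F(\unit) \le F(\zeta F) \cong \zeta F$; if $F(\unit) \equiv 1$ we are done, and if $F(\unit)\equiv\unit$ then the section of $F(\unit)$ lifts, because $F$ preserves the cartesian morphism $\id_{\nu F_0}\to\unit$, to a section of $F(\id_{\nu F_0})$, which equips $\id_{\nu F_0}$ with a $c^*\!\circ\! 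F_{\nu F_0}$-coalgebra structure and hence (by terminality of the fibre coalgebra) yields a section of $\zeta F$, giving $\zeta F \le \unit$. Without this use of fibredness, the $(\unit,\unit,1)$ and $(0,\unit,1)$ cases remain open in your outline.
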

\begin{proof}
Note that
we have $\gamma F \equiv F^2(\gamma F) = F(F(\gamma F))$,
$F^2(X) \equiv F^3(X)$ for any $X \in \{0, \unit, 1\}$ and that $F^2(X)$ is
degree-equivalent to $0$, $\unit$ or $1$. We are going to establish
that
\[
\mu F \quad \equiv \quad F^2(0)
\qquad \qquad
\zeta F \quad \equiv \quad F(\unit)
\qquad \text{and} \qquad
\nu F \quad \equiv \quad F^2(1)
\]
\begin{itemize}
\item If $\gamma = \mu$, then we have $F^2(0) \le F^2(\mu F) \equiv \mu F$. Conversely,
we can regard the map witnessing $F^3(0) \le F^2(0)$ as an $F$-algebra and get
$\mu F \le F^2(0)$.
\item If $\gamma = \nu$, we proceed dually with $F^2(1)$.
\item If $\gamma = \zeta$, let us make a case distinction on the degree of $F(\unit)$,
calling $F_0 : \cC \to \cC$ the functor such that $\shape \circ F = F_0 \circ \shape$:
\begin{itemize}
\item if $F(\unit) \equiv 0$, then $F_0(0) \cong 0$. Since $F_0$ is a polynomial
endofunctor and $\cC$ is extensive, it means that $F$ is constantly isomorphic to $0$,
and thus $\zeta F \cong 0$.
\item Otherwise $F(\unit) \equiv 1$ or $\unit$ and $F_0(1) \equiv 1$. We then have a map $\varphi : 1 \to F_0(1)$.
Since $(1, \varphi)$ is a $F_0$-coalgebra, it induces a coalgebra map
$\theta : 1 \to \nu F_0$.
Then, by diagram chasing, we have $\theta^*(\zeta F) \cong \varphi^*(F(\theta^*(\zeta F)))$, which allows us
to build a map $\theta^*(\zeta F) \to \varphi^*(F(\unit))$ and conclude that $F(\unit) \le \zeta F$.
If $F(\unit) \equiv 1$, we are done. Otherwise, if $F(\unit) \equiv \unit$,
$F(\unit)$ has a section (when regarded as a morphism of $\cC$). Since
$F$ is fibred, this entails that $F(\id_{\nu F_0})$ also has a section $s$.
We then have a $c^* \circ F_{\nu F_0}$-coalgebra with carrier the terminal object
of $\slice{\cC}{\nu F_0}$ for $F_{\nu F_0}$ taken as in~\Cref{def:fibPolFun}
and $c : \nu F_0 \to F_0(\nu F_0)$ the terminal $F_0$-coalgebra in $\cC$.
Since $\zeta F$ is the carrier of the terminal $c^* \circ F_{\nu F_0}$ coalgebra,
we have a coalgebra map $\id_{\nu F_0} \to \zeta F$ which is nothing more than
a section of $\zeta F$, witnessing $\zeta F \le \unit$.
\end{itemize}
\end{itemize}
\end{proof}

\begin{corollary}
\label{cor:zetaTrivial}
Any fibred polynomial functor $F$ built inductively from projections,
$\tensor$, $\times$, $+$, tupling, composition and $\gamma$-fixpoints for $\gamma \in \{\mu, \nu, \zeta\}$
is answer-trivial.
\end{corollary}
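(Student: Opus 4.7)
The plan is to proceed by structural induction on the construction of $F$, relying on the observation that each operation used in the construction preserves answer-triviality. The base case of projections $\pi_i$ is immediate, since they simply return their argument unchanged. For the inductive step involving the monoidal and cartesian operations $+$, $\times$, $\tensor$ (and $\star$ if one reads ``composition'' as the composition product), given answer-trivial $G, H$, I would verify that $G(\vec X) \bullet H(\vec X)$ is degree-equivalent to something in $\{0, \unit, 1\}$ by checking all nine combinations in the multiplication tables of~\Cref{fig:multiplicationTables}; crucially these tables show that $\{0, \unit, 1\}$ is closed (up to $\equiv$) under each binary operation. For tupling, answer-triviality is a componentwise property and follows immediately from the inductive hypotheses on each component.

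The functor composition case $F = G \circ H$ uses that any fibred polynomial functor $G$ preserves the degree equivalence $\equiv$: both inequalities making up $H(\vec X) \equiv \vec Y$ are witnessed by container morphisms, to which $G$ can be applied functorially. Thus by induction $H(\vec X) \equiv \vec Y$ for some $\vec Y \in \{0, \unit, 1\}^\ast$, whence $G(H(\vec X)) \equiv G(\vec Y)$, and this in turn is equivalent to something in $\{0, \unit, 1\}$ by answer-triviality of $G$.

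The fixpoint case is the most involved, but is essentially delegated to~\Cref{lem:fpAnswerTrivial}. Given $F = \gamma_k G$ with $G : \Container(\cC)^{k+1} \to \Container(\cC)$ answer-trivial by induction and $\vec X \in \{0, \unit, 1\}^k$, I would first observe that the partial application $G(\vec X, -) : \Container(\cC) \to \Container(\cC)$ remains a fibred polynomial endofunctor (precomposition with constant fibred polynomial functors preserves the class, much as in~\Cref{lem:stufffibred}) and is itself answer-trivial by induction on the last argument only. Then by~\Cref{thm:fp-exist-multi}, $(\gamma_k G)(\vec X) \cong \gamma(G(\vec X, -))$, and~\Cref{lem:fpAnswerTrivial} yields the desired equivalence to a member of $\{0, \unit, 1\}$.

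The step I expect to require the most care is the fixpoint case, precisely because one must simultaneously track that each inductive construct preserves both the fibred polynomial structure \emph{and} answer-triviality. The former is handled once and for all by~\Cref{lem:stufffibred,lem:trianglefibred} and~\Cref{thm:fp-exist-multi}, while the latter reduces, via the partial-application trick above, to the unary~\Cref{lem:fpAnswerTrivial}, which already does the real combinatorial work.
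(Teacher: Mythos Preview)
Your proposal is correct and follows essentially the same approach as the paper's own proof: a structural induction using the multiplication tables for the binary operations, the componentwise nature of tupling, functoriality to handle composition, and~\Cref{lem:fpAnswerTrivial} (via partial application) for the fixpoint case. The paper's proof is a terse four-bullet outline of exactly these steps, so your version simply unpacks each item with the expected justification.
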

\begin{proof}
The proof is via a straightforward induction after considering the following facts:
\begin{itemize}
\item \Cref{lem:fpAnswerTrivial} also implies that for any
answer-trivial $F : \Container(\cC)^{k+1} \to \Container(\cC)$,
$\vec X \mapsto \gamma (Y \mapsto F(\vec X, Y))$ is also answer-trivial.
\item All of the monoidal products we consider are answer-trivial (see~\Cref{fig:multiplicationTables}).
\item The tupling and composition of answer-trivial functors is answer-trivial.
\item The identity is answer-trivial.
\end{itemize}
\end{proof}

Accordingly, if the domain of $F$ is $\Container(\cC)^0$, $\containerify{F}$
is degree-equivalent to either $0$, $1$ or $\unit$.
That being said, this does \emph{not} mean that $\containerify{F}$ is isomorphic
to one of these three options. We will thus try to deepen our understanding
of those containers obtainable by fixpoint constructions and tensors alone in
the next subsection, and see how the answerable part of those problems may be highly non-trivial.

\subsection{$\zeta$-expressions}
\label{subsec:zetaExprAut}

\begin{remark}
Since our monoidal products and fixpoints are computed layerwise,
we also have a syntactic map $\base$ such that $\shape \circ \ffunctorify{E} = \functorify{\base(E)} \circ \shape^k$
which turns a $\zeta$-expression with $k$ free variables into a $\mu$-expression
with the same free variables.
\[
\begin{array}{llcl}
\base :& \zetaExpr(\vec X) &\longto& \muExpr(\vec X) \\
        & X &\longmapsto & X \\
        & \mu X. E &\longmapsto& \mu X. \; \base(E) \\
        & \zeta X. E &\longmapsto& \nu X. \; \base(E) \\
        & \nu X. E &\longmapsto& \nu X. \; \base(E) \\
        & E + E' &\longmapsto& \base(E) + \base(E') \\
        & E \tensor E' &\longmapsto& \base(E) \times \base(E') \\
        & E \times E' &\longmapsto& \base(E) \times \base(E') \\
\end{array}
\]
\end{remark}

\zetaExprAut*

This theorem is proven by induction over $E$ and has~\Cref{prop:muExprAut}
as an immediate corollary. Rather than giving the inductive steps explicitly,
let us simply give the key lemmas to make the induction go through.
We first start with two composition lemmas, that
allows us to treat the cases of variables and to interpret the substitution of
two expressions.

\begin{remark}
Trees with holes $\ell \subseteq \Tree(\Gamma, \chi)$ are called regular if they
have only finitely many subtrees up to isomorphism.
Finite sets of regular trees are regular tree languages.
Note also that
automata with priorities $(i, k)$ can be trivially turned into equivalent automata
with priorities $(i + 2, k + 2)$.
Similarly, bumping priorities by $2$ in the alphabet does not affect the denotations of
game tree languages as containers.
\end{remark}

\begin{lemma}
\label{lem:zetaExprAutProj}
For all projections $\pi_x : \Container(\ReprSp)^\chi \to \Container(\ReprSp)$
(with $x \in \chi$), there is a finite game tree
$\ell_{\pi, \chi, x}$ with
$\ffunctorify{\{\ell_{\pi,\chi,x}\}} \cong \pi_x$.
\end{lemma}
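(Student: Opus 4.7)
The plan is to take $\ell_{\pi,\chi,x}$ to be the trivial one-node tree whose root is labelled by the exit symbol $x \in \chi$ and which has no further nodes. This is a valid element of $\Tree(\cG_{i,k}, \chi)$: the only clause of the definition to check is that a node labelled by an exit symbol has no children, which holds vacuously here. It is regular (indeed finite), so $\{\ell_{\pi,\chi,x}\}$ is a regular game tree language by the remark preceding the lemma, and the constraint ``$\ell(w) = \bot \Rightarrow \ell(wi) = \bot$'' is trivially satisfied as $\bot$ never occurs.

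The key step is then to compare $\ffunctorify{\{\ell_{\pi,\chi,x}\}}$ and $\pi_x$ componentwise on a tuple $((A_i)_{i \in I_y})_{y \in \chi}$. For the shape, the exit sets of $\ell_{\pi,\chi,x}$ satisfy $\exits_x(\ell_{\pi,\chi,x}) = \{\varepsilon\}$ and $\exits_y(\ell_{\pi,\chi,x}) = \emptyset$ for $y \neq x$, whence
\[
\functorify{\{\ell_{\pi,\chi,x}\}}\bigl((I_y)_{y \in \chi}\bigr) ~\cong~ I_x^{\{\varepsilon\}} \times \prod_{y \neq x} I_y^{\emptyset} ~\cong~ I_x.
\]
For the directions, the only $\Even$-strategy is $S = \{\varepsilon\}$, and it is trivially winning because $\dom(\ell_{\pi,\chi,x}) = \{\varepsilon\}$ contains no infinite path. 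The corresponding fibre of answers over a shape element identified with some $i \in I_x$ is $A_i$, and the factors indexed by $y \neq x$ are empty products. This matches $\pi_x$ precisely, and the isomorphism is evidently natural in the input tuple.

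The only conceptual subtlety is ensuring that the strategy and winning-condition clauses of \Cref{def:gamingAut} accommodate nodes labelled by a hole symbol: the clauses discriminate on whether $\pi_1(\ell(w))$ equals $\Even$ or $\Odd$, and when $\ell(w) \in \chi$ neither case fires, so such a node is silently accepted as a terminal position of any strategy that contains it. I would state this reading explicitly, since the same convention will be needed in the inductive steps of \Cref{thm:zetaExprAut} where exit symbols serve as plugs for further game trees. With that convention in place, the verification above is completely formal.
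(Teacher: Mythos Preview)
Your proposal is correct and takes exactly the same approach as the paper: the paper's proof is the single line ``We simply set $\ell_{\pi, \chi, x}(\varepsilon) = x$,'' and you have supplied the verification details that the paper leaves implicit. Your explicit remark about how the strategy clauses of \Cref{def:gamingAut} treat exit-labelled nodes is a helpful addition.
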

\begin{proof}
We simply set $\ell_{\pi, \chi, x}(\varepsilon) = x$.
\end{proof}

\begin{lemma}
\label{lem:zetaExprAutSubst}
Given a regular game tree language $L \subseteq \Tree(\cG_{i, k}, \chi)$
and a family of regular game tree languages $(M_x)_{x \in \chi}$ with $M_x \subseteq \Tree(\cG_{i, k}, \chi')$,
there is a regular game tree language $ L \circ M \subseteq \Tree(\cG_{i,k}, \chi')$
such that $\ffunctorify{L \circ M} = \ffunctorify{L} \circ \langle M_x \rangle_{x \in \chi}$.
\end{lemma}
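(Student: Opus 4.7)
The plan is to define $L \circ M$ concretely as a tree substitution operation and then verify the two required properties (regularity and functor equality) separately.

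\textbf{Definition of $L \circ M$.} First I would define, for $\ell \in L$ and a family $(m_x)_{x \in \chi}$ with $m_x \in M_x$, the grafted game tree $\ell[m_\bullet]$ by
\[ \ell[m_\bullet](w) = \begin{cases} \ell(w) & \text{if } w \in \dom(\ell) \text{ and } \ell(w) \in \cG_{i,k} \\ m_x(v) & \text{if } w = uv,\; u \text{ is the least prefix with } \ell(u) = x \in \chi,\; v \in \dom(m_x) \end{cases} \]
and let $L \circ M \eqdef \{\ell[m_\bullet] \mid \ell \in L,\; m_x \in M_x \text{ for each } x \in \chi\}$. I would check directly from~\Cref{def:gamingAut} that this is again a binary game tree with ranks $(i,k)$ and exits in $\chi'$.

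\textbf{Regularity.} To show $L \circ M$ is a regular tree language, I would combine tree automata $\cA_L$ recognizing $L$ and $\cA_{M_x}$ recognizing each $M_x$ via a standard substitution construction. The state space of $\cA_{L \circ M}$ is the disjoint union of the state spaces of $\cA_L$ and of all $\cA_{M_x}$; its initial states are those of $\cA_L$; transitions inside $\cA_L$-states and inside $\cA_{M_x}$-states mimic those of $\cA_L$ and $\cA_{M_x}$ respectively; and whenever $\cA_L$ would accept via a final configuration $(x, q) \in F_L$ with $x \in \chi$, the new automaton nondeterministically jumps (with an $\varepsilon$-like move, eliminated in the standard fashion or absorbed into the transition function) to some initial state of $\cA_{M_x}$. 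Final configurations of $\cA_{L \circ M}$ are those of the $\cA_{M_x}$'s, and priorities are inherited unchanged. Correctness and regularity are then routine; priorities match because both the outer and inner automata use the range $\{i,\ldots,k\}$.

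\textbf{Functor equality.} Given a tuple $(P_{x'})_{x' \in \chi'}$ of containers, I would show that $\ffunctorify{L \circ M}((P_{x'})_{x'})$ and $\ffunctorify{L}(\ffunctorify{M_x}((P_{x'})_{x'}))_{x \in \chi})$ are isomorphic. On shapes, both sides naturally consist of tuples $(\ell, (m_x)_{x}, \rho)$ where $\ell \in L$, each $m_x \in M_x$ is used at exits of $\ell$ labelled $x$, and $\rho$ labels the exits of each $m_x$ labelled $x' \in \chi'$ with elements of $\shape(P_{x'})$; this matches the grafted tree $\ell[m_\bullet]$ with its $\chi'$-labelling. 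For directions, the main observation is that an $\Even$-strategy $S$ in $\ell[m_\bullet]$ decomposes uniquely as an $\Even$-strategy $S_L$ in $\ell$ together with, for each exit of $S_L$ labelled $x$, an $\Even$-strategy $S_{m_x}$ in $m_x$; moreover every infinite path of $S$ either stays in $S_L$ forever (so its parity is governed by $\ell$) or eventually enters some grafted $m_x$ and stays there (so its parity is governed by $m_x$). This yields a bijection between winning strategies on the grafted side and compatible pairs of winning strategies on the composite side, after which the labellings of maximal finite paths by elements of the $P_{x'}$'s match on the nose.

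\textbf{Main obstacle.} The chief subtlety will be the strategy-decomposition step: one must carefully track which exits of $\ell$ are actually reached by $S_L$ (since strategies for the $\Even$ player at $\Odd$ nodes must prescribe moves on every $\Odd$-enabled child), and verify that the limsup parity condition is equivalent on both sides for all infinite paths — a routine but finicky case analysis on whether a path ever crosses from the outer tree into a grafted subtree. Everything else (the automata construction, the shape bijection) is largely mechanical given the finite alphabet and shared priority range.
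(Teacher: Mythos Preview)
Your automata construction for regularity is essentially the paper's brief proof idea (``add transitions to an automaton recognizing $L$ to automata recognizing $M_x$ when it is possible to end a run on an exit labelled by $x$''), and the paper says nothing further. You go beyond this by attempting to spell out the functor equality, which is more than the paper does.

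However, there is a genuine bug in your definition of $L \circ M$. You index the grafted subtrees by the \emph{label} $x \in \chi$, writing $(m_x)_{x \in \chi}$ with $m_x \in M_x$, and plug the same $m_x$ at every exit of $\ell$ carrying label $x$. But recall from the definition of $\functorify{L}$ that its value at $(V_x)_x$ consists of pairs $(\ell, \rho)$ with $\rho_x : \exits_x(\ell) \to V_x$, so distinct exits with the same label may be sent to different elements of $V_x$. Consequently, in the composite $\ffunctorify{L} \circ \langle \ffunctorify{M_x}\rangle_x$, each individual exit $u \in \exits_x(\ell)$ is assigned its own tree $m_u \in M_x$, and these need not coincide across exits sharing a label. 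Your language $L \circ M$ is therefore too small, and your shape description ``both sides naturally consist of tuples $(\ell, (m_x)_x, \rho)$'' is incorrect whenever some $\ell \in L$ has two exits with the same label. The fix is to graft a family $(m_u)_{u \in \exits(\ell)}$ with $m_u \in M_{\ell(u)}$, placing $m_u$ at the specific exit $u$. Your automata construction already recognizes this larger language (nondeterministic runs at different exits are independent), and your strategy-decomposition argument survives unchanged once the indexing is corrected.
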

\begin{proof}[Proof idea]
Add transitions to an automaton recognizing $L$ to automata recognizing $M_x$
when it is possible to end a run on an exit labelled by $x$.
\end{proof}

\begin{lemma}
\label{lem:zetaExprAutConnectives}
There are finite regular game tree languages $L_\boxempty$ for $\boxempty \in \{\times, \tensor, +\}$
such that $\ffunctorify{L_\boxempty} \cong \boxempty$ in $\Container(\ReprSp)^2 \to \Container(\ReprSp)$.
\end{lemma}
\begin{proof}
The languages may be pictured as follows, where $i \in \bbN$ is an arbitrary priority:
\[
  \begin{array}{c}
L_{X + Y} \qquad=\qquad \left\{\quad
\begin{tikzcd}[column sep=tiny,row sep=small]
	& {(\Even, i)} \\
	X && \bot
	\arrow["0"', from=1-2, to=2-1]
	\arrow["1", from=1-2, to=2-3]
	\arrow["{{0,1}}"', shift left, from=2-3, to=2-3, loop, in=30, out=330, distance=5mm]
\end{tikzcd}
\qquad {\Huge ,} \qquad
\begin{tikzcd}[column sep=tiny,row sep=small]
	& {(\Even, i)} \\
	\bot && Y
	\arrow["0"', from=1-2, to=2-1]
	\arrow["1", from=1-2, to=2-3]
	\arrow["{{0,1}}", shift right, from=2-1, to=2-1, loop, in=150, out=210, distance=5mm]
\end{tikzcd}
\quad
\right\}
\\\\\\
L_{X \tensor Y} 
\quad = \quad
\left\{
\begin{tikzcd}[column sep=tiny,row sep=small]
	& {(\Odd, i)} \\
	X && Y
	\arrow["0"', from=1-2, to=2-1]
	\arrow["1", from=1-2, to=2-3]
\end{tikzcd}
\right\}
\qquad
L_{X \times Y} 
\quad = \quad
\left\{
\begin{tikzcd}[column sep=tiny,row sep=small]
	& {(\Even, i)} \\
	X && Y
	\arrow["0"', from=1-2, to=2-1]
	\arrow["1", from=1-2, to=2-3]
\end{tikzcd}
\right\}
\end{array}
\]
\end{proof}

\begin{lemma}
  Given a regular game tree language $L \subseteq \Tree(\cG_{i + 1, k}, \{\vec X,  Y\})$
and $\gamma \in \{\mu, \zeta, \nu\}$,
we can compute another game tree language $\gamma Y. L \subseteq \Tree(\cG_{i, k}, \{\vec X\})$
such that
\[\ffunctorify{\gamma Y. L} \qquad \cong \qquad \vec X \mapsto \gamma (Y \mapsto \ffunctorify{L}(\vec X, Y))\]
\end{lemma}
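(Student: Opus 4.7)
The plan is to start from a nondeterministic tree automaton $\cA = (Q, I, \Delta, F, c)$ recognizing $L$ with priority range $\{i+1, \ldots, k\}$ and to transform it into an automaton $\cA_\gamma$ for $\gamma Y. L$ by absorbing the $Y$-exits into loops. Concretely, for each pair $(Y, q) \in F$ we delete that exit and instead add a transition from $q$ through a freshly inserted tree node labelled in $\cG_{i,k}$ (with a $\bot$-sibling so that the node introduces no genuine branching choice) into a new state $q_Y$ of automaton priority $i$, from which the run transitions back to the initial states $I$. Substitution of automata as in \Cref{lem:zetaExprAutSubst} then ensures that the construction composes cleanly with the inductive cases for the other $\zeta$-constructors.

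The fixpoint flavour is encoded by the parity of the new priority $i$ together with the player attached to the inserted game node. For $\nu Y. L$, we arrange $i$ to be even, so that infinite $Y$-unfoldings are accepted by the automaton and winning for $\Even$; for $\mu Y. L$, we arrange $i$ to be odd, so that infinite $Y$-unfoldings are rejected by the automaton and only well-founded $Y$-traces contribute to the language; for $\zeta Y. L$, we keep the automaton priority even (so infinite $Y$-trees are accepted into the shape, as in $\nu$), but flip the parity of the game priority on the inserted node to be odd, so that any winning $\Even$-strategy must traverse each $Y$-loop only finitely often. If necessary, we freely shift all existing priorities by a fixed constant so that the desired parities line up, which only affects the index range uniformly.

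Correctness then follows the general recipe of \Cref{thm:fp-exist}. The shape $\functorify{\base(\gamma Y. L)}$ of $\ffunctorify{\gamma Y. L}$ coincides with $\gamma Y.\base(L)$ by construction, so by the already-proved $\mu$-expression case (\Cref{prop:muExprAut}) it is automatically the relevant initial algebra or terminal coalgebra for the shape functor. On the fibres, the isomorphism $F(\gamma F) \cong \gamma F$ is obtained by unfolding one layer of the loop at $q_Y$, and the universal property is established by (co)induction on the $Y$-unfolding --- initiality for $\mu Y$ via well-foundedness of $Y$-traces, terminality for $\nu Y$ via coinduction on strategies, and the mixed characterization of \Cref{thm:fp-exist} for $\zeta Y$.

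The hardest step is the $\zeta$ case, where the construction must genuinely decouple tree acceptance (governing the shape, which behaves like $\nu$) from game winning (governing the fibre, which behaves like $\mu$). Verifying that the mixed parity choice at the loop-back node produces the correct middle fixpoint requires showing that for any accepted infinite tree $\ell$ in the shape, the winning $\Even$-strategies on $\ell$ are forced to visit the newly inserted odd game priority only finitely often along every branch, hence correspond exactly to the fibrewise initial algebra of the functor $(c_\ell)^* \circ \ffunctorify{L}_\ell$ singled out in \Cref{thm:fp-exist}. Tracking the interaction between the automaton priority system and the game priority system, and ensuring both encode the intended fixpoint flavour simultaneously, is where most of the care is needed.
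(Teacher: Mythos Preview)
Your overall approach matches the paper's: loop the $Y$-exits back to the initial states and encode the fixpoint flavour via parities. However, you have the $\nu$ and $\zeta$ cases exchanged at the level of \emph{game} priorities, and correspondingly you misidentify which fibrewise fixpoint $\zeta$ computes.

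Recall from \Cref{thm:fp-exist} and the matrix in \Cref{fig:fixpointKindMatrix} that in the slice $\slice{\cC}{\nu F_0}$ it is the \emph{terminal coalgebra} that yields $\zeta F$ and the \emph{initial algebra} that yields $\nu F$; you state the opposite in your final paragraph. Larger spaces of directions mean more winning $\Even$-strategies, so the paper gives the $\zeta$-loop node an \emph{even} game priority (so that $\Even$ may traverse it infinitely often) and the $\nu$-loop node an \emph{odd} one (so that $\Even$ may not). Your proposal reverses this. The quickest sanity check is $\zeta X.\,X$ versus $\nu X.\,X$: the former must denote $\unit$ (one trivial winning $\Even$-strategy along the infinite loop) and the latter $1$ (no winning $\Even$-strategy), but your parity assignment produces $1$ and $\unit$ respectively. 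Likewise, your criterion ``$\Even$-strategies must traverse each $Y$-loop only finitely often'' for $\zeta$ would make $\widehat{P} = \zeta X.\,P\tensor X$ have empty fibres over every stream $s \in I^\bbN$, rather than $\prod_n A_{s(n)}$.

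A minor secondary point: for $\mu$ the paper does not insert a fresh tree node; it simply adds back-transitions in the automaton and raises the automaton priority of the initial states to a fresh odd value above all existing ones. Your variant with an inserted node is harmless there (well-foundedness of the $Y$-unfolding means the extra node is visited only finitely often along any branch, so its game priority does not affect the $\limsup$), but the paper's construction is leaner in that case.
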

\begin{proof}
Let $\cA = (Q, I,\Delta, F, c)$ be an automaton recognizing $L$.
Let us call
$F'$ the intersection of $F$ with $\{\vec X\} \times Q$.
We proceed
by case anlaysis depending on the value of $\gamma$; we give the construction and leave
checking correctness to the reader.
\begin{itemize}
\item If $\gamma = \mu$, we take $\cA_\mu = (Q, I, \Delta_\mu, F', c_{\mu})$
with $\Delta_\mu$  and $c_\mu$ defined by
\[
\begin{array}{llcl}
& \Delta_\mu(a,q,d) &=& \Delta(a,q,d) \cup \{ r \in I \mid \exists q' \in \Delta(a,q,d). \; (Y, q') \in F\}\\\\
  \text{and} &
  c_\mu(q) &=& \left\{ \begin{array}{ll}
      \min\{ k \in \bbN \mid \text{$k$ is odd and $k \ge \max\limits_{q \in Q} c(q)$}\} & \text{if $q \in I$}
    \\
    c(q) & \text{otherwise}
\end{array} \right.
    \end{array}
\]
\item If $\gamma = \zeta$ or $\nu$, let $j_\gamma$ be the even element of $\{k, k + 1\}$
if $\gamma = \zeta$ or otherwise let it be the odd one.
We take $\cA_\gamma = (\{q_0, \bot\} \uplus Q, \{q_0\}, \Delta_\gamma, F', c_{\gamma})$
with $\Delta_\gamma$  and $c_\gamma$ defined by (assuming by convention that $\Delta((P, m), q, d) = \emptyset$ for $m \not\in \{i+1, \ldots, k\}$):
\[
\begin{array}{llclr}
& \Delta_\gamma((P, m),q,d) &=&
\left\{
\begin{array}{l}
\Delta((P,m),q,d) ~~~~ \cup\\
\{ q_0 \mid \exists q' \in \Delta((P,m),q,d). \; (Y, q') \in F\}\\
\end{array}
\right. \text{when $q \in Q$}
\\
& \Delta_\gamma((\Even, j_\gamma), q_0, 0) &=& I \\
& \Delta_\gamma((\Even, j_\gamma), q_0, 1) &=& \{\bot\} \\
& \Delta_\gamma(\bot, \bot, i) &=& \{\bot\} \\
& \Delta_\gamma(a, q, m) &=& \emptyset \qquad\qquad \text{otherwise} \\
\\
&  \text{and}
\qquad\qquad
  c_\gamma(q) &=& \left\{ \begin{array}{ll}
    c(q) & \text{if $q \in Q$}\\
      \min\{ k \in \bbN \mid \text{$k$ is even and $k \ge \max\limits_{q \in Q} c(q)$}\} & \text{if $q \in \{q_0, \bot\}$}\\
\end{array} \right.
    \end{array}
\]

\end{itemize}
\end{proof}

\begin{example}
  \label{ex:zetaParityDet}
  Let $\mathrm{ParityD}_k$ be the expression
$\gamma X_{k}. \ldots \zeta X_2. \nu X_1. \; \widehat{\widetilde{\sum\limits_{i = 1}^{k} X_i}}$
with  $\gamma \in \{ \zeta, \nu\}$ and $\gamma = \zeta$ if and only if $k$ is even.
The language $L_{{\mathrm{Parity}}_k}$ given by~\Cref{thm:zetaExprAut} is recognized by the following automaton
where all states have priority $0$. All transitions from $\bot$ states are omitted, under the assumption that
$\Delta(\bot, i, \bot) = \{\bot\}$ and $\Delta(b, i, \bot) = \emptyset$ for $b \neq \bot$.

\begin{center}
\begin{tikzpicture}[shorten >=1pt,node distance=1.25cm and 2.25cm,on grid,auto]
\tikzstyle{every node}=[font=\scriptsize]
\node[state,initial, initial where=left, initial text={}]  (K) {$k$};
\node  (KT) [right=of K] {$\ldots$};
\node[state]  (T) [right=of KT] {$2$};
\node[state]  (O) [right=of T] {$1$};
\node[state]  (ODD) [right=of O] {$\widehat{-}$};
\node[state]  (EVE) [right=of ODD] {$\widetilde{-}$};
\node  (bt)  [above=of T] {$\bot$};
\node[state]  (K1) [below= of K] {};
\node  (K11) [below= of K1] {};
\node  (K1b) [below= of K11] {$\bot$};
\node  (KT1) [right=of K1] {$\ldots$};
\node[state]  (T1) [right=of KT1] {};
\node[state]  (O1) [right=of T1] {};
\node  (KT1b) [below=of KT1] {$\ldots$};
\node  (T1b) [right=of KT1b] {$\bot$};
\node  (O1b) [right=of T1b] {$\bot$};
\node[state]  (K2) [below= of K] {};
\node  (KT2) [below=of KT1b] {$\ldots$};
\node[state]  (T2) [right=of KT2] {};
\node[state]  (O2) [right=of T2] {};

\path[->]    (K)   edge node [above] {$(\Even, k); 0$} (KT)
edge [bend left] node [left] {$(\Even, k) ; 1\phantom{aaa}$} (bt)
             (KT)  edge node [above] {$(\Even, 3) ; 0$} (T)
             (T)  edge node [above] {$(\Even, 2) ; 0$} (O)
                  edge node [left] {$(\Even, 2) ; 1$} (bt)
             (O)  edge node [above] {$(\Even, 1) ; 0$} (ODD)
                  edge [bend right] node [right] {$~~~(\Even, 1) ; 1$} (bt)
             (ODD)  edge node [above] {$(\Odd, 0) ; 0$} (EVE)
                  edge [loop above] node [above] {$(\Odd, 0) ; 1$} ()
             (EVE)  edge [bend left] node [below] {$(\Even, 1) ; 0$} (O1)
                    edge [bend left] node [right] {$(\Even, 1) ; 0$} (O2)
                    edge [loop above] node [above] {$(\Even, 1) ; 1$} ()
             (O1)   edge node [right] {$(\Even, 0) ; 1$} (O1b)
                    edge node [right] {$(\Even, 0); 0$} (O)
             (O2)   edge node [right] {$(\Even, 0) ; 1$} (O1b)
                    edge node [below] {$(\Even, 0); 0$} (T2)
                    edge [bend left] node [right] {$(\Even, 0); 0$} (T1)
             (T1)   edge node [left] {$(\Even, 0) ; 1$} (T1b)
                    edge node [left] {$(\Even, 0); 0$} (T)
             (T2)   edge node [left] {$(\Even, 0) ; 1$} (T1b)
                    edge node [below] {$(\Even, 0); 0$} (KT2)
             (KT2)
                    edge [bend left] node [right] {$(\Even, 0); 0$} (K1)
             (K1)   edge node [right] {$(\Even, 0); 0$} (K)
                    edge node [left] {$(\Even, 0); 1$} (K1b)
;
\path[-]     (T2)
                    edge [bend left] node [left] {} (KT1b)
                    ;
\end{tikzpicture}
\end{center}
A similar analysis as the previous example can show us that $L_{\mathrm{ParityD}_k}$
is isomorphic to the space of labellings $\ell : (\bbN^2)^* \to \{1, \ldots, k\}$.
Winning $\Even$ strategies in a game tree corresponding to $\ell$ are
in one-to-one correspondence with strategies in a game where players alternate
picking natural numbers forever and where the second player wins in a play $p : \bbN \to \bbN^2$
if and only if the $\limsup\limits_{n \to +\infty} \ell(p[n])$ is even.
\end{example}

\subsection{The other Weihrauch problems from~\Cref{fig:pblmsAsZetaMKA}}

Closed choice on $\Baire$ is handled
by countably-branching tree encodings.

\begin{proposition}
\label{prop:zetaCBaire}
$\C_{\Baire}$ is the problem taking as inputs trees $t \subseteq \bbN^*$
which have at least one infinite path and outputs one such path.
\end{proposition}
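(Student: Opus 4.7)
The plan is to verify the equivalence by unpacking the standard representation of closed subsets of $\Baire$ and observing that it essentially coincides with the tree representation given in the statement.

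Recall that in computable analysis a closed set $A \subseteq \Baire$ is coded by an enumeration (with $\bot$'s allowed) of finite words $(w_n)_{n \in \bbN} \subseteq \bbN^*$ such that $A = \Baire \setminus \bigcup_n [w_n]$, where $[w] = \{p \in \Baire \mid w \sqsubseteq p\}$. Equivalently, from such a code one computably extracts the subtree $t_A = \{v \in \bbN^* \mid \forall n \in \bbN.\, w_n \not\sqsubseteq v\}$, and $A$ coincides with the set of infinite paths through $t_A$: a sequence $p \in \Baire$ avoids every $[w_n]$ if and only if no $w_n$ is a prefix of $p$, if and only if every finite prefix of $p$ lies in $t_A$.

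From this, non-emptiness of $A$ corresponds exactly to the existence of at least one infinite path through $t_A$, and $x \in A$ if and only if $x$ is such a path. The two reductions between the official presentation of $\C_{\Baire}$ and the tree-based presentation are then immediate: given a closed-set code we computably produce $t_A$ on the forward side and transport answers identically on the backward side; conversely, given a tree $t \subseteq \bbN^*$ with an infinite path, we may enumerate the finite words $w \notin t$ whose proper prefixes all lie in $t$ in order to obtain a closed-set code, again with a trivial backward part. Both reductions are thus given by identity-on-answers container morphisms.

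The only potential pitfall is fixing which representation of closed subsets of $\Baire$ is adopted: different sources use pruned trees, arbitrary subtrees, or enumerations of excluded basic open sets, but these are all computably equivalent and each yields the stated description. There is no genuine mathematical difficulty: once the code is translated into its associated tree, the statement is essentially a reformulation of the definition of $\C_{\Baire}$, so the main work is simply to make sure the translation is spelled out at the level of represented spaces so as to produce an actual morphism in $\Container(\rpReprSp)$ rather than only a set-theoretic bijection.
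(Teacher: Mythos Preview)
Your argument is correct and is the standard one: translate between the enumeration-of-complements representation of closed subsets of $\Baire$ and the tree representation, noting that both directions are computable and that answers (infinite paths) transfer identically. There is nothing to fault.

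It is worth noting, however, that the paper does not actually supply a proof of this proposition: it is stated as a bare reformulation of $\C_{\Baire}$ and treated as a standard fact from computable analysis, with the subsequent proposition (that $\C_{\Baire}$ is the answerable part of $\ffunctorify{\zeta X.\;\widetilde{X+1}}$) being the one that receives an argument. So your write-up is strictly more detailed than what the paper provides; it is the kind of justification one would expect to see in a textbook presentation rather than in the paper itself.
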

\begin{proposition}
$\C_{\Baire}$ is Weihrauch-equivalent to the answerable part of $\ffunctorify{\zeta X. \; \widetilde{X + 1}}$.
\end{proposition}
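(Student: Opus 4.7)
The plan is to unfold $\ffunctorify{\zeta X.\,\widetilde{X+1}}$ via the $\nu/\nu$-case of the construction behind~\Cref{thm:fp-exist} (see~\Cref{fig:fixpointKindMatrix}), compute its shape and fibres explicitly, and identify the result up to isomorphism in $\Container(\rpReprSp)$ with the container described in~\Cref{prop:zetaCBaire}. Since $\mkAnswerable$ respects isomorphism and simply restricts a container to those shapes admitting at least one answer, the claimed Weihrauch equivalence will then follow immediately.

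For the shape, I first observe that $\base(\zeta X.\,\widetilde{X+1}) = \nu X.\, \nu Y.\, (X+1) \times Y$, where the inner terminal coalgebra computes the space of $(X+1)$-streams. I will then identify the outer terminal coalgebra in $\rpReprSp$ with the space $T$ of trees $t \subseteq \bbN^{*}$, equipped with the coalgebra $c : T \to (T+1)^{\bbN}$ given by $c(t)_{n} = \inl(t_{n})$ when $n \in t$, with $t_{n}$ the subtree below $n$, and $c(t)_{n} = \inr(\bullet)$ otherwise. Terminality here is a routine coinduction, and the representation supplied by the dependent $\cM$-type construction in $\rpReprSp$ is computably isomorphic to the standard encoding of trees as elements of $\Cantor$ used in~\Cref{prop:zetaCBaire}.

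For the fibres, the $\zeta$-recipe requires taking the terminal coalgebra of $c^{*} \circ F_{T}$ in $\slice{\rpReprSp}{T}$. Unfolding the container structure of $\widetilde{X+1}$, this endofunctor acts on a family $D$ pointwise as $\Phi(D)_{t} = \sum_{n \in \bbN} D'_{c(t)_{n}}$, where $D'_{\inl(t')} = D_{t'}$ and $D'_{\inr(\bullet)} = 0$, so that summands corresponding to dead children vanish. I will then check that $t \mapsto [t]$, the space of infinite paths through $t$, carries the terminal such coalgebra: an infinite path canonically decomposes as a live first index $n$ together with a path in $t_{n}$, yielding the coalgebra structure, and terminality is again direct coinduction. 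Assembling the two steps gives an isomorphism $\ffunctorify{\zeta X.\,\widetilde{X+1}} \cong \left(\pi_{1} : \sum_{t \in T} [t] \to T\right)$, whose answerable part restricts to trees with at least one infinite path---precisely the container representation of $\C_{\Baire}$ from~\Cref{prop:zetaCBaire}. The main subtlety will be verifying that these pointwise coalgebra computations genuinely assemble into the dependent $\cM$-type of $\rpReprSp$ predicted by~\Cref{thm:fp-exist}, so that the resulting representations match the standard $\bbN$-tree encoding rather than only agreeing extensionally.
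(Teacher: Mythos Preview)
Your proposal is correct and follows essentially the same line as the paper: identify the shape of $\ffunctorify{\zeta X.\,\widetilde{X+1}}$ with the space of $\bbN$-ary trees, identify the fibre over a tree $t$ with the space of infinite paths through $t$, and observe that the answerable part restricts to ill-founded trees, which is exactly the container presentation of $\C_{\Baire}$ from~\Cref{prop:zetaCBaire}. The paper's own proof is considerably more terse---it simply asserts that questions ``can be regarded as trees'' and answers as infinite paths---whereas you spell out both coalgebra computations and flag the representation-level subtlety in $\rpReprSp$; this is a genuine elaboration rather than a different argument.
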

\begin{proof}
Questions to
$\ffunctorify{\zeta X. \; \widetilde{X + 1}}$ can be regarded as trees
$t \subseteq \bbN^*$ for every $n \in \bbN$. Accordingly, answers are infinite paths
through $t$ (here having $1$ in the expression (instead of, say, $\unit$) means that finite paths can't
yield valid answers).
We thus have a trivial Weihrauch equivalence reduction $\mkAnswerable\left(\ffunctorify{\zeta X. \; \widetilde{X + 1}}\right) \equiv C_{\Baire}$.
\end{proof}

Going above closed sets, we can also express instances of $\mathbf{\Pi}^0_2$-choice.

\begin{definition}
The infinite pigeonhole principle with $k$ colors $\RT^1_k$ is the problem
whose questions are sequences $e \in k^\bbN$, which have for answers colors
$i < k$ such that $\exists^\infty n. \; e_n = i$.
\end{definition}
\begin{proposition}
\label{prop:RT}
$\RT^1_k$ is Weihrauch-equivalent to the answerable part of
\[\ffunctorify{\underbrace{\left(\zeta X. \nu Y. \; Y + X\right)
\times \ldots \times \left(\zeta X. \nu Y. \; Y + X\right)}_{\text{$k$ times}}}\]
\end{proposition}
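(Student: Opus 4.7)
The plan is to read off the concrete structure of $\mkAnswerable\bigl(\ffunctorify{(\zeta X. \nu Y. Y + X)^{\times k}}\bigr)$ from the description of $L_{\zeta X. \nu Y. Y + X}$ given in~\Cref{ex:zetaRT}, then exhibit Weihrauch reductions in both directions via a ``shuffling'' translation between $\bbN$-colorings and $k$-tuples of binary sequences with infinitely many $1$s.

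First I would unpack what the problem looks like. By~\Cref{ex:zetaRT}, the shape of a single factor $\ffunctorify{\zeta X. \nu Y. Y + X}$ is a regular subspace of $\Tree(\cG_{0,1})$ that is canonically (and computably) in bijection with $\Cantor$ via the regular encoding $00((1+\varepsilon)00)^\omega \cong 2^\omega$, and the corresponding game tree admits a winning $\Even$-strategy (which is unique, with no exits) precisely when $\exists^\infty n.\ p_n = 1$. Unfolding the definition of $\times$ from~\Cref{fig:monoidalOperators} inductively, the answerable part of the $k$-fold product becomes the problem whose questions are tuples $(p_0, \ldots, p_{k-1}) \in \Cantor^k$ with some $p_i$ having infinitely many $1$s, and whose answers are indices $i < k$ witnessing this property for the $i$th coordinate.

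Next, for $\RT^1_k \le \mkAnswerable\bigl(\ffunctorify{(\zeta X. \nu Y. Y + X)^{\times k}}\bigr)$, I would map a coloring $e \in k^\bbN$ to the tuple $(p_0, \ldots, p_{k-1})$ with $p_i(n) = 1$ iff $e_n = i$. Then exactly one coordinate is $1$ at each $n$, so the set of indices appearing infinitely often in $e$ coincides with the set of coordinates $i$ for which $p_i$ has infinitely many $1$s; both the forward map and the backward map (the identity on indices) are visibly continuous. For the converse, I would interleave the $p_i$s into $e$: scanning through stages $n = 0, 1, 2, \ldots$, at each stage and for each $i$ in order, emit the symbol $i$ to the output tape of $e$ whenever $p_i(n) = 1$. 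The answerability hypothesis guarantees that the output $e$ is actually total and lies in $k^\bbN$, and by construction the index frequencies of $e$ equal the $1$-frequencies of the $p_i$s, so any valid $\RT^1_k$-answer for $e$ is a valid index witnessing the original tuple; the backward component is again the identity on indices.

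The main obstacle will be keeping the tree-versus-path bookkeeping straight: formally, the inputs and outputs of the reductions live at the level of the tree encodings in $\Tree(\cG_{0,1})^{\times k}$ rather than in $\Cantor^k$ directly, so both the forward and backward maps need to thread the regular bijection of~\Cref{ex:zetaRT} through each coordinate. Once that is set up explicitly, continuity of the shuffling map and correctness of the density-preservation claim above are straightforward verifications, and the equivalence follows.
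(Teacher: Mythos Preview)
Your proposal is correct and takes essentially the same approach as the paper: both unpack the single factor via \Cref{ex:zetaRT} to identify its answerable questions with sequences in $\Cantor$ having infinitely many $1$s, read the $k$-fold $\times$ as asking for an index $i$ with $\exists^\infty n.\ p_i(n)=1$, use the Kronecker-delta forward map $e \mapsto ((\delta_i^{e_n})_n)_{i<k}$ for one direction, and collapse the $1$s of the $p_i$ into a single $k$-coloring for the other (with identity backward maps in both cases). Your interleaving description for the converse is arguably cleaner than the paper's formula for $e_n$, but the idea is the same.
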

\begin{proof}
As discussed in~\Cref{ex:zetaRT}, the problem $\ffunctorify{\zeta X. \nu Y. \; Y + X}$
takes as input sequences $p \in 2^\bbN$, and there is an output (which is necessarily trivial)
if and only if $p$ has infinitely many $1$s. The
problem 
$\mkAnswerable\left(\ffunctorify{\left(\zeta X. \nu Y. \; Y + X\right)
\times \ldots \times \left(\zeta X. \nu Y. \; Y + X\right)}\right)$
therefore
takes as input $k$-many such sequences $(p_i)_{i < k}$ with the guarantee that
there is some $i < k$
such that $p_i$ has infinitely many ones, and it outputs such an $i$.
Much like in the proof of~\Cref{prop:cchoicefin}, we have easy reductions whose
backwards parts are trivial. The forward part of $\RT^1_k \le 
\mkAnswerable\left(\ffunctorify{\left(\zeta X. \nu Y. \; Y + X\right)
\times \ldots \times \left(\zeta X. \nu Y. \; Y + X\right)}\right)$
maps $e : \bbN \to k$ to $((\delta_i^{e_n})_{n \in \bbN})_{i < k}$ (using
Kronecker's $\delta$ notation), and the other way around, we turn $(p_i)_{i < k}$
into $e : \bbN \to k$ defined by
$e_n = \min \{ i < k \mid (p_m)_i = 1 \wedge m = \min \{ m' \mid \exists i < k. \; (p_m')_i = 1\}\}$.
\end{proof}

This leads to one of two natural formulations of the degree of K\"onig's lemma (\Cref{def:kl}) in this framework.

\begin{proposition}
$\KL$ is Weihrauch-equivalent to the answerable parts of
\[
\widehat{\ffunctorify{{\left(\zeta X. \nu Y. \; Y + X\right)
\times \left(\zeta X. \nu Y. \; Y + X\right)}}}
\qquad \text{and} \qquad 
\ffunctorify{\zeta X. \; (\nu Y. \; X + Y) \times (\nu Y. \; X + Y)}
\]
\end{proposition}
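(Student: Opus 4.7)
The plan is to establish the two Weihrauch equivalences separately, each by exhibiting reductions both ways with $\KL$.

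For the first expression, we combine Proposition~\ref{prop:RT} (applied with $k = 2$) with the identity $\mkAnswerable(\widehat{P}) \cong \widehat{\mkAnswerable(P)}$ noted earlier to identify $\mkAnswerable\left(\widehat{\ffunctorify{(\zeta X. \nu Y. Y + X) \times (\zeta X. \nu Y. Y + X)}}\right)$ with $\widehat{\RT^1_2}$. It then suffices to establish $\KL \equiv \widehat{\RT^1_2}$. The reduction $\widehat{\RT^1_2} \le \KL$ encodes $\omega$-many 2-colorings $(c_n)_n$ as a c.e. binary tree $T$ whose prefix $(b_0, \ldots, b_{k-1}) \in 2^k$ is admitted whenever, for every $i < k$, the color $b_i$ has been observed in $c_i$ at least $k$ times within some suitable growing initial segment; the tree is finitely-branching, and infinite paths of $T$ give valid simultaneous pigeonhole answers. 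For $\KL \le \widehat{\RT^1_2}$, we binary-encode the children at each node of the given finitely-branching tree and use one $\RT^1_2$-instance per pair (path prefix $w$, bit-position $j$), with the coloring designed so that the asymptotically-dominating color identifies which bit still leads to a child whose subtree exhibits signs of life; König's lemma guarantees at each step the existence of such a bit.

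For the second expression, Theorem~\ref{thm:zetaExprAut} yields a regular game tree language. Unfolding the $+$, $\nu Y$, $\times$ and $\zeta X$ gadgets in the spirit of Examples~\ref{ex:zetaRT} and~\ref{ex:zetaWKL}, a question to $\ffunctorify{\zeta X. (\nu Y. X + Y) \times (\nu Y. X + Y)}$ can be viewed concretely as a binary ``iteration tree'' whose edges are labelled with conatural deferment values in $\Ninfty$, with an Even-winning strategy being an infinite path through iterations traversing only edges of finite deferment. The answerable part thus amounts to the problem of finding an infinite path avoiding infinite-deferment edges in such a labelled binary tree. The reduction of this problem to $\KL$ proceeds by building the c.e. finitely-branching tree of partial strategies all of whose committed edges have been confirmed finite by the current stage. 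For the converse, we use successive iterations of the outer $\zeta X$ to binary-encode child selections in the given $\KL$-tree $T$, arranging the deferments of each $\nu Y$-unfolding so that their finiteness aligns with the existence of an infinite subtree extension along the corresponding partial choice.

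The main obstacle lies in the forward direction for the second expression: Even's winning condition singles out edges of \emph{finite} deferment (a $\Sigma^0_1$-observable event), whereas selecting a $\KL$-path requires following children with \emph{infinite} subtrees (a $\Pi^0_1$-condition, as finite branching makes ``some node at each depth'' decidable). This apparent polarity mismatch is navigated by using the $\omega$-length $\nu Y$-unfolding as a schedule of observation opportunities: the game tree is produced incrementally so that finite deferment effectively witnesses the required $\Pi^0_1$-infinity up to the relevant depth, at the cost of carefully managing when $X$-exits are issued. An alternative presentation would factor both reductions through the well-known Weihrauch equivalence $\KL \equiv \widehat{\C_\bbN}$ and exhibit each $\zeta$-expression as capturing an $\omega$-parallelized closed-choice principle on $\bbN$; this bypasses the direct polarity argument but relies on invoking external results.
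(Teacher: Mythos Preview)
For the first expression, your approach coincides with the paper's: reduce to $\widehat{\RT^1_2}$ via Proposition~\ref{prop:RT} and $\mkAnswerable(\widehat{P}) \cong \widehat{\mkAnswerable(P)}$, then use $\KL \equiv \widehat{\RT^1_2}$. The paper simply cites the latter; your sketch of it is plausible.

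For the second expression, your unfolding (binary trees with $\Ninfty$-labelled edges, answers being infinite paths along finite-deferment edges) is correct and equivalent to the paper's reading. The paper, however, takes a much shorter route: it observes that $\functorify{\nu Y.\,X+Y}(P)$ has as questions \emph{delayed} $P$-questions with \emph{prompt} $P$-answers, so that the whole problem is literally ``find an infinite path through a binary tree given by an enumeration of its nodes'', i.e.\ $\WKL'$, and then cites $\KL \equiv \WKL'$.

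The genuine gap in your proposal is the direction $\KL \le \ffunctorify{\zeta X.\,(\nu Y.\,X+Y)^2}$. You propose to let the deferment of an edge be finite iff the corresponding subtree of the $\KL$-input is infinite, then correctly note this is a $\Sigma^0_1$/$\Pi^0_1$ mismatch. Your resolution (``schedule observation opportunities'' so that finite deferment witnesses infinity ``up to the relevant depth'') does not work: a conatural deferment is finite exactly when a $\Sigma^0_1$ event fires, and no scheduling of $X$-exits changes that. The obstacle is real for \emph{that} construction, but it is self-imposed. A working direct reduction avoids it entirely: encode the $\bbN$-branching tree $t$ via the first-child/next-sibling (Knuth) bijection and let the deferment of each edge be $\underline{0}$ or $\infty$ according as the corresponding node of $t$ exists---a \emph{decidable} condition since $t$ is given as a characteristic function. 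Finite branching of $t$ then guarantees that every infinite binary path with finite deferments has infinitely many ``first child'' moves and hence yields an infinite $t$-path.

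Your proposed alternative via ``$\KL \equiv \widehat{\C_\bbN}$'' is also incorrect: one has $\C_\bbN \le \lim$ (the least non-enumerated element stabilises), hence $\widehat{\C_\bbN} \le \widehat{\lim} \equiv \lim < \KL$. You may be thinking of $\KL \equiv \widehat{\RT^1_2}$, which is the equivalence already used for the first expression.
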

\begin{proof}
The first equivalence follows from $\KL \equiv \widehat{\RT^1_2}$~\cite[Corollary 8.12]{survey-brattka-gherardi-pauly},
$\mkAnswerable(\widehat{P}) \cong \widehat{\mkAnswerable(P)}$ and~\Cref{prop:RT}.
Assuming $P$ is a container,
$\functorify{\nu Y. \; X + Y}(P)$
can be understood as follows:
\begin{itemize}
\item a $\functorify{\nu Y. \; X + Y}(P)$-question is either a question to $P$,
or a coinductively delayed question to $\functorify{\nu Y. \; X + Y}(P)$.
Concretely, if we have $\shape(P) \subseteq \Cantor$, questions are $\omega$-words $w \in (\{\bot\} \uplus 2)^\bbN$
such that $w_j = \bot \Rightarrow w_i = \bot$ whenever $i < j$ and such that,
if $w = \bot^k u$ for some $u \in \Cantor$, then $u \in \shape(P)$.
\item An answer to such a question is either a (prompt!) answer to the corresponding $P$-question.
If there is no $P$-question asked, then there is no forthcoming $P$-answer.
When $I \subseteq \Cantor$, this can be formalized as
\[\left(\functorify{\nu Y. \; X + Y}(P)\right)^{-1}(w)  \cong 
\{ (k, a) \in \bbN \times P_\mathrm{dir} \mid \exists u \in I. \; w = \bot^k u \wedge a \in P^{-1}(u)\}\]
\end{itemize}
With this understanding, it follows that
$
\ffunctorify{\zeta X. \; (\nu Y. \; X + Y) \times (\nu Y. \; X + Y)}
$
is the problem whose questions are binary trees $t \subseteq 2^*$
given by enumerations of their paths in prefix-order, which can be answered by
infinite paths through $t$. The answerable part of this problem is thus
literally $\WKL'$ (modulo ensuring that paths are enumerated in prefix-order in
$\WKL'$, which can easily be done computably) and we can conclude since 
$\KL \equiv \WKL'$ (see e.g.~\cite[Theorem 8.10]{survey-brattka-gherardi-pauly}).
\end{proof}

For \Cref{prop:lpo}, we conspicuously had an expression of the shape $E \times E^\bot$, where
$E \mapsto E^\bot$ is a syntactic map that flips the polarity of answers
as follows (with $X = X^\bot$ for variables):
\[
\small
\begin{array}{r@{~~} c@{~~}l !\qquad r@{~~}c@{~~}l !\qquad r@{~~}c@{~~}l}
(E_1 + E_2)^\bot &=& E_1^\bot + E_2^\bot
&
(E_1 \tensor E_2)^\bot &=& E_1^\bot \times E_2^\bot
&
(E_1 \times E_2)^\bot &=& E_1^\bot \tensor E_2^\bot
\\
(\mu X. \; E)^\bot &=& \mu X. \; E^\bot
&
(\zeta X. \; E)^\bot &=& \nu X. \; E^\bot
&
(\nu X. \; E)^\bot &=& \zeta X. \; E^\bot
\end{array}
\]
The following proposition admits a completely analogous proof to that of~\Cref{prop:lpo}, which we skip.
\begin{proposition}
\label{prop:lpoj}
$\LPO'$ is the problem where questions are sequences $p \in \Cantor$ which
are answered by the bit $1$ if $\exists^\infty n \in \bbN. \; p_n = 1$ or by the
bit $0$ otherwise. It is Weihrauch-equivalent to the answerable part of
$\ffunctorify{
\left(\nu X. \; \zeta Y. \; X + Y\right) \times \left(\zeta X. \; \nu Y. \; X + Y\right)}$.
\end{proposition}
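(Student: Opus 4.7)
The plan is to mimic the proof of Proposition~\ref{prop:lpo} on $\LPO$, exploiting that the $\zeta$-expression in the statement has the form $E^\bot \times E$ with $E = \zeta X.\; \nu Y.\; X + Y$ under the polarity-flipping involution $(-)^\bot$ introduced just before the statement. Example~\ref{ex:zetaRT} (together with the obvious isomorphism $X + Y \cong Y + X$) already tells us that $\ffunctorify{E}$ has questions $p \in \Cantor$ and admits a (necessarily trivial) answer precisely when $\exists^\infty n.\; p_n = 1$. The first step I would carry out is to verify the dual claim, namely that $\ffunctorify{E^\bot} = \ffunctorify{\nu X.\; \zeta Y.\; X + Y}$ has $p \in \Cantor$ as questions and admits a trivial answer precisely when $p$ contains only finitely many $1$s. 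Concretely, I would run Theorem~\ref{thm:zetaExprAut} on $E^\bot$: one obtains the same underlying tree skeleton as for $E$, but with the outer $\nu X$ contributing an odd priority instead of an even one, so the parity condition along the unique infinite branch of the game tree is inverted.

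Granted these two descriptions, the rest of the argument is a straightforward adaptation of the proof of Proposition~\ref{prop:lpo}. The product $\ffunctorify{E^\bot \times E}$ takes pairs $(p, q) \in \Cantor^2$ as questions, and an answer consists of a tag $t \in \{1, 2\}$ together with a trivial answer from the chosen factor; the answerable part thus restricts to those $(p, q)$ such that $p$ has finitely many $1$s or $q$ has infinitely many $1$s. Recall that $\LPO'$ is equivalent to the problem taking some $p \in \Cantor$ and outputting the bit $b$ with $b = 1$ if and only if $\exists^\infty n.\; p_n = 1$.

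For the reduction $\LPO' \le \mkAnswerable(\ffunctorify{E^\bot \times E})$, I would take the forward map to be the diagonal $p \mapsto (p, p)$, which is always answerable since exactly one of the two disjuncts holds, and the backward map sends the tag $t$ of the oracle's answer to $t - 1$. For the converse reduction, the forward map would be the second projection $(p, q) \mapsto q$, queried to $\LPO'$; the backward map turns the oracle's answer $b$ into the tag $b + 1$ paired with a trivial payload. This is correct because $b = 1$ signals that $q$ has infinitely many $1$s (so the second factor is valid), whereas $b = 0$ signals that $q$ has finitely many $1$s, which by answerability forces $p$ to have finitely many $1$s (so the first factor is valid). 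The main obstacle is the first step, i.e.\ checking that the automaton produced by Theorem~\ref{thm:zetaExprAut} for $E^\bot$ really inverts the infinitary parity condition compared to $E$; once that is done, everything else is a cosmetic relabelling of the $\LPO$ argument.
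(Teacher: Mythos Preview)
Your proposal is correct and takes essentially the same approach as the paper, which simply states that the proof is ``completely analogous'' to that of Proposition~\ref{prop:lpo} and omits it. Your write-up fills in precisely the details one would expect: identifying the two factors as $E$ and $E^\bot$, reading off the answerability conditions from Example~\ref{ex:zetaRT} and its dual, and then running the diagonal/second-projection reductions exactly as in the $\LPO$ case.
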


The pattern also emerges in a characterization of 
$\mathbf{\Pi}^1_1$-comprehension by a $\zeta$-expression.

\begin{definition}
\label{def:sTCBaire}
$\WF$ is the problem whose questions are trees $t \subseteq \bbN^*$ which
are answered by a bit telling whether $t$ is well-founded or not.
The $\bfP^1_1$-comprehension problem ($\PiooCA$) is defined as the infinite parallelization $\widehat{\WF}$ of $\WF$.

$\sTCBaire$, the strong total continuation of $\C_\Baire$~\cite[Definition 4.17]{MVRamsey},
is the
problem which takes as input a tree $t \subseteq \bbN^*$
and outputs a pair $(b, x) \in 2 \times \Baire$ such that $b = 0$ if and only if
$t$ is well-founded and $x$ is an infinite path through $t$ if $b = 1$.
\end{definition}

We have $\sTCBaire < \PiooCA$. For the reduction, given an input $t$ to $\sTCBaire$,
ask in parallel for every subtree of $t$ if it is well-founded; if it is ill-founded,
then a path can be reconstructed greedily by only going into ill-founded subtrees.
Strictness of the inequality can be inferred from the discussion in~\cite[Proposition 4.19]{MVRamsey}\footnote{More explicitly,
if we had $\PiooCA \equiv \sTCBaire$, then $\sTCBaire$ would be closed under infinite
parallelization giving $\widehat{\TCBaire} \le \widehat{\sTCBaire} \equiv \sTCBaire$,
contradicting~\cite[Proposition 4.19]{MVRamsey}.}. Conversely, it is clear that
$\PiooCA \le \widehat{\sTCBaire}$; so using similar arguments as before, it
suffices to characterize the $\sTCBaire$ as the answerable part of a $\zeta$-expression
to get such a characterization of $\PiooCA$.
We conjecture that it is impossible to
have such a characterization of $\WF$.

\begin{proposition}
$\sTCBaire$ is Weihrauch-equivalent to the answerable part of
\[\ffunctorify{\left(\nu X. \widehat{X + \unit}\right) \times  \left(\zeta X. \widetilde{X + 1} \right)}\]
\end{proposition}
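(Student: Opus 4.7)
The expression follows the pattern $E \times E^\bot$ exploited in the previous propositions for $\LPO$ and $\LPO'$. Indeed, with $E \eqdef \zeta X.\; \widetilde{X+1}$, a straightforward computation using $(\widetilde{Z})^\bot = \widehat{Z^\bot}$ and $1^\bot = \unit$ gives $E^\bot = \nu X.\; \widehat{X+\unit}$. The previous proposition already identifies $\mkAnswerable(\ffunctorify{E}) \equiv \CChoiceBaire$, which takes a tree $t \subseteq \bbN^*$ admitting an infinite path and returns one. The plan is to identify $\mkAnswerable(\ffunctorify{E^\bot})$ as the dual ``well-foundedness test'', and then combine the two pieces via the disjunctive nature of $\times$ and $\mkAnswerable$ to recover $\sTCBaire$.

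To analyse $E^\bot = \nu X.\; \widehat{X + \unit}$, I would unroll it via \Cref{thm:fp-exist}. The base functor on shapes is $F_0 : Y \mapsto (Y+1)^\bbN$, and $\nu F_0$ can be identified with the space of arbitrary $\bbN$-branching trees, i.e.\ prefix-closed subsets $t \subseteq \bbN^*$. The direction is then computed as an \emph{initial} algebra in the fibre over $\nu F_0$, satisfying fibrewise an equation of the form $D(t) \cong \prod_{n \in \mathrm{succ}(t)} D(t[n])$, where $\mathrm{succ}(t)$ denotes the indices of the non-leaf immediate children of the root of $t$ and $t[n]$ is the corresponding subtree. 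Being a least fixpoint, $D(t)$ is inhabited precisely when this recursion terminates along every branch, that is, when $t$ is well-founded, in which case $D(t)$ is the terminal object. Consequently, $\mkAnswerable(\ffunctorify{E^\bot})$ is the problem whose questions are well-founded trees $t \subseteq \bbN^*$ and whose unique answer is trivial.

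By definition of $\times$ in $\Container(\ReprSp)$, questions to $\ffunctorify{E \times E^\bot}$ are pairs $(t_1, t_2)$ of trees, and answers pick one of the two components to answer; taking $\mkAnswerable$ restricts to those pairs where $t_1$ is well-founded or $t_2$ is ill-founded. For $\sTCBaire \le \mkAnswerable(\ffunctorify{E \times E^\bot})$, I use the diagonal forward map $t \mapsto (t, t)$: an answer to the left component witnesses well-foundedness of $t$, so we return $(0, 0^\omega)$, while an answer to the right component gives an infinite path $x$, so we return $(1, x)$. Conversely, given an answerable pair $(t_1, t_2)$, the forward map is the second projection, and we query $\sTCBaire$ on $t_2$: if the returned bit is $0$, then $t_2$ is well-founded, which together with answerability forces $t_1$ well-founded, so we answer the left component trivially; if the returned bit is $1$ with path $x$, we answer the right component with $x$.

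The main obstacle is the fibrewise analysis in the second paragraph: one must verify carefully that the outer $\nu$ on the base and the inner $\zeta$ hidden inside $\widehat{-}$ interact so that the direction over $t$ is indeed the well-founded recursion tree on $t$ rather than, say, a coinductive structure admitting infinite branches. A fully formal argument could either perform the fibrewise initial-algebra computation directly in the internal language of the $\PiWM$-category, or translate $E^\bot$ through \Cref{thm:zetaExprAut} into a regular game tree language whose winning $\Even$-strategies can be shown (using the priority assignment coming from the outer $\nu$) to exist exactly over labellings encoding well-founded trees $t$.
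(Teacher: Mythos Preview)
Your proposal is correct and follows essentially the same route as the paper: identify the two factors as the well-foundedness test and $\CChoiceBaire$ respectively, reduce from $\sTCBaire$ via the diagonal, and reduce back to $\sTCBaire$ by projecting onto the $\CChoiceBaire$ factor $t_2$. The paper stresses (in a footnote) that this choice of projection is essential---projecting onto the well-foundedness factor would leave a residual $\CChoiceBaire$ instance to solve---and you make the correct choice.
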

\begin{proof}
As discussed in the proof of~\Cref{prop:zetaCBaire},
$\ffunctorify{\zeta X. \widetilde{X + 1}}$ is the problem that takes
a tree $t \subseteq \bbN^*$ to an infinite path through $t$. The dual problem
$\ffunctorify{\nu X. \widehat{X + \unit}}$ takes the same type of input, but
can only have trivial outputs; and a question has such an answer if and only if
the input tree is well-founded.
As such, much like in the proof of~\Cref{prop:lpo}, we have an easy reduction
\[\sTCBaire \qquad \le \qquad \mkAnswerable\left(\ffunctorify{\left(\nu X. \widehat{X + \unit}\right) \times  \left(\zeta X. \widetilde{X + 1}
\right)}\right)\]
whose forward part is simply the diagonal map $t \mapsto (t, t)$. The other way
around, the forward map is the first projection $(t, t') \mapsto t$: so if
$t$ is ill-founded, we get a path as requested, and otherwise, we answer that $t'$
is well-founded\footnote{Note that in the proof of~\Cref{prop:lpo}, we could have
a reduction by picking either the first or the second projection as a forward part,
but here we \emph{need} to pick the first projection, because otherwise we would be
left with $\C_{\Baire}$ to solve.}.
\end{proof}

\mysubparagraph{Determinacy statements}
By~\Cref{thm:zetaExprAut}, $\zeta$-expressions boil down to problems that ask for
winning strategies in certain sets of Gale-Stewart games, i.e., games where two players play natural numbers in turn
for $\omega$ rounds and where the winning condition is some subset $A \subseteq \Baire$.
Let us henceforth call $\cG(A)$ the Gale-Stewart game with winning condition $A$
for the second player. An important result in descriptive set theory is Martin's theorem:
all such games are determined when $A$ is Borel, that is, there exists a winning strategy
for either the first player ($\Odd$) or one for the second player ($\Even$)~\cite[\S 20.C]{kechris}. Those
strategies are typically highly non-computable, and accordingly, determinacy
statements 
have high proof-theoretic
strength,
even for subclasses of $\mathbf{\Delta}^0_3$. Here we focus on Weihrauch problems
stated as follows: given
some (code for some) $A$ with the promise that the second player wins in $\cG(A)$, find a
winning strategy\footnote{Variants where one does away with the promise that
a player wins, and one possibly only asks for who is the winner can also be considered~\cite{cantorDetBCO,kmp20}.}.

To define such problems, we need to fix coding schemes for sets of winning conditions.
For those, we shall only consider those pointclasses
$\mathbf{\Gamma}$ that are finite boolean combinations of the classes
$\mathbf{\Delta}^0_1$, $\mathbf{\Sigma}^0_1$, $\mathbf{\Delta}^0_2$ and $\mathbf{\Sigma}^0_2$.
Let us code them as follows, with the convention that
$(\mathbf{\Sigma}^0_i)_{k + 1}$ is the pointclass consisting of sets $A \setminus B$
with $A \in \mathbf{\Sigma}^0_i$ and $B \in (\mathbf{\Sigma}^0_i)_k$ for $i = 1, 2$,
$k \in \bbN$ and
$(\mathbf{\Sigma}^0_i)_0 = \mathbf{\Delta}^0_i$.
Given a pointclass $\mathbf{\Gamma}$, we write $\dualptcl{\bfG}$ for its
dual pointclass.

\begin{definition}
\label{def:pointclasses}
A $(\mathbf{\Sigma}^0_2)_\omega$-code for a set $A \subseteq \Baire$
is a map $s : \bbN^* \to \bbN$ such that $x \in A$ if and only if
$\limsup\limits_{n \to +\infty} s(x[n])$ is even.
For $i = 1, 2$, $(\mathbf{\Sigma}^0_i)_{k}$-code for $A \subseteq \Baire$
are then $(\mathbf{\Sigma}^0_i)_\omega$-codes $s$ for $A$ subject to further
restrictions: for $i = 1$, $s$ must be increasing along the prefix order, and
for $i \in \{1,2\}$, we have the following restrictions on the image of $s$,
elements of which we call \emph{priorities}:

\begin{center}
\begin{tabular}{|l||c|c|c|c|}
\hline
Pointclass
&
$(\bfS^0_1)_{k}$
&
$(\bfS^0_2)_{2k+1}$
&
$(\bfS^0_2)_{2k+2}$
\\
\hline 
Priorities
&
$\{1, \ldots, k + 1\}$
&
$\{0, \ldots, 2k + 1\}$
&
$\{1, \ldots, 2k + 3\}$
\\
\hline
\end{tabular}
\end{center}

$\mathbf{\Sigma}^0_i$ sets for $i = 1, 2$ are $(\mathbf{\Sigma}^0_i)_1$-sets
and coded accordingly. $\dualptcl{\mathbf{\Gamma}}$-sets are coded by $\bfG$-codes for their complements.
A code for a $\mathbf{\Delta}^0_i$ set $A$ is a pair of a codes for $A$
as a $\mathbf{\Sigma}^0_i$ and a $\mathbf{\Pi}^0_i$ set.
\end{definition}

\begin{definition}
Given a coding scheme for a pointclass $\bfG$,
questions to $\WS{\bfG}$ are $\bfG$-codes for sets $A \subseteq \Baire$
such that $\Even$ has a winning strategy in $\cG(A)$, and corresponding answers
are such winning strategies.
$\WSCantor{\bfG}$ is the same except we consider the game $\cG_2(A)$ where
both players are only allowed moves in $2$ (so only $A \cap \Cantor$ matters
when it comes to winning).
\end{definition}

\begin{remark}
For the pointclasses $\bfG$ below $\bfD^0_2$,
there are a number of known Weihrauch equivalences between
$\WS{\bfG}$ and some of the problems we have seen:
$\WS{\bfD^0_1} \equiv \WS{\bfS^0_1} \equiv \UC_{\Baire}$,
$\WS{\bfP^0_1} \equiv \C_{\Baire}$~\cite{kmp20} and
$\WSCantor{{(\bfS^0_1)_{k+1}^c}}$ is equivalent to the $k$th jump of
$\WKL$ for all $k \ge 0$~\cite{cantorDetBCO}. The characterization
we offer of $\WS{(\bfS^0_1)^c_k}$ in terms of iterations of $\PiooCA$
(Lemmas~\ref{lem:WSopenDBelow} and~\ref{lem:WSopenDAbove}) is similar and arguably folklore; the arguments are
similar to those at play in the analysis of $\bfD^0_2$-determinacy in
reverse mathematics~\cite{tanaka90} and descriptive set
theory~\cite{burgess1983classical1}.
\end{remark}

We already have explained characterizations of $\WS{\bfD^0_1}$ and
$\WS{(\bfS^0_2)_k}$.

\begin{proposition}
$\WS{(\bfS^0_2)_{k}}$ is Weihrauch-equivalent to the answerable part of
\[
\ffunctorify{
\nu X_{k+1}. \zeta X_k. \ldots \gamma X_1. \; \widehat{\widetilde{\sum\limits_{i = 1}^{k+1} X_i}}
}
\qquad \text{with $\gamma \in \{ \zeta, \nu\}$, $\gamma = \zeta \Longleftrightarrow \text{$k$ is odd}$}\]
\end{proposition}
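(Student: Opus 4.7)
The plan is to apply~\Cref{thm:zetaExprAut} to unfold the expression $E \eqdef \nu X_{k+1}. \zeta X_k. \ldots \gamma X_1. \; \widehat{\widetilde{\sum_{i=1}^{k+1} X_i}}$ into a regular game tree language $L_E$, in complete analogy with the treatment of $\mathrm{ParityD}_k$ in~\Cref{ex:zetaParityDet}. That analysis will show that $\ffunctorify{E}$ is isomorphic, modulo the standard binary encoding of $\bbN^\ast$, to the problem whose questions are labellings $s : (\bbN \times \bbN)^\ast \to R$ and whose answers are winning $\Even$-strategies in the Gale-Stewart game on $\Baire$ where $\Odd$ moves first and $\Even$ moves second (in accordance with the $\Odd$- and $\Even$-nodes contributed by $\widehat{-}$ and $\widetilde{-}$ via~\Cref{lem:zetaExprAutConnectives}), with $\Even$ winning precisely when $\limsup_n s(p[n])$ is even.

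The range $R$ and the priority assignments of the $X_i$ are dictated by the $k+1$ nested outer fixpoints: each $\nu$-binder contributes an odd alphabet priority, each $\zeta$-binder contributes an even one, and by the ``lift above the current maximum priority'' step in the fixpoint construction of~\Cref{thm:zetaExprAut}, deeper binders get strictly smaller priorities. Since the outermost binder is $\nu$, the strict alternation forces $R = \{1, \ldots, k+1\}$ when $k$ is even (so that $\gamma = \nu$) and $R = \{0, \ldots, k\}$ when $k$ is odd (so that $\gamma = \zeta$). By direct inspection of~\Cref{def:pointclasses}, these ranges exactly match the priority sets of $(\bfS^0_2)_k$-codes. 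Consequently, after taking $\mkAnswerable$, the questions of $\ffunctorify{E}$ are precisely $(\bfS^0_2)_k$-codes of sets $A \subseteq \Baire$ for which $\Even$ has a winning strategy in $\cG(A)$, and the answers are exactly such strategies. Both Weihrauch reductions then amount to the obvious binary re-encoding of $\bbN^\ast$ on the forward side and a symmetric re-encoding of strategies on the backward side; all of these are computable in $\rpReprSp$.

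The main obstacle is making precise the informal claim underlying~\Cref{ex:zetaParityDet} that the auxiliary priorities introduced by the intermediate $\tensor$- and $\times$-nodes and by the two inner fixpoints of $\widehat{\widetilde{-}}$ do not affect the $\limsup$. Concretely, one must check by induction on the number of outer bindings that these auxiliary priorities remain strictly dominated by the $k+1$ outer-fixpoint priorities and that at least one outer-fixpoint priority is visited per round (so it is visited infinitely often along any infinite play). This will follow by tracking the priority bookkeeping in the fixpoint step of the construction of~\Cref{thm:zetaExprAut}, after which the correspondence with $(\bfS^0_2)_k$-codes becomes a mechanical verification.
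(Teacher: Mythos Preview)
Your proposal is correct and follows essentially the same route as the paper, whose entire proof is the one-line reference ``See~\Cref{ex:zetaParityDet}.'' You simply unpack that reference further, carrying out the priority bookkeeping against~\Cref{def:pointclasses} that the paper leaves to the reader.
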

\begin{proof}[Proof idea]
See~\Cref{ex:zetaParityDet}.
\end{proof}

\begin{remark}
$\WSCantor{(\bfS^0_2)_k}$ can be coded in the same way as 
$\WS{(\bfS^0_2)_k}$ by simply replacing the countable operators $\widehat{-}$
and $\widetilde{-}$ with their respective binary variants $- \tensor -$ and
$- \times -$. Similar observations hold for all the codings we give in the sequel.
\end{remark}

\begin{proposition}
$\WS{\bfD^0_1}$ is Weihrauch-equivalent to 
$\mkAnswerable\left(\ffunctorify{\mu X. ~ \widehat{\widetilde{X}} + \unit + 1}\right)$.
\end{proposition}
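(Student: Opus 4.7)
The plan is to build on the analysis from \Cref{ex:zetaClopenDet}. Writing $D_1 = \mu X. \; \widehat{\widetilde{X}} + \unit + 1$, that example already tells us that $\shape(\ffunctorify{D_1})$ is isomorphic to well-founded trees $t$ whose internal nodes have arity $\bbN^2$ and whose leaves carry a boolean label, and that the answer corresponding to such a $t$ is a winning strategy for the second player in the associated Gale-Stewart-style game (where $\Odd$ wins at leaves labeled $0$ and $\Even$ wins at leaves labeled $1$). It only remains to verify that this description matches, up to Weihrauch equivalence, the problem $\WS{\bfD^0_1}$ obtained from the coding of $\bfD^0_1$-sets in \Cref{def:pointclasses}.

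For the reduction $\WS{\bfD^0_1} \le \mkAnswerable(\ffunctorify{D_1})$, given a $\bfD^0_1$-code for $A \subseteq \Baire$, consisting of an increasing $s_+ : \bbN^* \to \{1, 2\}$ with $\limsup_n s_+(x[n]) = 2 \Leftrightarrow x \in A$ and a dual $s_- : \bbN^* \to \{1, 2\}$ for $\Baire \setminus A$, I would compute a tree $t \subseteq (\bbN^2)^*$ as follows. For each $w \in (\bbN^2)^*$, viewed as an element of $\bbN^{2|w|}$: if $s_+(w) = 2$, make $w$ a leaf tagged $1$; else if $s_-(w) = 2$, make it a leaf tagged $0$; else keep it internal. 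Since any infinite play is eventually decided by $s_+$ or $s_-$, the resulting tree is well-founded, and it encodes the same clopen game as the original code (pairing moves only relabels positions). The backward map is essentially the identity, since $\Even$-winning strategies coincide in both presentations.

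For the converse reduction, given such a well-founded tree $t$, I would define $s_+(w) = 2$ when $w$ (considered in blocks of two moves) reaches a leaf tagged $1$, and $s_+(w) = 1$ otherwise; $s_-$ is defined symmetrically. Well-foundedness of $t$ ensures that these $(\bfS^0_1)_1$-codes describe complementary sets, yielding a genuine $\bfD^0_1$-code. Handling the boundary case where a decision would be reached after an odd number of moves is a matter of uniformly prepadding with a dummy move, which does not affect determinacy.

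The main technical subtlety is this move-pairing: $\WS{\bfD^0_1}$ treats moves one at a time, while $D_1$ groups them two-at-a-time into an $\bbN^2$-ary level. Determinacy of a clopen game is insensitive to such regrouping, but one must check that the conventions of \Cref{def:gamingAut} place $\Odd$ at the outer $\widehat{-}$ layer (coming from $\tensor$) and $\Even$ at the inner $\widetilde{-}$ layer (coming from $\times$), matching the alternation in $\cG(A)$. Once these bookkeeping details are settled, both reductions require no computation beyond translating between the two encodings of clopen games.
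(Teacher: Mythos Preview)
Your proposal is correct and follows essentially the same approach as the paper's proof sketch: both invoke \Cref{ex:zetaClopenDet} to identify $\ffunctorify{D_1}$ with clopen games on well-founded $\bbN^2$-branching trees, then translate $\bfD^0_1$-codes to such trees (using that one of the two open codes eventually fires along any play) and back (reading off open codes from leaf labels), with the strategy translation being essentially the identity. You are simply more explicit than the paper about the move-pairing bookkeeping and the $\Odd/\Even$ alignment of $\widehat{-}$ and $\widetilde{-}$, which the paper leaves implicit.
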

\begin{proof}[Proof idea]
For any $\bfD^0_1$-code $(s_0, s_1)$ and $p \in \Baire$, we have that there
exists $n \in \bbN$ such that $s_i(p[n]) = 0$ for $i = 0$ or $1$ (and not both).
This means that the outcome of the game is determined after a finite
amount of time. We can thus compute a well-founded game tree with the leaves
labelled by the winner, which is a valid input to
$\mkAnswerable\left(\ffunctorify{\mu X. ~ \widehat{\widetilde{X}} + \unit + 1}\right)$
(see~\Cref{ex:zetaClopenDet}). Conversely, the winning position for player $i$
determines a code $s_i$ for the open set containing the winning plays for $i$.
It is not hard to come up with a mapping of strategies to complete the reductions
both ways.
\end{proof}

A similar analysis is sufficient to also characterize $\WS{\bfD^0_2}$.

\begin{proposition}
\label{prop:deltaztdet}
$\WS{\bfD^0_2}$ is Weihrauch-equivalent to the answerable part of
\[\ffunctorify{\mu Z. \; \nu X. ~~ \widehat{\widetilde{X}} \; + \; \left(\zeta Y.\; \widehat{\widetilde{Y}} + Z\right)}\]
\end{proposition}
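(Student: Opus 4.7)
The plan is to follow the same strategy used for the $\bfD^0_1$ case (the proof sketch given just above), now exploiting the three-layer $\mu\nu\zeta$ nesting to capture the $\bfD^0_2$ priority structure.

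First, I would apply \Cref{thm:zetaExprAut} to compute a regular game tree language $L_E$ with $\ffunctorify{L_E} \cong \ffunctorify{E}$. Trees in $L_E$ are labellings of subtrees of $\bbN^*$ where each round of play is encoded by an $\widehat{\widetilde{-}}$-pattern (Odd picks a natural number, then Even picks one). The outer nesting $\mu Z.\;\nu X.\;-\;+\;(\zeta Y.\;-\;+\;Z)$ translates into a parity condition with three alternations: revisiting the outer $\mu Z$ contributes an odd priority (bad for $\Even$), cycling forever in the $\nu X$-branch $\widehat{\widetilde{X}}$ without entering the $\zeta Y$-block contributes an odd priority (also bad), and staying inside a $\zeta Y$-block contributes an even priority (good). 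An infinite play admits a winning $\Even$ strategy precisely when $Z$ is revisited only finitely often and, after the last revisit, the play eventually gets trapped inside some $\zeta Y$-block. At the shape level, the coder of a question in $L_E$ thus decides at each round which binder to unfold next: continue in $\nu X$, enter a fresh $\zeta Y$-block, or (from within $\zeta Y$) restart via the $+Z$-branch.

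For the reduction $\WS{\bfD^0_2} \leqW \mkAnswerable(\ffunctorify{E})$, given a $\bfD^0_2$-code $(s_0,s_1)$ where $s_0$ codes $A$ and $s_1$ codes $A^c$ as $\bfS^0_2$-sets, I would construct a game tree $\ell \in L_E$ by choosing at each position $w \in \bbN^*$ which branch of $E$ to unfold in function of $(s_0(w), s_1(w))$: stay in $\nu X$ while neither player is committed, enter $\zeta Y$ when $s_0(w) = 0$ (signalling an $\Even$-commitment to some component $A_n$ of $A$), and take the $+Z$-branch whenever $s_0$ returns to $1$ inside a $\zeta Y$-block (the commitment has been broken). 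Under this translation, a play $p \in \bbN^\omega$ satisfies $\limsup_n s_0(p[n]) = 0$ (i.e., $\Even$ wins $\cG(A)$) iff the associated play through $\ell$ is eventually trapped inside some $\zeta Y$-block, so $\Even$-strategies transfer back and forth. The converse reduction symmetrically recovers $(s_0,s_1)$ from the $Z/X/Y$-structure of any given $\ell \in L_E$.

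The main obstacle will be the bookkeeping to reconcile priority conventions. The automata produced by \Cref{thm:zetaExprAut} assign priorities in some range $\{i,\ldots,k\}$ that need not coincide with the binary priority set $\{0,1\}$ used to code $\bfS^0_2$-sets; moreover, each $\widehat{\widetilde{-}}$-round introduces intermediate $\Odd$- and $\Even$-nodes whose own priorities are orthogonal to the $\bfD^0_2$-data. One needs to verify that these intra-round priorities do not perturb the $\limsup$-parity along round-respecting plays, and that the binder-visit priorities collapse uniformly to two binary codes compatible with the $\bfD^0_2$ convention. Once this is checked, the equivalence of the winning conditions follows from the standard parity-game semantics of $\mu\nu\zeta$-expressions, and the two Weihrauch reductions are obtained from the explicit tree constructions above.
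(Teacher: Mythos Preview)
Your reading of the parity semantics of the expression is correct: revisiting $\mu Z$ and cycling in the $\nu X$-block contribute odd priorities, while staying inside a $\zeta Y$-block contributes an even one. The gap is in the forward reduction $\WS{\bfD^0_2} \le \mkAnswerable(\ffunctorify{E})$. You build the game tree by letting the $\bfS^0_2$-code $s_0$ drive the $X/Y/Z$-branching directly: enter $\zeta Y$ when $s_0(w)=0$, take the $+Z$-branch when $s_0$ returns to $1$. But the outermost binder is $\mu Z$, so the \emph{shape} of $\ffunctorify{E}$ is the initial algebra $\mu Z.\,F(Z)$, and elements of this shape are $Z$-well-founded: along any infinite play only finitely many $Z$-restarts may occur. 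A $\bfS^0_2$-code $s_0$ for $A$ may oscillate between $0$ and $1$ infinitely often along paths $p\notin A$ (the condition $\limsup_n s_0(p[n])=1$ does not forbid $s_0(p[n])=0$ infinitely often), and for such $p$ your tree would require infinitely many $Z$-restarts. The object you produce is then not a legitimate question for $\ffunctorify{E}$ at all, so the reduction breaks down before any priority-bookkeeping issue arises. Your construction also never uses the second component $s_1$ of the $\bfD^0_2$-code, which is a symptom of the same problem.

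The paper repairs this by first recoding. From the $\bfD^0_2$-code one computably produces a single $(\bfS^0_2)_\omega$-code $s:\bbN^*\to\bbN$ that is \emph{increasing} along the prefix order and \emph{bounded} on every infinite path (so that $\lim_n s(x[n])$ exists for all $x$), normalised so that $s(\varepsilon)=1$ and $s$ increases by at most $1$ per even-length step. Under this coding, each increase of $s$ by $2$ corresponds to exactly one $Z$-restart, and boundedness of $s$ on every path guarantees $Z$-well-foundedness of the resulting game tree. The parity of the limit of $s$ then matches the $\nu X$/$\zeta Y$ distinction, so the winning conditions coincide on the nose and both reductions become essentially identities. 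The missing idea is thus this monotone bounded recoding of $\bfD^0_2$-sets, not the game-semantic analysis.
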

\begin{proof}[Proof idea]
A $\bfD^0_2$ code for $A$ can easily be computably turned into an increasing
$(\bfS^0_2)_\omega$ code $s$ for $A$ such that $\lim\limits_{n \to +\infty} s(x[n]) < + \infty$
for any $x \in \Baire$.
Furthermore, we can also assume
without loss of generality that those $s : \bbN^* \to \bbN$ start off at $1$
and only increase by at most $1$ on paths of even length; formally, for every
$w \in \bbN^*$ and $n \in \bbN$,
\[s(\varepsilon) = 1 \qquad\qquad s(wn) \le s(w) + 1 \qquad \text{and} \qquad \text{$|w|$ odd} ~~\Longrightarrow~~ s(wn) = s(w)\]
Conversely, any such $s$ can be computably turned into a $\bfD^0_2$-code by noting
\[U_s \quad \eqdef \quad \{x \in \Baire \mid \exists^\infty n \in \bbN. \; \text{$s(x[n])$ is even}\} \quad =
\quad \{x \in \Baire \mid \forall^\infty n \in \bbN. \; \text{$s(x[n])$ is even}\}\] 
Then reductions both ways are easy, as $\ffunctorify{\mu Z. \; \zeta X. ~~ \widehat{\widetilde{X}} \; + \; \left(\nu Y.\; \widehat{\widetilde{Y}} +
Z\right)}$ is isomorphic to the problem whose questions are such $s$ which
are answered by winning strategies for the second player in $\cG(U_s)$.
\end{proof}

$\WS{(\bfS^0_1)_{2k+1}}$
can be expressed as the answerable part of
the $k$-fold composition
of $\ffunctorify{\nu X. ~~ \widehat{\widetilde{X}} \; + \; \left(\zeta Y.\; \widehat{\widetilde{Y}} + Z\right)}$
applied to $0$; this can be done in a single $\zeta$-expression using syntactic substitutions.
That this is correct can be proven in the same way as~\Cref{prop:deltaztdet}.
Let us close this subsection by noting that the
$\WS{(\bfS^0_1)_{2k}}$ do form a strict hierarchy below
$\WS{\bfD^0_2}$ as they are linked to finite iterations of $\PiooCA$.
In what follows, call $(\PiooCA)^{\star k}$ the $k$-fold sequential composition
$\PiooCA \starW \ldots \starW \PiooCA$ in the Weihrauch degrees. Here $\starW$ means composition
in the usual sense of Weihrauch complexity rather than the functorial operation
$\star$ in $\ReprSp$ (see~\cite[\S 4.4]{PricePradic25} for a discussion of the differences).
This is important as the dovetailing used in the proof of~\Cref{lem:WSopenDAbove} is not extensional.

\begin{lemma}
  \label{lem:WSopenDBelow}
$\WS{(\bfS^0_1)_{k + 1}} \le
\UC_\Baire \starW (\PiooCA)^{\star k}$ and
$\WS{(\bfS^0_1)^c_{k + 1}} \le
\C_\Baire \starW (\PiooCA)^{\star k}$.
\end{lemma}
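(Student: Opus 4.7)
We proceed by induction on $k$, treating both inequalities in parallel. The base case $k=0$ consists of the folklore equivalences $\WS{\bfS^0_1} \equiv \UC_\Baire$ and $\WS{\bfP^0_1} \equiv \C_\Baire$ recalled in the preceding remark~\cite{kmp20}.

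For the inductive step, given a $(\bfS^0_1)_{k+1}$-code $s$ for a winning condition $A$ for $\Even$, we write $A = U \setminus B$ with $U$ open and $B \in (\bfS^0_1)_k$ extracted computably from $s$. A single application of $\PiooCA = \widehat{\WF}$ then computes, uniformly in $s$, the characteristic function of the safe set
\[
W \;\eqdef\; \{ \sigma \in \bbN^* \mid \Even \text{ wins } \cG(B^c) \text{ from position } \sigma \},
\]
by coding the non-existence of an $\Odd$-strategy in the lower-hierarchy subgame as a well-foundedness question parametrised by $\sigma$ --- this is the usual $\bfP^1_1$-presentation of winning sets at low levels of the difference hierarchy used in~\cite{tanaka90}. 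Once $W$ is in hand, an $\Even$-winning strategy for $\cG(A)$ is produced by (i) solving the open game ``reach $U$ while remaining in $W$'' via $\UC_\Baire$, and then (ii) continuing from the landing position with a winning strategy for the residual game $\cG(B^c)$ --- an instance of $\WS{(\bfS^0_1)^c_k}$ which by the inductive hypothesis reduces to $\C_\Baire \starW (\PiooCA)^{\star(k-1)}$. Composing these three pieces via the intensional sequential product $\starW$ gives the desired upper bound $\UC_\Baire \starW (\PiooCA)^{\star k}$. The dual inequality for $\WS{(\bfS^0_1)^c_{k+1}}$ is proved by the symmetric argument with $\Even$, $\Odd$, $\UC_\Baire$ and $\C_\Baire$ interchanged.

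The only delicate aspect is the bookkeeping of the Weihrauch composition: the naive reading of the three steps above groups the inductive subreduction inside the $\UC_\Baire$ call rather than to its right. Passing to the intensional $\starW$ (rather than the functorial $\star$ of~\Cref{sec:operators}) permits the dovetailing needed to isolate $\UC_\Baire$ as the outermost problem, with the remaining $(\PiooCA)^{\star(k-1)}$ oracle calls feeding into the continuation phase of the inductive subreduction. This stratification is essentially the one carried out in the reverse-mathematical analysis of $(\bfS^0_1)_{k+1}$-determinacy from $k$-fold $\bfP^1_1$-comprehension~\cite{tanaka90}, merely transposed to the Weihrauch setting.
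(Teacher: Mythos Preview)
There is a genuine gap in your inductive step. Your step (i) asserts that a single application of $\PiooCA$ computes the set $W = \{\sigma \in \bbN^* \mid \Even \text{ wins } \cG(B^c) \text{ from } \sigma\}$ for $B \in (\bfS^0_1)_k$, by presenting $W$ as a $\bfP^1_1$ set. This is only correct for $k = 1$: there $B$ is open, $\Even$'s objective in $\cG(B^c)$ is closed, and the winning region is indeed $\bfP^1_1$. For $k \ge 2$, however, $B^c$ is a genuine $(\bfP^0_1)_k$ condition, and the game quantifier applied to such a condition lands at level $k$ of the difference hierarchy over $\bfS^1_1$ --- this is exactly the content of the Burgess-style results you allude to. A single $\PiooCA$ call decides only $\bfD^1_1$ predicates, so it cannot compute $W$ when $k \ge 2$, and the induction collapses at the second step.

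The paper's proof avoids this by working bottom-up on priorities rather than top-down on the difference decomposition $A = U \setminus B$. All $k$ sequential $\PiooCA$ calls are spent computing the $\Even$-winning region for priorities $\ge 2$, one priority level at a time: each call is a winning-region computation for an \emph{open} game built from the region obtained at the previous stage, so each step really is $\bfP^1_1$. Only after the full region is in hand is $\widehat{\WS{\bfS^0_1}} \equiv \UC_\Baire$ (resp.\ $\widehat{\WS{\bfP^0_1}} \equiv \C_\Baire$) invoked once, at the very end, to stitch together the strategy. This ordering places $\UC_\Baire$ as the outermost call by construction, so the ``delicate bookkeeping'' you flag never arises. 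Incidentally, your proposed rearrangement of the composition is also not justified as stated: even if $W$ were available, step (iii) depends on the landing position produced by step (ii), so $\UC_\Baire$ sits strictly between a $\PiooCA$ call and the inductive $\C_\Baire \starW (\PiooCA)^{\star(k-1)}$, and no amount of intensional dovetailing moves it past a $\C_\Baire$ that depends on its output without degrading the bound.
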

\begin{proof}[Proof]
  In what follows, let us bear in mind that $\UC_{\Baire} \equiv \WS{\bfS^0_1} < C_\Baire \equiv \WS{\bfP^0_1}$~\cite[Proposition 6.19 \& Theorem 6.21]{kmp20}.
For the reduction
$\WS{(\bfS^0_1)_{k + 1}} \le
\UC_\Baire \starW  (\PiooCA)^{\star k}$,
one can proceed in two steps: first, use the iterations of $\PiooCA$ to compute
the winning region for $\Even$ for priorities greater than $2$, and then
$\widehat{\WS{\bfS^0_1}}$ to compute the overall winning strategy (which is enough
to conclude as $\widehat{\UC_\Baire} \equiv {\UC_\Baire}$~\cite[Corollary 27]{brattka2025loops}).

For the first step, we can first remark that computing the winning regions in a
$\bfS^0_1$ game is equivalent to $\PiooCA$. Then given a $(\bfS^0_1)_{k + 1}$-code
$s : \bbN^* \to \{1, \ldots, k+2\}$, we can compute the winning regions for
$\Even$ for positions of priority $\ge 2$ by recursing over $k$:
\begin{itemize}
\item for $k = 0$, the winning region is simply $s^{-1}(2)$,
\item otherwise, by the induction hypothesis, the $\Even$-winning region $W \subseteq \bbN^*$ for
priorities greater than $3$ is computed by $k - 1$ sequential questions to $\PiooCA$.
Then consider the following $\bfS^0_1$-code $s' : \bbN^* \to \{1, 2\}$ defining the open set $U_{s'}$. 
\[
s'(w) \quad = \quad \left\{
\begin{array}{ll}
  1 & \text{if $s(w) \le 2$ or $w \in W$}\\
  2 & \text{otherwise}
\end{array}\right.
\]
Then the intersection of the winning region for $\Odd$ in $\cG(U_{s'})$
with $s^{-1}(\{2, \ldots k + 2\})$ is the winning region for $\Even$, which can
be computed by a further question to $\PiooCA$.
\end{itemize}
To compute a winning strategy from the winning region $W$ for $\Even$,
use $\widehat{\WS{\bfS^0_1}}$ to compute partial strategies $S_w$ from the
positions $w \in W$ with odd priorities in the winning region that eventually always lead to positions
with even priorities in the winning region (and then are undefined). Also do so for the initial position
$\varepsilon$ of the game. We can then put together a strategy
$S$ for $\Even$ contained in $\bigcup\limits_{w \in W \cup \{\varepsilon\}} S_w \cup (W \cap s^{-1}(\{2, 4, \ldots\}))$.

The reduction
$\WS{(\bfS^0_1)^c_{k + 1}} \le
\C_\Baire \starW  (\PiooCA)^{\star k}$ is similar, save for the treatment of the
last step, where, instead of using $\UC_\Baire$ to find the initial part of the
strategy, we use $\C_\Baire$ as $\C_\Baire \equiv \WS{\bfP^0_1}$. This yields
the desired result as $\UC_\Baire \le \C_\Baire$ and $\C_\Baire$ is parallelizable.
\end{proof}

\begin{lemma}
  \label{lem:WSopenDAbove}
$C_\Baire \starW (\PiooCA)^{\star k} \le \WS{(\bfS^0_1)^c_{k+1}}$.
\end{lemma}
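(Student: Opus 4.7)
The plan is to proceed by induction on $k$. The base case $k = 0$ is exactly the folklore equivalence $\C_\Baire \equiv \WS{\bfP^0_1} = \WS{(\bfS^0_1)^c_1}$ recalled earlier (from~\cite{kmp20}).

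For the inductive step, associativity of $\starW$ gives
\[
\C_\Baire \starW (\PiooCA)^{\star k} \;=\; \bigl(\C_\Baire \starW (\PiooCA)^{\star(k-1)}\bigr) \starW \PiooCA,
\]
so by monotonicity of $\starW$ and the inductive hypothesis, it suffices to establish the auxiliary reduction $\WS{(\bfS^0_1)^c_k} \starW \PiooCA \le \WS{(\bfS^0_1)^c_{k+1}}$. Given such an instance---a sequence of trees $(T_n)_{n \in \bbN}$ forming the $\PiooCA$-question together with a continuous map $\kappa$ sending each $b \in 2^{\bbN}$ to a $(\bfS^0_1)^c_k$-code $s_b : \bbN^* \to \{1,\dots,k+1\}$---I would continuously construct a $(\bfS^0_1)^c_{k+1}$-code $\sigma : \bbN^* \to \{1,\dots,k+2\}$ for a single combined game. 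Plays are parsed as interleaved records of (i) Odd's dovetailed attempts to exhibit infinite paths through selected trees $T_n$ (each committing Odd to the claim $b_n = 1$), and (ii) moves in the inner game for $s_{\hat b}$, where $\hat b \in 2^{\bbN}$ is the current bit vector of IF-commitments. The priority $\sigma$ sits at $1$ while only IF-commitments are being made consistently, rises to (an appropriate embedding of) the inner $s_{\hat b}$-priority once inner-game moves start, and is forced to the top priority $k+2$ as soon as a committed IF-path is exposed as inconsistent with its declared tree.

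From a winning $\Even$-strategy $\tau$ in the $\sigma$-game, one recovers the correct bit vector $b$ by reading $b_n = 1$ exactly when $\tau$ tolerates Odd's indefinite path-construction through $T_n$ without triggering a refutation, and recovers an $\Even$-winning strategy for the inner $s_b$-game by restricting $\tau$ to the inner-game moves under $\hat b = b$. That $\tau$ is winning in $\sigma$ then forces both components to be correct.

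The main obstacle will be keeping $\sigma$ monotone along prefixes and properly aligning parities: since $\hat b$ can only grow during a play and $s_{\hat b}$ depends on $\hat b$, a fresh IF-commitment must not cause a priority decrease in the inner game. This can be enforced by interleaving inner-game moves with IF-commitments in a staggered, counter-locked fashion, and by handling the parity of the extra priority $k+2$ uniformly via an appropriate shift (determining whether $k+2$ plays the role of an ``Odd-success'' level or a ``refutation sink''). These bookkeeping arguments are in the spirit of the priority analyses underlying the reverse-mathematical treatment of $\bfD^0_2$-determinacy by Tanaka~\cite{tanaka90}.
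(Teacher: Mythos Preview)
Your inductive decomposition is a sensible alternative to the paper's direct construction (which builds a single game encompassing all $k$ iterations at once), and the base case is correct. However, there is a genuine gap in your inductive step.

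The core issue is the direction of commitment. You have $\Odd$ committing to bits of $b$ by attempting to build paths through the trees $T_n$, with $\hat b$ recording these commitments. But nothing forces $\Odd$ to commit \emph{completely}: if $T_n$ is ill-founded (so the true $b_n = 1$), $\Odd$ may simply decline to attempt a path through $T_n$, leaving $\hat b_n = 0$. The inner game is then played for $s_{\hat b}$ with $\hat b \neq b$, and you have no guarantee that $\Even$ wins the game coded by $s_{\hat b}$ --- you were only promised a winning $\Even$-strategy for $s_b$. So $\Even$ may have no winning strategy in your combined game even when the original instance is answerable, and the forward reduction fails to produce a valid $\WS{(\bfS^0_1)^c_{k+1}}$-question.

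Relatedly, your recovery of $b$ from $\tau$ --- ``$b_n = 1$ exactly when $\tau$ tolerates $\Odd$'s indefinite path-construction through $T_n$ without triggering a refutation'' --- does not actually depend on $\tau$: a refutation is triggered exactly when the attempted path leaves $T_n$, which eventually happens iff $T_n$ is well-founded. So your recovery procedure amounts to deciding $\WF$ directly, which is not computable from $\tau$.

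The paper's fix is to reverse the roles: in each phase, a designated player \emph{guesses} the oracle answers bit by bit, and the opponent may \emph{challenge} any bit. On a challenge of a bit of $a_j$, the game escalates to a higher-priority phase, and whichever player is claiming ill-foundedness must now exhibit a path (and loses if the path leaves the relevant tree). Because $\Even$'s winning strategy must survive all possible challenges, it is forced to guess $a$ correctly in phase $0$ (otherwise $\Odd$ challenges a wrong bit and then plays optimally), and the correct $a$ together with the final $\C_\Baire$-path can then be computably read off from the strategy by simulating the play in which $\Odd$ never challenges. Your inductive step can be repaired along the same lines: have $\Even$ announce $b$ bit by bit, let $\Odd$ challenge, and on a challenge have the appropriate player build a path --- this is precisely the mechanism that makes the recovered $b$ live in $\tau$ rather than in the input trees.
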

\begin{proof}[Proof idea]
Let us write $T \subseteq \Baire$ for the set of
trees with $\bbN$-ary branching and, assuming $P$ is a \emph{deterministic} problem\footnote{I.e.,
that every question has exactly one answer.}, let us write $[P]$ for a section of $P$.
In particular we have that $[\mathsf{WF}] : T \to 2$ and $[\PiooCA] : T^\bbN \to \Cantor$.
A question to $(\PiooCA)^{\star k}$
is a code for a finite sequence of partial maps
$f_0, f_1, \ldots, f_{k}, g$ with $f_i : (\Cantor)^i \to T^\bbN$ and $g : (\Cantor)^{k + 1} \to T$. The answer to
such a question is a pair $(a, p)$ where $a \in (\Cantor)^{k}$ is the value defined by the recursion
$a_{i} = [\PiooCA](f_i((a_j)_{j < i}))$ for $i \le k$ and $p \in \Baire$ is an infinite
path through $g(a)$ (and we have the guarantee that such an infinite path exists).
Without loss of generality, we may assume that the $f_i$s and $g$ are actually
total maps\footnote{The idea is that there is a computable map $\rho : \bbN \to \bbN$ such that we can
computably turn an enumeration of a tree $t \in T$ into a tree $t' \in T$
such that $\rho^* : \bbN^* \to \bbN^*$ restricts to a surjective map $t' \to t$.}.

Now let us informally describe the $(\bfS^0_1)^c_{k + 1}$ game $G$ that our
forward reduction computes from the $f_i$s, assuming without loss of generality that $k \ge 1$.
The positions of $G$ will thus be labelled with priorities in $\{0, \ldots, k + 1\}$.
The game overall proceeds in $k+1$ phases corresponding to the priorities $\{0, \ldots, k\}$;
$k+1$ is kept around so that we may force a loss for either of the player in phase $k$.
For $i \in \bbN$, let $P_i \eqdef \Even$ when $i$ is even and $\Odd$ otherwise
and $\overline{P_i}$ the opponent of $P_i$.

In all phases, the game consists essentially of two processes $A$ and $C$
that are dovetailed. While at each round some progress will be made on $A$,
whether options for progress for $C$ are available depend on what progress is made
in $A$. Let us call $A_i$ and $C_i$ the phase $i$ part of those two processes.
In $A_i$, the player $P_i$ is asked to guess the value of
$(a_j)_{j \le k - i}$, bit-by-bit and $\overline{P_i}$ is asked whether they
agree or not. Should $\overline{P_i}$ disagree on a bit of $a_j$, the game goes into phase $j+1$
and $C_{{j+1}}$ is used to guarantee that $P_{j+1}$'s guess is correct,
assuming $\overline{P_{j+1}}$ does not challenge further their guesses in $A_{j+1}$.
We also force all players' guesses to be consistent throughout a play.
Let us now describe the $C_i$s in more details, which may terminate
the game by declaring a winner.
\begin{itemize}
\item For $C_{j + 1}$ with $j < k - 1$, what happened before in $C$ is ignored,
and what happens next depends on the nature of the disagreement that happened
leading to phase $j + 1$.
So assume that the disagreement is on the bit $(a_j)_n$, i.e., the decision on whether
$T = f_j((a_m)_{m < j})_n$ is well-founded or not.
For any potential guess $b \in (\Cantor)^{k-j-1}$ for $(a_m)_{m < k - j - 1}$, we can compute 
a (code for a) closed set $V_{b} \subseteq \Baire$ that corresponds to the
infinite paths in $T$ if $b$ is correct. We can extend that to codes
$V_{\tilde{b}}$ where $\tilde{b} \in (2^*)^{k-j-1}$ are finite prefixes of a
guess $b$, so that $V_b$ is the intersection of all the $V_{\tilde{b}}$ with
$\tilde{b}$ approximating $b$.
Then, the player who claimed that $T$ is ill-founded is asked in $C_{j+1}$
to also enumerate some $p \in \Baire$ that is supposed to be an infinite path in
$T$. If at any point it is noticed that $p \notin V_{\tilde{b}}$ where
$\tilde{b}$ is part of their guess for $(a_m)_{m < j}$, then $P_{j+1}$ loses the
whole game. Note that as long as both players are in phase $j + 1$, they agree
on their guesses for  $(a_m)_{m < k - j - 1}$.
\item In $C_0$, $\Even$ attempts to guess an infinite path $p$ of $g(a)$. Again
we can compute codes for closed sets $W_{\tilde{b}}$ for finite prefixes $\tilde{b}$
of potential guesses such that the intersection $W_a$ of all prefixes of $a$
contains exactly the infinite paths of $g(a)$. If at any point we notice that
$p \not\in W_{\tilde{b}}$ for the current guess $\tilde{b}$, $\Even$ loses the
game.
\end{itemize}
Now it is rather clear that $\Even$ has a winning strategy in $G$ which
consists in correctly guessing $a$ in $A$ and
possibly rising correctly to challenges in the $C_j$s.
Conversely, any winning strategy for $\Even$ needs to be guessing $a$ correctly;
otherwise, it could be defeated by $\Odd$ by having them challenge a wrong bit
$(\widetilde{a}_j)_n$ and then playing optimally.
\end{proof}
\begin{remark}
The previous two lemmas can be adapted to show that $\WS{\bfD^0_2} \equiv (\PiooCA)^{(\dagger)}$
where $(\dagger)$ is the countable ordinal sequential iteration of~\cite{paulycountableordinals}.
\end{remark}

\begin{corollary}
  \label{cor:WSdiffStrict}
For all $k > 0$, we have $\WS{(\bfS^0_1)_k^c} < \WS{(\bfS^0_1)_{k+1}^c} < \WS{\bfD^0_2}$
and $\WS{(\bfS^0_1)_{k}} < \WS{(\bfS^0_1)_{k + 2}} < \WS{\bfD^0_2}$.
\end{corollary}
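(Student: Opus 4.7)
The plan is to sandwich every $\WS{\mathbf{G}}$ appearing in the statement between two consecutive degrees of the finite $\starW$-iterates of $\PiooCA$, and then to invoke the strictness of that iteration hierarchy.

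First, I would combine \Cref{lem:WSopenDBelow} and \Cref{lem:WSopenDAbove} to obtain the equivalence
\[
\WS{(\bfS^0_1)^c_{k+1}} ~\equiv~ \C_{\Baire} \starW (\PiooCA)^{\star k}
\]
for every $k \ge 0$. Next I would record two elementary code transformations witnessing the pointclass inclusions $(\bfS^0_1)_k \subseteq (\bfS^0_1)^c_{k+1}$ and $(\bfS^0_1)^c_k \subseteq (\bfS^0_1)_{k+1}$: given an $(\bfS^0_1)_k$-code $s : \bbN^* \to \{1, \ldots, k+1\}$, the shifted map $s'(w) = s(w) + 1$ is an $(\bfS^0_1)_{k+1}$-code for the complement of the same set, and dually for the other direction. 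Both transformations are visibly computable and yield reductions, so we get the chains
\[
\WS{(\bfS^0_1)^c_k} ~\le~ \WS{(\bfS^0_1)_{k+1}} ~\le~ \WS{(\bfS^0_1)^c_{k+1}}
\quad\text{and}\quad
\WS{(\bfS^0_1)_k} ~\le~ \WS{(\bfS^0_1)^c_{k+1}} ~\le~ \WS{(\bfS^0_1)_{k+2}},
\]
the middle right-hand inequality in each chain also being bounded above using \Cref{lem:WSopenDBelow} and $\UC_\Baire \le \C_\Baire$.

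Then strictness of the required inequalities will follow if one knows that the degrees $\C_\Baire \starW (\PiooCA)^{\star k}$ form a strict hierarchy as $k$ increases, and that this hierarchy lies strictly below $\WS{\bfD^0_2}$. For the first claim I would appeal to the results on finite and transfinite $\starW$-iterations in \cite{paulycountableordinals,brattka2025loops}: the fact that $\PiooCA$ is not an $\starW$-idempotent implies by a standard diagonalization that the iterates form a strict chain, and the parallelizability of $\C_\Baire$ means that prefixing by $\C_\Baire$ does not collapse successive levels. For the second claim I would use the remark after \Cref{lem:WSopenDAbove} identifying $\WS{\bfD^0_2}$ with the countable-ordinal iterate $(\PiooCA)^{(\dagger)}$; each finite $\starW$-iterate embeds into the ordinal iterate and the latter is strictly higher in the Weihrauch lattice (again by \cite{paulycountableordinals}). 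Combined with the sandwich above, both chains in the corollary become strict.

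The routine steps are the pointclass inclusions and the invocation of the two lemmas. The genuine obstacle is the strictness of the finite $\PiooCA$-iterate hierarchy and its separation from $(\PiooCA)^{(\dagger)}$: these are non-trivial Weihrauch-theoretic facts that I would import wholesale from \cite{paulycountableordinals}, rather than reprove here, which is consistent with the paper's stated aim to ``limit ourselves to observing strictness'' using the links to iterations of $\PiooCA$.
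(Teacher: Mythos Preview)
Your sandwiching via \Cref{lem:WSopenDBelow}, \Cref{lem:WSopenDAbove} and the computable pointclass inclusions is exactly what the paper does, and the chains you write down are correct. The divergence is in the strictness step.

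The paper does not import strictness from general Weihrauch-theoretic iteration results; it argues concretely via the hyperjump hierarchy. Namely, $(\PiooCA)^{\star(k+1)}$ computes the $(k{+}1)$th Turing hyperjump of a computable input, whereas anything produced by $\C_\Baire \starW (\PiooCA)^{\star k}$ from a computable input lies in the hyperdegree of the $k$th hyperjump. Since the hyperjump hierarchy is strict, this gives $\C_\Baire \starW (\PiooCA)^{\star k} < (\PiooCA)^{\star(k+1)}$, and both chains in the corollary follow from that single inequality together with your sandwich.

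Your strictness argument, by contrast, has a genuine gap. The inference ``$\PiooCA$ is not $\starW$-idempotent, hence by a standard diagonalization the finite iterates form a strict chain'' is not valid in general: $P < P \starW P$ does not by itself imply $P^{\star k} < P^{\star(k+1)}$ for all $k$, and neither \cite{paulycountableordinals} nor \cite{brattka2025loops} states such a result for $\PiooCA$. The companion claim that parallelizability of $\C_\Baire$ prevents collapse when prefixing is likewise unsupported. Finally, you lean on the remark after \Cref{lem:WSopenDAbove} identifying $\WS{\bfD^0_2}$ with $(\PiooCA)^{(\dagger)}$, but that remark is only a sketch, so invoking it here is close to circular. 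The hyperjump argument is precisely the concrete computation that fills this hole; you should replace the black-box appeals with it.
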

\begin{proof}
This follows from the strictness of the inequality
\[
C_\Baire \star (\PiooCA)^{\star k} <
(\PiooCA)^{\star(k + 1)}\]
This inequality is strict because of the strictness of the Turing hyperjump (see~\cite[\S 16.6]{rogers1987}
for definitions):
$(\PiooCA)^{\star(k + 1)}$ can compute the $(k+1)$th Turing hyperjump from a computable
input, and
$C_\Baire \star (\PiooCA)^{\star k}$ cannot, as it can only compute elements of
the hyperdegree of the $k$th Turing hyperjump from a computable input.
\end{proof}

We conjecture that $\WS{(\bfS^0_1)_k} <
\WS{(\bfS^0_1)_{k+1}}$ and that $\WS{(\bfS^0_2)_k}$ gives a strictly
increasing chain of degrees\footnote{That the analogous game quantifiers are known to
form a strict hierarchy from the topological point of view~\cite{burgess1983classical2}
motivates the conjecture, although we should rather consider the complexity
of building strategies rather than just determining winning regions.}.

\end{document}